\documentclass[12pt]{article}

\usepackage{a4wide}
\usepackage{amsmath,amsfonts,
amssymb, amsthm,dsfont,color}
\usepackage{hyperref,a4wide}
\usepackage{csquotes}

\usepackage{mathabx}%fancy inequalities
\usepackage{dirtytalk}
\usepackage[active]{srcltx}

\usepackage{mathrsfs}
\usepackage{amscd,graphicx}

\usepackage{tikz}
\usepackage{pgfplots}
\pgfplotsset{compat=1.18}

\usepackage[backend=bibtex,
  bibencoding=utf8,
  style=bibstyle
  ]{biblatex}
\newtheorem{theorem}{Theorem}[section]
\newtheorem{definition}[theorem]{Definition}
\newtheorem{proposition}[theorem]{Proposition}
\newtheorem{corollary}[theorem]{Corollary}
\newtheorem{lemma}[theorem]{Lemma}

\theoremstyle{definition}
\newtheorem{remark}[theorem]{Remark}
\newtheorem{condition}[theorem]{Condition}

\def\Z{\mathbb{Z}}

\def\R{\mathbb{R}}

\def\vep{\varepsilon} 
\def\C{{\mathbb C}} 
\def\N{\mathbb{N}}

\def\O{\mathbf{O}}

\def\bb1{{\rm{1}\hspace{-3pt}\mathbf{l}}}
\def\Ie0{[-\varepsilon_0,\varepsilon_0]}

\def\Id{\mathbf{1}}

\def\Tr{{\rm Tr}}

\def\BC2{\mathbb{B}\big(\mathbb{C}^2\big)}

\def\beq{\begin{equation}}
\def\eeq{\end{equation}}

\numberwithin{equation}{section}

\title{%{\color{red}MHT suggestion: Almost invariant subspaces and magnetic perturbations for effect Hofstadter-like Hamiltonians 

Almost invariant subspaces in the topologically non-trivial setting: a pedestrian approach}

\author{H.~D. Cornean\footnote{Department of Mathematical Sciences, Aalborg University, Thomas Manns Vej 23, 9220 Aalborg, Denmark; e-mail: cornean@math.aau.dk}, G.~Peluso\footnote{Department of Mathematical Sciences, Aalborg University, Thomas Manns Vej 23, 9220 Aalborg, Denmark; e-mail: gpe@math.aau.dk}, and M.~H. Thorn\footnote{Department of Mathematical Sciences, Aalborg University, Thomas Manns Vej 23, 9220 Aalborg, Denmark; e-mail: mikkelht@math.aau.dk}}

\begin{document}

\maketitle

\begin{abstract}
    We present a self-contained, algorithmic construction of almost invariant subspaces \cite{PanatiSpohnTeufel2003, DeNittisPanati2010, FreundTeufel2016} for Hofstadter-like Hamiltonians $H_b $ acting on $\ell^2(\Z^2,\C^n)$. More precisely, for rational values of $b_0/(2\pi)$ we describe an iterative procedure that takes a ``locally gapped" projection $P_{0}$ (possibly topologically non-trivial) which commutes with $H_{b_0}$, into a projection $P_{\varepsilon}$ that commutes with $H_{b_0+\varepsilon}$ up to an error of order $\varepsilon^\infty$. 
    
    Our construction insures that the Chern number of $P_{\varepsilon}$ equals the Chern number of $P_0$ as long as $P_\vep$ exists. Moreover, its integrated density of states varies linearly with $\vep$ (the ``St\v reda formula"). We also show that any spatially localized Parseval frame of $P_{0}$ can be mapped into to a localized Parseval frame of $P_{\varepsilon}$. 

    We choose to work with a discrete model in order to keep the technicalities to a minimum level, with no need for advanced pseudo-differential calculus. Nevertheless, we are confident  that most of our results can be extended to (unbounded) magnetic (pseudo)differential operators, but the technical burden would be much heavier. This is left for future work.
\end{abstract}

\tableofcontents

\section{Introduction}

Two-dimensional periodic quantum systems subject to a perpendicular magnetic field provide a setting in which spectral theory, transport, and topology are deeply intertwined. At the physical level, the integer quantum Hall effect reveals the remarkable stability of the transverse conductivity under perturbations of the underlying Hamiltonian. At the mathematical level, this stability is encoded by an integer-valued topological invariant associated with the occupied states, namely the first Chern number/character of the Fermi projection. The same invariant enters several apparently different descriptions of the system: it determines the Hall conductivity, controls the topological obstruction to the existence of localized Wannier bases, and equals up to a universal constant the variation slope of the integrated density of states with respect to the magnetic field through the Středa formula \cite{TKNN1982, Streda1982, BellissardVanElstSchulzBaldes1994, CorneanMonacoMoscolari2021}.

The spectral analysis of magnetic periodic operators presents difficulties that cannot be treated with analytic perturbation theory. A perturbation coming from a constant magnetic field is typically not relatively bounded to the unperturbed operator, not even in the discrete setting. Moreover, the Hamiltonian admits a Bloch-Floquet decomposition only for rational values of the magnetic flux through a fundamental cell, making the irrational fluxes harder to study. This issue is especially relevant for  Hofstadter operators, where the spectrum may exhibit an intricate dependence on the magnetic flux \cite{Hofstadter1976}.

The first major breakthrough in treating small enough magnetic perturbations in the continuum case \cite{Nenciu1991} dealt with the case of an unperturbed simple Bloch band which is separated from the rest of the spectrum by ``genuine" gaps, which means that its range does not overlap with the range of other bands. Here, the associated Riesz projection $P_0$ is a proper spectral projector of the unperturbed operator $H_0$. It turns out that the spectral gaps \cite{Nenciu2002, CorneanHerbstNenciu2016} are stable when the external magnetic perturbation is small enough and one can still define a spectral Riesz projection $P_\varepsilon$ which commutes with the perturbed magnetic Schr\"odinger operator $H_\varepsilon$.  Under the assumption that the range of $P_0$ admits a basis of strongly localized Wannier functions, then one can also construct a strongly localized basis for $P_\varepsilon$ indexed by the periodicity lattice $\Gamma$ of $P_0$ and one can define a Hofstadter-like effective matrix acting on $\ell^2(\Gamma)$ which is unitarily equivalent with the original magnetic band operator $H_\varepsilon P_\varepsilon$ \cite{Nenciu1991, CorneanHerbstNenciu2016}. 

The same procedure can be extended to a group of possibly overlapping/crossing Bloch eigenvalues whose union of ranges still remains isolated from the others, by first constructing composite Wannier functions \cite{Panati2007, FiorenzaMonacoPanati2016, MonacoPanatiPisanteTeufel2018} when they exist, then add the small magnetic perturbation as in \cite{CorneanHerbstNenciu2016, CorneanMonacoMoscolari2019}. 

A technically more challenging problem is the one in which the group of eigenvalues we are interested in remains isolated from the rest at each quasi-momentum, but their union of ranges overlaps with the ranges of the other eigenvalues. This is the so-called ``locally gapped case". Here, $P_{0}$ is still an orthogonal projection which commutes with $H_0$,  but it is no longer a spectral projection for $H_{0}$. In this case the relevant question is whether one can construct an \emph{almost-invariant subspace} that retains the properties of an invariant subspace up to correction terms that decay with the intensity of the magnetic field \cite{PanatiSpohnTeufel2003, DeNittisPanati2010, FreundTeufel2016}.

In our paper, we provide an algorithmic and rather elementary procedure to construct such invariant subspaces for a certain family of discrete Hamiltonians $H_{b_0+\varepsilon} $ acting on $\ell^2(\Z^2,\C^n)$ when $b_0/(2\pi)$ is a rational number.
We describe an iterative procedure that takes a ``locally gapped" projection $P_{0}$ which commutes with $H_{b_0}$, into a projection $P_{\varepsilon}$ that commutes with $H_{b_0+\varepsilon}$ up to an error of order $\varepsilon^\infty$. Our construction insures that the Chern number of $P_{\varepsilon}$ is constant and equal (up to a universal constant)  to the derivative with respect to $\varepsilon$ of its integrated density of states.

We remind the reader that if the Chern number of $P_0$ is different from zero, then $P_0$ does not admit an exponentially localized Wannier basis, while a localized Parseval frame always exists \cite{CorneanMonacoMoscolari2019,AucklyKuchment2018}. An important consequence of our method is that any exponentially  localized Parseval frame of $P_{0}$ can be mapped by a canonical procedure into an exponentially localized Parseval frame of $P_{\varepsilon}$, without asking for a ``strict spectral gap" like in \cite{CorneanMonacoMoscolari2019}.

%We remark that the discrete setting does not originate from a discretization of a continuum model as $n$ represents the number of vectors that generate the Wannier basis with their translations. So $H_\varepsilon $ are unitarily equivalent to the restrictions of the perturbed Hamiltonians to the selected Bloch band. 

%\textcolor{red}{This part should be revised by CHIARISSIMO Professor Cornean} In full generality the starting effective Hamiltonian would be $H_{0,\varepsilon}$ with a smooth dependence on $\varepsilon$. However, any properties on this more general case are easily deducible form the contents of this paper.

\subsection{Setting and the main results}

Let us consider the space of functions $f\colon\Z^2 \to \C^n $ and denote the elements of $\Z^2$ by the letter $\gamma$ and the elements of $\C^n$ by the letter $v$. 
Consider the scalar product $$\left\langle g,f \right\rangle = \sum_{\gamma \in \Z^2}  \left\langle g(\gamma )  | f(\gamma)\right\rangle_{\C^n}.$$ Then, the functions with finite norm compose the Hilbert space $\mathcal{H}=\ell^2( \Z^{2}, \C^n)$.
If $\{e_a\}_{a\in \{1,\dots ,n\}}$ is the canonical basis of $\C^n$, then the following is a basis of $\mathcal{H}$:
\begin{equation*}\label{eq:basis_of_H}\{\delta_{\gamma,a}\}_{\gamma\in\Z^2 ,a\in \{1,\dots ,n\}} \subset \mathcal{H} \quad \mid  \quad \delta_{\gamma,a}  ( \gamma')=\begin{cases}
    e_a & \mbox{if }\gamma = \gamma', \\ 0 & \mbox{otherwise.}
\end{cases}\end{equation*}
Any operator $A\colon\mathcal{H}\to \mathcal{H}$ is uniquely determined by the set of \emph{hopping matrices} $A(\gamma,\gamma') \in M(n\times n, \C)$ with \emph{hopping amplitudes} $$[A(\gamma,\gamma')]_{a,b} = \left\langle \delta_{\gamma,a}|A(\delta_{\gamma ',b} )\right\rangle, \quad \forall\gamma,\gamma'\in\Z^2,\quad a,b\in\{1,\dots , n\}. $$
They satisfy $$A(f)(\gamma)=\sum_{\gamma'\in\Z^2}A(\gamma,\gamma')f(\gamma'), \quad \forall \gamma\in\Z^2.$$
We denote by $\mathcal{B}(\mathcal{H})$ the set of bounded operators  with the operator norm topology. We are also 
interested in operators with hopping matrices that are strongly localized near the diagonal:

\begin{definition}[Locality] \label{def:localization}
    
    We say that an operator $A\colon\mathcal{H}\to \mathcal{H}$ is exponentially localized near the diagonal if there exist two constants $w,C>0$ such that its hoppings obey the constraint $$ \| A(\gamma,\gamma') \| \le C e^{-w |\gamma-\gamma'|}. $$
    Throughout this paper we refer to these operators simply as \emph{local}, we call $w$ the \emph{exponential decay rate}, and denote the set of local operators as $\mathcal{B}^{\exp}(\mathcal{H})$.
\end{definition}

\begin{definition}[Exponential asymptotic approximations]
   A family of local operators $A_\varepsilon \in \mathcal{B}^{\exp}(\mathcal{H})$ will be denoted by  $A_\varepsilon=\mathcal{O}^{\exp}(\varepsilon^k)$ if there exist two constants $w,C>0$, both  independent of $\vep$, such that, for $\vep$ in a neighborhood of zero, the hopping matrices obey the estimate $$\|A_\varepsilon(\gamma,\gamma')\|<C|\varepsilon|^k e^{-w|\gamma-\gamma'|}.$$ 
   
\end{definition}

In Appendix \ref{sec:exp_localization} we prove that sums, compositions, resolvents, and functional calculus are all closed operations in $\mathcal{B}^{\exp}(\mathcal{H})$.
However, there is no closure with respect to convergence using the operator norm, despite having $\mathcal{B}^{\exp}(\mathcal{H})\subset \mathcal{B}(\mathcal{H})$. 
%We also prove that the symbol $\mathcal{O}^{\exp}$ satisfies the usual properties such as $\mathcal{O}^{\exp}(\varepsilon^k) \mathcal{O}^{\exp}(\varepsilon^l)=\mathcal{O}^{\exp}(\varepsilon^{k+l})$.

Now we describe how the magnetic field is introduced.
\begin{definition}[Peierls phase]
    {We call Peierls phase the following} bilinear and skew-symmetric function $\varphi\colon\Z^2\times\Z^2 \to \R$ such that $\varphi(\gamma,\gamma')=\frac{1}{2}\big( \gamma_2\gamma_1'- \gamma_1\gamma_2'\big).$
\end{definition}

\begin{definition}[Magnetic translations]
    Given $\varepsilon\in\R$ and $\gamma''\in\Z^2$, define the unitary operator $\tau_{\varepsilon,\gamma''}\colon\mathcal{H}\rightarrow\mathcal{H}$ by $(\tau_{\vep,\gamma''}f)(\gamma)=e^{i\vep\varphi(\gamma,\gamma'')}f(\gamma-\gamma'').$
    %\quad \Leftrightarrow\quad \tau_{\vep,\gamma''}(\gamma,\gamma')= \begin{cases}
       % e^{i\vep\varphi(\gamma,\gamma'')}\Id &\mbox{if }\gamma'=\gamma-\gamma'',\\0&\mbox{otherwise.}  \end{cases}
       
    Given $q\in\N\setminus\{0\}$, an operator $A$ is called $(\varepsilon,q)$-periodic if $$\left[A,\tau_{\varepsilon,(q\alpha_1,\alpha_2)}\right]=0, \quad \forall(\alpha_1,\alpha_2)=\alpha\in\Z^2.$$
\end{definition}

\begin{definition}[The ``discrete magnetic quantization" $\O_\varepsilon$]\label{def:Magnetization}
    For any local operator $A\in \mathcal{B}^{\exp}(\mathcal{H})$ and any $\varepsilon\in\mathbb{R}$, we define $\O_\varepsilon (A)$ through its hopping matrices: $$\O_\varepsilon(A)(\gamma,\gamma') = e^{i\varepsilon \varphi (\gamma ,\gamma')} A(\gamma,\gamma').$$
\end{definition}

The previously introduced objects have several properties which will be proved at the beginning of Section \ref{sec:magnetic_calculus}.

\begin{lemma}\label{lem:properties_O_epsilon}
    For any $\varepsilon,\eta\in\R, q\in \N\setminus \{0\}$ and $A\in     \mathcal{B}^{\exp}(\mathcal{H}) $, we have that:
    \begin{itemize}
        \item %$\O_\varepsilon$ preserves locality with identical exponential weights, so 
        $A\mapsto \O_\varepsilon(A)$ is linear, $\O_\varepsilon(    \mathcal{B}^{\exp}(\mathcal{H})) \subset \mathcal{B}^{\exp}(\mathcal{H})$
        %\item $\O_\varepsilon$ is \emph{linear}: $\O_\varepsilon(\lambda A)=\lambda\O_\varepsilon(A),\quad \O_\varepsilon(A_1+A_2)=\O_\varepsilon(A_1)+\O_\varepsilon(A_2). $  
        and %$\O$ respects the group structure of $\R$: 
        $\O_\eta(\O_\varepsilon(A)) =\O_{\eta+\varepsilon}(A).$
        \item $A$ is self-adjoint if and only if $\O_\varepsilon(A)$ is self-adjoint.        
        \item $A$ is $(\varepsilon,q)$-periodic if and only if $\O_\eta(A)$ is $(\varepsilon+\eta,q)$-periodic.
        \end{itemize}
\end{lemma}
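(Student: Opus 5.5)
The plan is to work entirely at the level of hopping matrices, since $\O_\varepsilon$ is defined entrywise and every property in Lemma \ref{lem:properties_O_epsilon} reduces to an identity for the scalar phase $e^{i\varepsilon\varphi(\gamma,\gamma')}$.

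\emph{First item.} Linearity is immediate, because multiplying each $A(\gamma,\gamma')$ by a fixed scalar is linear in $A$. Locality is preserved with the same constants: $|e^{i\varepsilon\varphi(\gamma,\gamma')}|=1$, so $\|\O_\varepsilon(A)(\gamma,\gamma')\|=\|A(\gamma,\gamma')\|\le Ce^{-w|\gamma-\gamma'|}$. A kernel with this decay defines a bounded operator by the Schur test, since $\sup_\gamma\sum_{\gamma'}Ce^{-w|\gamma-\gamma'|}<\infty$. Hence $\O_\varepsilon(A)\in\mathcal{B}^{\exp}(\mathcal{H})$. The group property follows from $e^{i\eta\varphi}e^{i\varepsilon\varphi}=e^{i(\eta+\varepsilon)\varphi}$.

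\emph{Second item.} Self-adjointness of a bounded operator is equivalent to $A(\gamma,\gamma')=A(\gamma',\gamma)^*$ for all $\gamma,\gamma'$. Skew-symmetry $\varphi(\gamma',\gamma)=-\varphi(\gamma,\gamma')$ and reality of $\varphi$ give
\[
\O_\varepsilon(A)(\gamma',\gamma)^*=e^{-i\varepsilon\varphi(\gamma',\gamma)}A(\gamma',\gamma)^*=e^{i\varepsilon\varphi(\gamma,\gamma')}A(\gamma',\gamma)^*.
\]
So $A$ satisfies the kernel condition if and only if $\O_\varepsilon(A)$ does. The converse direction can also be obtained from $A=\O_{-\varepsilon}(\O_\varepsilon(A))$.

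\emph{Third item.} This is the only step needing a computation.
\begin{itemize}
\item[(a)] The kernel of $\tau_{\varepsilon,\beta}$ is $e^{i\varepsilon\varphi(\gamma,\beta)}\delta_{\gamma',\gamma-\beta}\Id$. Hence $[A,\tau_{\varepsilon,\beta}]=0$ is equivalent to the covariance relation
\[
A(\gamma+\beta,\gamma'+\beta)=e^{i\varepsilon(\varphi(\gamma+\beta,\beta)-\varphi(\gamma'+\beta,\beta))}A(\gamma,\gamma')=e^{i\varepsilon\varphi(\gamma-\gamma',\beta)}A(\gamma,\gamma').
\]
The last equality uses bilinearity and $\varphi(\beta,\beta)=0$.
\item[(b)] Bilinearity and skew-symmetry give
\[
\varphi(\gamma+\beta,\gamma'+\beta)=\varphi(\gamma,\gamma')+\varphi(\gamma,\beta)+\varphi(\beta,\gamma')=\varphi(\gamma,\gamma')+\varphi(\gamma-\gamma',\beta).
\]
\item[(c)] Combining (a) and (b), if $A$ satisfies the relation with parameter $\varepsilon$, then $\O_\eta(A)$ satisfies it with parameter $\varepsilon+\eta$. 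Applying this to $\O_{-\eta}$ gives the converse.
\item[(d)] Specializing to $\beta=(q\alpha_1,\alpha_2)$ gives the equivalence of $(\varepsilon,q)$-periodicity of $A$ and $(\varepsilon+\eta,q)$-periodicity of $\O_\eta(A)$.
\end{itemize}

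The main obstacle is bookkeeping the sign conventions in the kernel of $\tau$ and the phase identity in (a) and (b). No analytic difficulty is expected.
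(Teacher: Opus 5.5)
Your proof is correct and follows the same route as the paper. Like the paper, you rewrite $[A,\tau_{\varepsilon,\beta}]=0$ as a covariance identity for the hopping matrices and then use bilinearity and skew-symmetry of $\varphi$ to show that the extra phase $e^{i\eta\varphi(\gamma,\gamma')}$ moves the parameter from $\varepsilon$ to $\varepsilon+\eta$; your handling of the first two items (unimodular phase plus the Schur test, and the kernel characterization of self-adjointness) is exactly what the paper dismisses as trivial.
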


%Using this language, a local Hamiltonian that commutes with all canonical translations is $(0,1)$-periodic, and its perturbed Hofstadter-like Hamiltonian  $H_\varepsilon=\O_\varepsilon(H_0)$ is local and $(\varepsilon,1)$-periodic.
In this language, if $b_0=\frac{2\pi p}{q}$ then $H_{b_0+\varepsilon}=\O_\varepsilon(H_{b_0})$, which transforms a $(b_0,1)$-periodic Hamiltonian into a $(b_0+\varepsilon,1)$-periodic perturbed one.
Moreover, being $(\varepsilon,1)$-periodic implies being $(\varepsilon,q)$-periodic for all $q\in \N\setminus\{0\}$.

Now we can formulate the main result of our paper. 
\begin{theorem}\label{thm:main_theorem}
    Consider a $(0,1)$-periodic self-adjoint operator $H_0\in \mathcal{B}^{\exp}(\mathcal{H})$ and set $H_b=\O_b(H_0)$ with $b\in \R$.
    Fix $b_0=\frac{2\pi p}{q}$ with any two co-prime integers $p,q$ with $q>0$ and suppose that $H_{b_0}$ either has a local spectral island or a strict spectral island (the definitions of these are given in Appendix \ref{sec:fiber_decomposition}). Then:
    
    \noindent{\rm (i).} There exists a local, orthogonal  projection $P_0 \in \mathcal{B}^{\exp}(\mathcal{H})$ associated with the (local) spectral island. It is $(b_0,q)$-periodic and commutes with $H_{b_0}$. It is not a spectral projection of $H_{b_0}$ unless we assume the strict gap condition.
    
    \noindent{\rm (ii).} There exists $\vep_1>0$ such that for all $0<|\vep|<\vep_1$ one can construct an orthogonal  projection $P_\vep=P_\vep^2=P_\vep^*\in  \mathcal{B}^{\exp}(\mathcal{H})$ which is $(b_0+\varepsilon,q)$-periodic and with 
   $$\lim_{\vep\rightarrow 0}\|P_\vep-\O_\vep(P_0)\|= 0 ,\quad \mbox{and} \quad \lim_{\vep\rightarrow 0}\Big (|\vep|^{-l}\Vert [H_{b_0+\vep},P_\vep]\Vert \Big ) =0, \quad \forall l \in \N .$$
   %To be more specific, $\varepsilon_1$ is a value for which $\|\O_\varepsilon^2(P_0)-\O_\varepsilon(P_0)\|<1/8$ for all $|\varepsilon|< \varepsilon_1$.
The exponential decay rate of $P_\vep$ may become weaker and weaker when $\vep\to 0$ in the case of a local spectral island.
   
   \noindent{\rm (iii).} Under the strict gap condition, $P_\vep$ is a spectral Riesz projection of $H_{b_0+\vep}$ (so they commute).  
   
   \noindent{\rm (iv).} Let $X_j$ be the multiplication operator with $\gamma_j$. The Chern number of $P_\vep$ given by 
   $${\rm Ch}(P_\vep)=2\pi i\lim_{L\to\infty} \frac{1}{(2L)^2} \sum_{\gamma\in [-L,L]^2} \Tr \Big (P_\vep \big [\, [X_1,P_\vep],[X_2,P_\vep]\big ]\Big )(\gamma,\gamma) $$
   is an integer independent of $\vep$ and equals ${\rm Ch}(P_0)$.

   \noindent{\rm (v).} The ``St\v{r}eda formula" \cite{CorneanMonacoMoscolari2021} holds, which means that the integrated density of states
   $$\mathcal{I}(P_\varepsilon)=\lim_{L\to \infty} \frac{1}{(2L)^2} \sum_{\gamma\in [-L,L]^2} \Tr (P_\varepsilon(\gamma,\gamma))$$
   varies linearly in $\vep$ and %up to a finite number of points, with a slope given by: 
   %$$2\pi\frac{d}{d\vep} \mathcal{I}(P_\varepsilon)= {\rm Ch}(P_\vep)\equiv{\rm Ch}(P_0).$$
   $$\mathcal{I}(P_\varepsilon)=\mathcal{I}(P_0)+\frac{{\rm Ch}(P_0) }{2\pi}\, \varepsilon.$$

   %\noindent{\rm (vi).} If ${\rm Ch}(P_0)\neq 0$, then $\lim_{\eta\to\vep}\Vert P_{\eta}-P_{\varepsilon}\Vert =1$. Thus $P_\vep$ is nowhere continuous in $\vep$ with respect to the operator norm topology. 

   \noindent{\rm (vi).} If there exists a set $\{\Psi_j\}_{j\in\{1,\dots ,m \}} \subset \mathcal{H}$ of exponentially localized vectors such that the collection $\{\tau_{b_0,\gamma}\Psi_j\}^{\gamma \in q\Z\times \Z}_{j\in\{1,\dots ,m \}}$ forms a Parseval frame of $P_0$, see \eqref{oct15}, then for $\varepsilon$ near zero there exists a family of self-adjoint operators $B_\varepsilon \in \mathcal{B}^{\exp}(\mathcal{H})$ and one of unitary operators $U_\varepsilon \in \mathcal{B}^{\exp}(\mathcal{H})$ such that the collection $\tau_{b_0+\varepsilon,\gamma}U_\varepsilon B_\varepsilon\Psi_j$ forms an exponentially localized Parseval frame of $P_\varepsilon$ for $j\in\{1,\dots ,m \}$ and $\gamma\in q\Z\times \Z$.
\end{theorem}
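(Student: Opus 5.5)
The proof goes item by item, built around an iterative Nenciu/Panati–Spohn–Teufel-type correction scheme carried out within the magnetic calculus of Section \ref{sec:magnetic_calculus}.

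\textbf{Item (i).} We use the Bloch–Floquet fiber decomposition of Appendix \ref{sec:fiber_decomposition}. The operator $H_{b_0}$ is $(b_0,q)$-periodic, so it fibers over a torus as a smooth family of $nq\times nq$ matrices $\hat H(k)$. A local spectral island means that a group of fiber eigenvalues stays isolated at each $k$. For each $k$ we can then write the Riesz projection $\hat P(k)=\frac{i}{2\pi}\oint_{\mathcal{C}_k}(\hat H(k)-z)^{-1}dz$. The contour can be chosen locally constant in $k$, and the resulting family is real-analytic. Its inverse transform $P_0$ therefore has exponentially decaying kernel, is $(b_0,q)$-periodic, and commutes with $H_{b_0}$. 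Under a strict gap a single contour works for all $k$, and $P_0$ becomes the Riesz projection of $H_{b_0}$.

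\textbf{Item (ii).} The starting point is $\O_\vep(P_0)$. The phase-composition identity $\O_\vep(A)\O_\vep(B)=\O_\vep(AB)+\mathcal{O}^{\exp}(\vep)$ follows from Lemma \ref{lem:properties_O_epsilon} together with $|\varphi(\gamma,\gamma'')+\varphi(\gamma'',\gamma')-\varphi(\gamma,\gamma')|\le C|\gamma-\gamma''||\gamma''-\gamma'|$. It gives
\[
\O_\vep(P_0)^2-\O_\vep(P_0)=\mathcal{O}^{\exp}(\vep),\qquad [H_{b_0+\vep},\O_\vep(P_0)]=\mathcal{O}^{\exp}(\vep).
\]
We then correct order by order. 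Suppose $\Pi_k$ is $(b_0+\vep,q)$-periodic, local, self-adjoint, and satisfies $\Pi_k^2-\Pi_k=\mathcal{O}^{\exp}(\vep^{k+1})$ and $[H,\Pi_k]=\mathcal{O}^{\exp}(\vep^{k+1})$. Set $\Pi_{k+1}=\Pi_k+\vep^{k+1}\O_\vep(A_{k+1})$. The unmagnetized corrector $A_{k+1}$ is $(b_0,q)$-periodic and is obtained fiberwise from two standard equations:
\begin{itemize}
\item the diagonal part $P_0 A P_0-(1-P_0)A(1-P_0)=-\,$(leading term of $\Pi_k^2-\Pi_k$), which restores idempotency;
\item the off-diagonal part, which solves $[H_{b_0},X]=-\,$(leading commutator error) through the reduced resolvent $\frac{1}{2\pi i}\oint_{\mathcal{C}_k}(\hat H-z)^{-1}Y(\hat H-z)^{-1}dz$.
\end{itemize}
The second equation is solvable because the leading error is off-diagonal with respect to $P_0$, and because the spectra of $\hat H(k)$ on $\hat P(k)$ and on its complement are disjoint. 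Since $\Pi_k$ only approximately satisfies the identities, we finish with $P_\vep=\frac{i}{2\pi}\oint_{|z-1|=1/2}(\Pi_N-z)^{-1}dz$. This is an exact local projection, self-adjoint and $(b_0+\vep,q)$-periodic, and it lies within $\mathcal{O}(\vep^{N+1})$ of $\Pi_N$. To get $\vep^\infty$ commutation, we take $N=N(\vep)$ growing as $\vep\to0$, with a Borel-type cutoff. This is the \textbf{main obstacle}: the constants in the $\mathcal{O}^{\exp}(\vep^{k})$ estimates must be controlled uniformly while each iteration sacrifices some exponential decay rate. Allowing that decay loss is what produces the weakening decay rate stated for the local-island case. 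For $\vep_1$, we need $\|\O_\vep(P_0)^2-\O_\vep(P_0)\|<1/8$.

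\textbf{Item (iii).} Under a strict gap, gap stability for $\O_\vep(H_{b_0})$ holds, via the magnetic resolvent approximation $\O_\vep((H_{b_0}-z)^{-1})$. The Riesz projection of $H_{b_0+\vep}$ satisfies all the requirements above, so the construction can simply take $P_\vep$ to be this projection.

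\textbf{Items (iv) and (v).} The Chern number is an integer by the index formula for periodic local projections. Independence of $\vep$ has to be argued carefully, since $P_\vep$ need not be norm-continuous across different fluxes. We argue as follows. The function ${\rm Ch}(P_\vep)$ is continuous in $\vep$, because of the explicit magnetic formula and locality. Equivalently, it is invariant under the small deformation $P_\vep\approx\O_\vep(P_0)$, computed via the trace per unit area. An integer-valued continuous function is constant. For St\v{r}eda, we differentiate $\mathcal{I}(P_\vep)$. The derivative of $\O_\vep$ introduces $\varphi(\gamma,\gamma')$, which is expressible through $[X_j,\cdot]$. Using $P\dot P P=0$ and the near-commutation, the trace-per-volume derivative reduces exactly to $\frac{1}{2\pi}{\rm Ch}(P_\vep)$, following \cite{CorneanMonacoMoscolari2021}. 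Constancy of the Chern number then gives the linear formula.

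\textbf{Item (vi).} Let $S$ be the frame operator of $\{\tau_{b_0+\vep,\gamma}\O_\vep(P_0)\Psi_j\}$. It is close to $\O_\vep(P_0)$, so $B_\vep$ is a local Löwdin-type correction, essentially $S^{-1/2}$ on the range, obtained by functional calculus. For $U_\vep$ we take the Kato–Nagy unitary intertwining the nearby projections $\O_\vep(P_0)$-corrected and $P_\vep$. Both operators commute with the magnetic translations and are local by the Appendix results, so translates of $U_\vep B_\vep\Psi_j$ form a Parseval frame of $P_\vep$.
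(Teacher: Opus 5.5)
\textbf{Main gap: items (iv) and (v).} You run both arguments on the final family $P_\vep$, the Riesz projection of $\Pi_{N(\vep)}$ with $N(\vep)\to\infty$ as $\vep\to 0$. Along this family the construction index jumps, and, as you note yourself, the exponential decay rate degenerates as $\vep\to 0$. So the claim that ${\rm Ch}(P_\vep)$ is continuous ``by the explicit magnetic formula and locality'' cannot be proved for this family. It fails worst as $\vep\to 0$, which is the only place where ${\rm Ch}(P_\vep)$ gets tied to ${\rm Ch}(P_0)$. Norm-closeness $P_\vep\approx\O_\vep(P_0)$ does not rescue this. The Chern formula involves $[X_j,P](\gamma,\gamma')=(\gamma_j-\gamma_j')P(\gamma,\gamma')$, which the operator norm does not control, and $\O_\vep(P_0)$ is not even a projection.

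The missing idea is to work with a \emph{fixed-order} family that is uniformly local on a whole interval around $0$ and reduces to $P_0$ at $\vep=0$. In the paper this is $P_{k,\vep}$ for $|\vep|<\vep_k$; in your scheme it would be the Riesz projection $\widehat P_{N,\vep}$ of $\Pi_{N,\vep}=\O_\vep(\sum_{j\le N}\vep^jA_j)$ for fixed $N$. For such a family one proves $\widehat P_{N,\vep+\eta}=\O_\eta(\widehat P_{N,\vep})+\mathcal{O}^{\exp}(\eta)$, as in Lemma \ref{lem:Order_P_PI_in_epsilon}. In your setting this follows from $\Pi_{N,\vep+\eta}-\O_\eta(\Pi_{N,\vep})=\O_{\vep+\eta}(\sum_j[(\vep+\eta)^j-\vep^j]A_j)$ plus a resolvent argument. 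Combining it with the reduction \eqref{oct6} to the entries at $((j,0),(j,0))$ and the identity $\O_\eta(A)(\gamma,\gamma)=A(\gamma,\gamma)$ gives continuity, hence constancy by integrality. The result passes to $P_\vep$ only because $P_\vep$ coincides with a member of such a family. The same reduction is needed for (v), since $\vep\mapsto\mathcal{I}(P_\vep)$ cannot be differentiated directly.

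\textbf{The St\v{r}eda mechanism.} Your argument for the St\v{r}eda formula does not work as stated. Near-commutation with $H_{b_0+\vep}$ plays no role; at best it could give an identity up to $\mathcal{O}(\vep^\infty)$, not an exact linear law. There is also no bounded derivative $\dot P$: differentiating the phase produces the unbounded piece $\tfrac{i}{2}(X_2PX_1-X_1PX_2)$, so you cannot naively invoke cyclicity of the trace per unit area. The paper never differentiates a projection. It lets $Q_\eta$ be the Riesz projection of $\O_\eta(P_{k,\vep})$ and notes that $Q_\eta$ and $P_{k,\vep+\eta}$ are norm-close projections at the same flux. They are therefore conjugate by a local Kato--Nagy unitary, so $\mathcal{I}(P_{k,\vep+\eta})=\mathcal{I}(Q_\eta)$ by Lemma \ref{lem:IDS_equal_for-un_eq_projections}. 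It then reads off from the second-order expansion \eqref{oct13} and Lemma \ref{lemmahc2} that $Q_\eta(\gamma,\gamma)=P_{k,\vep}(\gamma,\gamma)+i\eta\,(P_{k,\vep}[[X_1,P_{k,\vep}],[X_2,P_{k,\vep}]])(\gamma,\gamma)+\mathcal{O}(\eta^2)$.

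\textbf{Smaller points.} Your item (ii) is a legitimate alternative to the paper's construction. You correct idempotency and commutation order by order and project once at the end. The paper instead re-projects at every step, so only the commutator needs fixing and the off-diagonality of $W_k$ is automatic from $P[H,P]P=0$. However, your claim that the leading commutator error of $\Pi_k$ is off-diagonal is false in general. Write $\Pi_k^2-\Pi_k=\vep^{k+1}\O_\vep(G)+\dots$ and $[H_{b_0+\vep},\Pi_k]=\vep^{k+1}\O_\vep(F)+\dots$. Then $[H_{b_0},G]=P_0FP_0-(1-P_0)F(1-P_0)$, so the diagonal part of $F$ is generally nonzero and is cancelled only by the diagonal corrector. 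Likewise, $[P_0,G]=0$, which follows from $[\Pi_k,\Pi_k^2-\Pi_k]=0$, is what makes the idempotency equation solvable. The Borel cutoff needs no uniform control of the constants; that is its whole point. Finally, in (vi) you should use $\Psi_j$ rather than $\O_\vep(P_0)\Psi_j$; otherwise the resulting operator $B_\vep\O_\vep(P_0)$ is not self-adjoint.
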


\subsection{The structure of the paper}

For clarity purposes, we prove each point of the main theorem in a dedicated section.
Point \rm{(i)} is not original and is solved using standard techniques, so we dedicate Appendix \ref{sec:fiber_decomposition} to it (Lemma \ref{lem:P_0_local}). There, we also explain in detail the difference between a local spectral island and a strict spectral island. 
Since this paper focuses on local operators, we summarize their properties in Appendix \ref{sec:exp_localization}.

In Section \ref{sec:magnetic_calculus} we analyze the properties of the ``discrete magnetic quantization" $\O_\varepsilon$ that will be repeatedly used throughout the rest of the paper.

Section \ref{sec:P_epsilon_construction} is the most prominent one as it contains the proofs of points \rm{(ii)} (Proposition \ref{pro:construction_P_epsilon}) and \rm{(iii)} (Lemma \ref{lem:P_epsilon_strictly_gapped}). The construction procedure permits us to explicitly build up  $P_\varepsilon$ in a rather down-to-earth manner. 

Section \ref{sec:Chern_numbers} deals with the proof of point \rm{(iv)} of the theorem (Proposition \ref{pro:Ch_constant}). The challenging issue here is to prove continuity in $\vep$, while the constancy of the Chern number immediately follows by the fact that it is integer-valued. 

Section \ref{sec:Streda_formula} contains the proof of the St\v{r}eda formula of point \rm{(v)} (Proposition \ref{pro:derivative_of_IDS_of_P_epsilon}).
%and the proof of point \rm{(vi)} (Corollary \ref{cor:P_epsilon_discontinuous}).

Section \ref{sec:Parseval_frames} contains the proof of point \rm{(vi)}, which is the construction of an exponentially localized Parseval frame for $P_\vep$.  
%Once again, the proof contains an explicit procedure to construct $B_\varepsilon $ and $Q_\varepsilon$, implicitly stating that a Parseval frame for $P_\varepsilon$ can be computed algorithmically. 

%{Lastly, in Appendix \ref{sec:rescaling} we sketch another way of approaching the problem of perturbations near rational fluxes: instead of using $(\varepsilon,q)$-periodicity one can rescale the unit cell of the lattice using unitary operators.}

\subsection{Comparison with the existing literature}

Several authors have had major contributions during the the past few decades. We have already mentioned \cite{Nenciu1991, CorneanHerbstNenciu2016} concerning the strictly gapped case, which also hold in the continuum case.

Also in the continuum regime and with a local spectral island,  \cite{PanatiSpohnTeufel2003} is probably the first work which gives a description of the effective dynamics and constructs almost invariant subspaces near $\varepsilon =0$ for magnetic fields coming from slowly varying smooth and bounded magnetic vector potentials. The paper \cite{FreundTeufel2016} allows constant magnetic field variations near any $b_0=\frac{2\pi p}{q}$, and it is based on a rather sophisticated pseudo-differential calculus, especially when the existence of an initial strongly localized Wannier basis is not guaranteed ($\mathrm{Ch}(P_0)\ne 0$). 

A distinctive feature of our construction is its explicit character. The projection $P_\varepsilon$ is obtained from concrete operations on the reference projection and its localized kernel, rather than through an abstract existence theorem.
The operations applied to the hopping matrices are: the application of a magnetic phase generated by the change of magnetic flux, the corrections required to restore idempotency of the projection, and the elimination of the leading term of $[P_\varepsilon,H_\varepsilon]$.

Our formulation is suitable for finite-volume truncations and numerical implementations. For example, when $b_0=0$ and $\vep =2\pi/m$ with a larger and larger $m$, a  natural but not so fortunate idea would be to rescale the unit cell with $m$ at each step,  and  diagonalize an increasingly large magnetic Bloch matrix.  Our method avoids this. 

%{\color{red} While the conservation of the Chern number is nowhere original, the validity of the Středa formula in this regime may be interesting as it obtains the same result present in the continuum setting \cite{CorneanMonacoMoscolari2021}.}

At last, we believe that our most important ``applied" contribution is the possibility to turn any localized Parseval frame of $P_0$ into one of $P_\varepsilon$, which could enhance the possibility to numerically analyze effective Hamiltonians with arbitrary magnetic fluxes, as explicit methods for constructing Parseval frames for topologically non-trivial projections have already been developed \cite{CorneanMonacoMoscolari2019, FerreriMonacoPeluso2026}.

\subsection{Open questions and outlook}
We list here a few issues which remain open:
\begin{enumerate}
    \item Would it be possible to repeat the construction of $P_\vep$ in  continuum models, where $H_b$ is replaced by a family of unbounded, elliptic magnetic (pseudo)differential operators as in \cite{Purice2004,Purice2007,Purice2019,CorneanGarde2019,Thorn2026}? Most probably yes, but the technical burden would be much heavier.
    \item There has been a lot of recent activity related to ``non-equilibrium almost stationary states" (NEASS) \cite{Teufel2020} and their applications to quantum Hall effect. For example, in \cite{DomenicoGiovanna2022,MazziniMonaco2026} it is shown that under the strict gap condition, the Hall current is given by the linear response in the constant external electric field $E$, up to a correction of order $E^\infty$. The coefficient of the linear response is nothing but the Chern number of the unperturbed Fermi projection. This suggests that there might exist a connection between our integrated density of states $\mathcal{I}(P_\vep)$, and the Hall current.  
    \item Given any real number $b$, Dirichlet's approximation theorem states that there exists an integer $p/q$ such that $|b-p/q|\leq 1/q^2$. Our construction of $P_\vep$ near any given rational $b_0$ provides us with a maximal $\vep_1>0$ such that $P_\vep$ exists for $|\vep|<\vep_1$. The challenge is that $\vep_1$ has a complicated dependence on the denominator $q$ and on the minimal value of the local spectral gap, and it would be interesting to see if $q^2\vep_1$ has a lower bound or not.  In any case, having a more explicit $q$-dependence of $\vep_1$ would tell us which type of irrational fluxes we can reach by starting from a nearby rational one. 
\end{enumerate}

\subsection{Acknowledgements}
The authors acknowledge support from the Independent Research Fund
Denmark--Natural Sciences, grant DFF–10.46540/2032-00005B, and from DNRF Centre
CLASSIQUE sponsored by the Danish National Research Foundation, grant nr. 187.

The authors would also like to thank Domenico Monaco and Gianluca Panati for valuable discussions.

\section{The discrete magnetic calculus}\label{sec:magnetic_calculus}

The properties of $\O_\varepsilon$ are pivotal for the proof of the main results, and in this section we investigate them.

\begin{proof}[Proof of Lemma \ref{lem:properties_O_epsilon}]

%The first statement follows from the equality $$\|\O_\varepsilon(A)(\gamma,\gamma')\|=\|A(\gamma,\gamma')\|.$$
%
%The second and third statements are trivial to prove.
%
%For the fourth property, observe that 
%\[\begin{split}A=A^* & \Leftrightarrow [A(\gamma,\gamma')]_{a,b} = [A^*(\gamma,\gamma')]_{a,b}, \quad \forall \gamma,\gamma'\in\Z^2,\quad a,b\in\{1,\dots,n\}.\end{split}\]
%Since $$[A^*(\gamma,\gamma')]_{a,b} = \left\langle \delta_{\gamma,a}|A^*(\delta_{\gamma ',b} )\right\rangle = \left\langle A (\delta_{\gamma,a})|\delta_{\gamma ',b} \right\rangle =\overline{[A(\gamma',\gamma)]_{b,a} },$$
%we have that $$A=A^*  \Leftrightarrow A(\gamma,\gamma') = A(\gamma',\gamma)^*, \quad \forall \gamma,\gamma'\in\Z^2.$$
%So $$\O_\varepsilon(A)=\O_\varepsilon(A)^*\Leftrightarrow e^{i\varepsilon\varphi(\gamma,\gamma')} A(\gamma,\gamma') = e^{-i\varepsilon\varphi(\gamma ',\gamma)} A(\gamma',\gamma)^*, \quad \forall \gamma,\gamma'\in\Z^2, $$
%and the skew-symmetry of $\varphi$ proves the fourth statement.

The only non-trivial property is the third one. For every $\gamma,\gamma',\gamma''\in\Z^2$ it holds that 
    \begin{align*}
        (\tau_{\vep,\gamma''} Af)(\gamma)&= e^{i\varepsilon\varphi(\gamma,\gamma'')} (Af(\gamma-\gamma''))=\sum_{\gamma'\in\Z^2}e^{i\vep\varphi(\gamma,\gamma'')} A(\gamma-\gamma'',\gamma')f(\gamma'),\\
        (A\tau_{\vep,\gamma''} f)(\gamma)&=\sum_{\gamma'\in\Z^2} A(\gamma,\gamma')(\tau_{\varepsilon,\gamma''}f(\gamma')) =\sum_{\gamma'\in\Z^2} A(\gamma,\gamma') e^{i\vep\varphi(\gamma',\gamma'')}f(\gamma'-\gamma'')\\
        &=\sum_{\gamma'\in\Z^2} A(\gamma,\gamma'+\gamma'') e^{i\vep\varphi(\gamma'+\gamma'',\gamma'')}f(\gamma').
    \end{align*}
    So, $A$ is $(\varepsilon,q)$-periodic if and only if its hopping matrices satisfy the following constraint for every $\gamma,\gamma' \in\Z^2$, $\gamma''\in q\Z\times \Z$:
    \begin{equation}\label{eq:condition_commuting_with_mag_translations}A(\gamma-\gamma'',\gamma')e^{i\vep\varphi(\gamma,\gamma'')} = A(\gamma,\gamma'+\gamma'')e^{i\varepsilon\varphi(\gamma',\gamma'')}\end{equation}
    An identical procedure ensures that $\O_\eta (A)$ is $(\varepsilon+\eta,q)$-periodic if and only if its hopping matrices satisfy the following constraint for every $\gamma,\gamma' \in\Z^2$, $\gamma''\in q\Z\times \Z$:
    \begin{equation*}[\O_\eta (A)](\gamma-\gamma'',\gamma')e^{i(\vep+\eta )\varphi(\gamma,\gamma'')} = [\O_\eta( A)](\gamma,\gamma'+\gamma'')e^{i(\varepsilon+\eta)\varphi(\gamma',\gamma'')}\end{equation*}
    Using the definition of $\O_\eta( A)$ this is equivalent to 
    \begin{equation}\label{eq:auxiliary_equation_11}
        e^{i\eta\varphi(\gamma-\gamma'',\gamma')} A(\gamma-\gamma'',\gamma')e^{i(\vep+\eta )\varphi(\gamma,\gamma'')} =e^{i\eta\varphi(\gamma,\gamma'+\gamma'')}A(\gamma,\gamma'+\gamma'')e^{i(\varepsilon+\eta)\varphi(\gamma',\gamma'')}.
    \end{equation}
    Since $\varphi$ is bilinear and anti-symmetric, Equation \eqref{eq:auxiliary_equation_11} is equivalent to Equation \eqref{eq:condition_commuting_with_mag_translations}, proving the final statement.
\end{proof}

Now we want to investigate how close $\O_\varepsilon$ is to a homomorphism property of the type $\O_\varepsilon(A\circ B) = \O_\varepsilon(A) \O_\varepsilon(B)$. The next Lemma states that the property holds up to local errors for local operators.

\begin{lemma}\label{lemmahc2}
Consider two operators $A,B\in\mathcal{B}^{\exp} (\mathcal{H})$. 
Define the operator $A\triangle B$ through its matching matrices:
    \begin{equation*}\label{eq:bilinear_superoperator_triangle_circ}
        (A\triangle B)(\gamma,\gamma')=i \sum_{\gamma'' \in \Z^2} \varphi(\gamma-\gamma'',\gamma''-\gamma')\, A(\gamma,\gamma'')B(\gamma'',\gamma')
    \end{equation*}
Then $A\triangle B \in \mathcal{B}^{\exp}(\mathcal{H})$ and
    \begin{equation}\label{eq:product_related_to_delta}
        \O_\vep(A)\O_\vep(B)=\O_\vep(AB)+\vep \O_\vep(A\triangle B)+\mathcal{O}^{\exp} (\vep^2).
    \end{equation}
\end{lemma}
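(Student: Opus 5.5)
We prove the statement by a direct computation on hopping matrices, using only the bilinearity and skew-symmetry of $\varphi$ and the exponential decay of $A$ and $B$.

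\textbf{The phase identity.} Writing out the product, for all $\gamma,\gamma'$,
\begin{equation*}
\big(\O_\vep(A)\O_\vep(B)\big)(\gamma,\gamma')=\sum_{\gamma''\in\Z^2} e^{i\vep[\varphi(\gamma,\gamma'')+\varphi(\gamma'',\gamma')]}A(\gamma,\gamma'')B(\gamma'',\gamma').
\end{equation*}
By bilinearity and skew-symmetry ($\varphi(\gamma'',\gamma'')=0$),
\begin{equation*}
\varphi(\gamma-\gamma'',\gamma''-\gamma')=\varphi(\gamma,\gamma'')-\varphi(\gamma,\gamma')+\varphi(\gamma'',\gamma'),
\end{equation*}
so $\varphi(\gamma,\gamma'')+\varphi(\gamma'',\gamma')=\varphi(\gamma,\gamma')+\varphi(\gamma-\gamma'',\gamma''-\gamma')$. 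Pulling out the common factor gives
\begin{equation*}
\big(\O_\vep(A)\O_\vep(B)\big)(\gamma,\gamma')=e^{i\vep\varphi(\gamma,\gamma')}\sum_{\gamma''} e^{i\vep\varphi(\gamma-\gamma'',\gamma''-\gamma')}A(\gamma,\gamma'')B(\gamma'',\gamma').
\end{equation*}
The remaining phase depends only on the differences $\gamma-\gamma''$ and $\gamma''-\gamma'$. This is what makes the expansion local.

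\textbf{Taylor expansion.} We expand $e^{ix}=1+ix+R(x)$ with $|R(x)|\le x^2/2$ for real $x$. The zeroth-order term gives $\O_\vep(AB)$ and the first-order term gives $\vep\,\O_\vep(A\triangle B)$. The remainder operator $R_\vep$ has hopping matrices $e^{i\vep\varphi(\gamma,\gamma')}\sum_{\gamma''}R\big(\vep\varphi(\gamma-\gamma'',\gamma''-\gamma')\big)A(\gamma,\gamma'')B(\gamma'',\gamma')$.

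\textbf{Estimates.} This is the main, though mild, obstacle: the phase polynomial must be absorbed into the exponential decay.
\begin{itemize}
\item We have $|\varphi(x,y)|\le \tfrac12|x||y|$.
\item Let $A$ and $B$ have constants $C_A,w_A$ and $C_B,w_B$, and set $w=\min(w_A,w_B)$.
\item Since $|\gamma-\gamma''|+|\gamma''-\gamma'|\ge|\gamma-\gamma'|$, we get $e^{-w_A|\gamma-\gamma''|-w_B|\gamma''-\gamma'|}\le e^{-(w/2)|\gamma-\gamma'|}\,e^{-(w/2)(|\gamma-\gamma''|+|\gamma''-\gamma'|)}$.
\item A polynomial weight $(|x||y|)^k$, for $k=1,2$, is absorbed by sacrificing half of the remaining rate: $\sup_{t\ge0}t^ke^{-(w/4)t}<\infty$.
\item The remaining sum $\sum_{\gamma''}e^{-(w/4)(|\gamma-\gamma''|+|\gamma''-\gamma'|)}$ is at most $\sum_{\delta\in\Z^2}e^{-(w/4)|\delta|}<\infty$.
\end{itemize}

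\textbf{Conclusions.}
\begin{itemize}
\item For $k=1$ (no $\vep$) these bounds give $\|(A\triangle B)(\gamma,\gamma')\|\le C e^{-(w/2)|\gamma-\gamma'|}$, so $A\triangle B\in\mathcal{B}^{\exp}(\mathcal{H})$.
\item For $k=2$ they give $\|R_\vep(\gamma,\gamma')\|\le C\vep^2e^{-(w/2)|\gamma-\gamma'|}$ with $C$ and $w/2$ independent of $\vep$. Hence $R_\vep=\mathcal{O}^{\exp}(\vep^2)$, which proves \eqref{eq:product_related_to_delta}.
\item All three operators are bounded by the Schur test, since their hopping kernels are summable uniformly in rows and columns. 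Equivalently, one can use the properties in Appendix \ref{sec:exp_localization}.
\item By Lemma \ref{lem:properties_O_epsilon}, $\O_\vep$ preserves these kernel bounds.
\end{itemize}
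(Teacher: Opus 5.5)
Your proposal is correct and follows essentially the same route as the paper: the phase identity $\varphi(\gamma,\gamma'')+\varphi(\gamma'',\gamma')=\varphi(\gamma,\gamma')+\varphi(\gamma-\gamma'',\gamma''-\gamma')$, a second-order Taylor remainder for $e^{ix}$, and absorption of the polynomial factor $|\gamma-\gamma''|^2|\gamma''-\gamma'|^2$ into the exponential decay. The differences are cosmetic: you use the sharper bound $|R(x)|\le x^2/2$, you obtain decay rate $w/2$ instead of $w_b/4$, and you explicitly check that $A\triangle B$ is local, which the paper leaves implicit.
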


\begin{proof}
    We have %The hopping matrices of the operators in Equation \eqref{eq:product_related_to_delta} are:
    \[\begin{split}
    [\O_\vep(A)\O_\vep(B)](\gamma,\gamma') & = \sum_{\gamma''\in\Z^2} e^{i\varepsilon\varphi(\gamma-\gamma',\gamma'')}A(\gamma,\gamma'')B(\gamma'',\gamma'), %\\
    %\O_\varepsilon(AB)(\gamma,\gamma') &= \sum_{\gamma''\in\Z^2} e^{i\varepsilon\varphi(\gamma,\gamma')}A(\gamma,\gamma'')B(\gamma'',\gamma'), \quad \mbox{and} \\
%\varepsilon\O_\varepsilon(A\triangle B)(\gamma,\gamma') &=  \varepsilon \sum_{\gamma'' \in \Z^2} ie^{i\varepsilon\varphi(\gamma,\gamma')} \varphi(\gamma-\gamma'',\gamma''-\gamma')\, A(\gamma,\gamma'')B(\gamma'',\gamma'). 
    \end{split}\]
    thus the hopping matrices of $L= \O_\varepsilon (A)\O_\varepsilon (B) - \O_\varepsilon (AB) - \varepsilon \O_\varepsilon (A\triangle B)$ are equal to:
    \[L(\gamma,\gamma')=e^{i\varepsilon\varphi(\gamma,\gamma')} \sum_{\gamma''\in \Z^2} \left[e^{i\varepsilon[\varphi(\gamma-\gamma',\gamma'')-\varphi(\gamma,\gamma')]}-1-i\varepsilon\varphi(\gamma-\gamma'',\gamma''-\gamma')   \right]A(\gamma,\gamma'')B(\gamma'',\gamma'). \]
    Since $\varphi$ is bilinear and skew-symmetric, $$\varphi(\gamma-\gamma',\gamma'')-\varphi(\gamma,\gamma') = \varphi(\gamma-\gamma'',\gamma''-\gamma').$$
 Using the Taylor polynomial and the Lagrange remainder, we can estimate 
    \[|e^{ix}-1-ix|^2\le|\cos(x)-1|^2+|\sin(x)-x|^2 \le |x|^4\]
    and if $A$ and $B$ are local operators with exponential decay rates $w_a\geq w_b>0$, together with $|\varphi(\gamma-\gamma'',\gamma''-\gamma')|
    \leq 2^{-1}|\gamma-\gamma''|\, |\gamma''-\gamma'|$ and the triangle inequality we have 
\begin{equation*}\tag*{\qedhere}\begin{aligned}\|L(\gamma,\gamma')\| &\le C\vep^2\sum_{\gamma'' \in \Z^2} |\gamma-\gamma''|^2\, |\gamma''-\gamma'|^2e^{-w_b[|\gamma-\gamma''|+|\gamma''-\gamma'|]} \le C\vep^2 e^{-w_b|\gamma-\gamma'|/4}.
    \end{aligned}
    \end{equation*}
 %   After a substitution $\gamma-\gamma'' = \gamma'''$ and a use of the triangle inequality we can write
  %  \[\begin{split} \|L(\gamma,\gamma')\| &\le C |\varepsilon|^2\sum_{\gamma''' \in \Z^2}|\varphi(\gamma''',\gamma-\gamma'-\gamma''')|^2 e^{[-a|\gamma'''|-b|\gamma-\gamma'-\gamma'''|]}\\
   % &\le  C |\varepsilon|^2\left[\sum_{\gamma''' \in \Z^2}|\varphi(\gamma''',\gamma-\gamma')|^2 e^{-(a-b)|\gamma'''|}\right]e^{[-b|\gamma-\gamma'|]}. \end{split}\]
   % However, $\varphi$ behaves polynomially in its arguments, so the series is always well defined and can be estimated using a polynomial $p(|\gamma-\gamma'|) \in \R_+$:
    %\[\|L(\gamma,\gamma')\|\le |\varepsilon|^2 C p(|\gamma-\gamma'|)e^{-b|\gamma-\gamma'|}.\]
    %Up to rescaling the constant $C$ and the exponential weight $b$, it is always possible to have $p(|\gamma-\gamma'|)e^{-b|\gamma-\gamma'|} \le De^{-d|\gamma-\gamma'|}$, obtaining the thesis.
\end{proof}

\begin{corollary}
    If $A,B\in \mathcal{B}^{\exp} (\mathcal{H})$ are $(b_0,q)$-periodic, so is $A\triangle B$ and if this occurs all the operators of Equation \eqref{eq:product_related_to_delta} are $(b_0+\varepsilon,q)$-periodic.
\end{corollary}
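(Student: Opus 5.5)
The plan is to reduce everything to the hopping-matrix characterization \eqref{eq:condition_commuting_with_mag_translations} of $(\vep,q)$-periodicity obtained in the proof of Lemma \ref{lem:properties_O_epsilon}. For $A,B$ $(b_0,q)$-periodic we have, for all $\gamma,\gamma'\in\Z^2$ and $\gamma''\in q\Z\times\Z$,
\[A(\gamma-\gamma'',\gamma')=e^{ib_0[\varphi(\gamma',\gamma'')-\varphi(\gamma,\gamma'')]}A(\gamma,\gamma'+\gamma''),\]
and similarly for $B$. Equivalently, $A(\gamma+\gamma'',\gamma'+\gamma'')=e^{ib_0\varphi(\gamma-\gamma',\gamma'')}A(\gamma,\gamma')$ up to the precise sign convention, which I will fix by substituting $\gamma\mapsto\gamma+\gamma''$ in the displayed identity. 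The key observation is that the weight $\varphi(\gamma-\gamma_3,\gamma_3-\gamma')$ in the definition of $A\triangle B$ depends only on differences. It is therefore invariant under the simultaneous shift $\gamma,\gamma_3,\gamma'\mapsto \gamma+\gamma'',\gamma_3+\gamma'',\gamma'+\gamma''$.

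First I write $(A\triangle B)(\gamma+\gamma'',\gamma'+\gamma'')$ as a sum over $\gamma_3$, and reindex by $\gamma_3\mapsto\gamma_3+\gamma''$. This reindexing is legitimate because the sum is absolutely convergent, as $A$ and $B$ are local. Next I apply the covariance identity to each factor. This produces the phase $e^{ib_0[\varphi(\gamma-\gamma_3,\gamma'')+\varphi(\gamma_3-\gamma',\gamma'')]}$, which by bilinearity equals $e^{ib_0\varphi(\gamma-\gamma',\gamma'')}$ and is independent of $\gamma_3$. It can thus be pulled out of the sum, and what remains is exactly the covariance identity for $A\triangle B$. This is the same computation that shows $AB$ is $(b_0,q)$-periodic; alternatively one sees that directly since products of operators commuting with $\tau_{b_0,\gamma''}$ commute with it. Locality of $A\triangle B$ is already given by Lemma \ref{lemmahc2}.

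For the second claim, $AB$ and $A\triangle B$ are $(b_0,q)$-periodic, so by the third item of Lemma \ref{lem:properties_O_epsilon} the operators $\O_\vep(A)$, $\O_\vep(B)$, $\O_\vep(AB)$ and $\O_\vep(A\triangle B)$ are $(b_0+\vep,q)$-periodic. Products of operators commuting with the magnetic translations $\tau_{b_0+\vep,\gamma''}$ again commute with them, so $\O_\vep(A)\O_\vep(B)$ is periodic too. Finally, the remainder $L=\O_\vep(A)\O_\vep(B)-\O_\vep(AB)-\vep\O_\vep(A\triangle B)$ is a linear combination of periodic operators and hence is $(b_0+\vep,q)$-periodic. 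That is the meaning of ``all operators'' in \eqref{eq:product_related_to_delta}, including the $\mathcal{O}^{\exp}(\vep^2)$ term.

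The main obstacle is only bookkeeping. One must get the sign and ordering conventions of the phase in the covariance identity right so that the two phases combine via bilinearity into one depending on $\gamma-\gamma'$. Once the identity is put in the shifted form above, the cancellation is immediate.
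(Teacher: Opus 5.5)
Your proof is correct and follows essentially the same route as the paper, which simply cites a direct computation with the hopping-matrix criterion \eqref{eq:condition_commuting_with_mag_translations}. Your shifted identity $A(\gamma+\gamma'',\gamma'+\gamma'')=e^{ib_0\varphi(\gamma-\gamma',\gamma'')}A(\gamma,\gamma')$ is correct as written, the two phases combine by linearity of $\varphi$ in its first argument, and the second claim follows from the third item of Lemma \ref{lem:properties_O_epsilon} together with the fact that operators commuting with the magnetic translations form an algebra.
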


\begin{proof}
    Straightforward computation using Equation \eqref{eq:condition_commuting_with_mag_translations}.
\end{proof}

\section{The almost invariant projection} \label{sec:P_epsilon_construction}

In order to define $P_\varepsilon$ in the case of a local spectral island, we need to construct four sequences $\vep_k$, $P_{k,\varepsilon}$, $W_k$, and $\Pi_{k,\varepsilon}$ using an iterative  procedure. %that follows these intuitive steps: if $[H_{b_0+\varepsilon}, P_{k,\varepsilon}]=\mathcal{O}^{\exp}(\varepsilon^k)$ we can isolate the leading term $W_k$ and use it to modify $P_{k,\varepsilon}$ into $\Pi_{k+1,\varepsilon}$ with $[\Pi_{k+1,\varepsilon}, H_{b_0+\varepsilon}]=\mathcal{O}^{\exp}(\varepsilon^{k+1})$.
%However, $\Pi_{k+1,\varepsilon}$ will not be a projection and we need to put it in a Riesz formula to obtain $P_{k+1,\varepsilon}$, but this step can only be done for suitable $|\varepsilon|<\varepsilon_{k+1}$.

\begin{condition}\label{def:P_k_epsilon} There exists a non-increasing sequence $\vep_k>0$ such that 
    $P_{k,\varepsilon}$ satisfies the following properties for all $k\in\mathbb{N}$ and $0\leq |\varepsilon|\leq \vep_k$:
    \begin{enumerate}
        \item It is an orthogonal projection acting on $\mathcal{H}$
        \item It is local, $P_{k,\varepsilon}\in\mathcal{B}^{\exp}(\mathcal{H})$
        \item It is $(b_0+\varepsilon,q)$-periodic
        %\item For every $\gamma,\gamma'\in\Z^2$ the function $\varepsilon\mapsto P_{k,\varepsilon} (\gamma,\gamma')$ is {\color{blue} smooth}
        \item $[H_{b_0+\varepsilon},P_{k,\varepsilon}] = \mathcal{O}^{\exp} (\varepsilon^k)$
        \item $P_{k,\varepsilon}=P_{k-1,\varepsilon}+\mathcal{O}^{\exp}(\varepsilon^{k-1})$ for $k>1$.
    \end{enumerate}
\end{condition}

\begin{condition}\label{def:W_k}
    $W_k$ has the following properties for all $k\in\mathbb{N}$:
    \begin{enumerate}
        \item It is a $(b_0,q)$-periodic and local operator $W_k\in \mathcal{B}^{\exp}(\mathcal{H})$
        \item $[H_{b_0+\varepsilon},P_{k,\varepsilon}] = \varepsilon^k \O_\varepsilon(W_k) + \mathcal{O}^{\exp}(\varepsilon^{k+1})$
        \item $P_0W_kP_0\equiv 0 = (1-P_0)W_k(1-P_0)$.
    \end{enumerate}
\end{condition}

\begin{condition}\label{def:PI_k_epsilon}
    $\Pi_{k,\varepsilon}$ satisfies the following properties for all $k \in\mathbb{N}$ and $|\vep|\leq \varepsilon_{k-1}$ (with $\vep_0:=1$):
    \begin{enumerate}
        \item It is a local operator, $\Pi_{k,\varepsilon}\in\mathcal{B}^{\exp}(\mathcal{H})$ 
        \item It is $(b_0+\varepsilon,q)$-periodic
        %\item For every $\gamma,\gamma'\in\Z^2$ the function $\varepsilon\mapsto \Pi_{k,\varepsilon} (\gamma,\gamma')$ is real analytic
        \item $\Pi_{k,\varepsilon}^2-\Pi_{k,\varepsilon} = \mathcal{O}^{\exp}(\varepsilon^k)$
        \item $[H_{b_0+\varepsilon},\Pi_{k,\varepsilon}] = \mathcal{O}^{\exp}(\varepsilon^k)$
    \end{enumerate}
\end{condition}

Our first step in the construction of $P_\vep$ is the following:
\begin{proposition}\label{pro:Sequence_of_P_k_epsilon}
     Conditions \ref{def:P_k_epsilon}, \ref{def:W_k}, and \ref{def:PI_k_epsilon} can be simultaneously satisfied. 
     % Here $\varepsilon_k$ are {\color{blue} chosen such that for all $|\varepsilon|< \varepsilon_k$ we have} $\|\Pi_{k,\varepsilon }^2 -\Pi_{k,\varepsilon}\|<1/8$.
\end{proposition}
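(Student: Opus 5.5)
We construct the three sequences by induction on $k$. Start with $P_{0,\vep}$: take $\Pi_{1,\vep}:=\O_\vep(P_0)$, where $P_0$ is the projection from point (i) (Lemma \ref{lem:P_0_local}). By Lemma \ref{lem:properties_O_epsilon}, $\Pi_{1,\vep}$ is local, self-adjoint and $(b_0+\vep,q)$-periodic. Lemma \ref{lemmahc2} with $A=B=P_0$ gives $\Pi_{1,\vep}^2-\Pi_{1,\vep}=\vep\O_\vep(P_0\triangle P_0)+\mathcal{O}^{\exp}(\vep^2)=\mathcal{O}^{\exp}(\vep)$. Using $H_{b_0+\vep}=\O_\vep(H_{b_0})$ and $[H_{b_0},P_0]=0$, the same lemma gives $[H_{b_0+\vep},\Pi_{1,\vep}]=\vep\O_\vep(H_{b_0}\triangle P_0-P_0\triangle H_{b_0})+\mathcal{O}^{\exp}(\vep^2)$. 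So Condition \ref{def:PI_k_epsilon} holds for $k=1$.

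\textbf{Step A (from $\Pi_{k,\vep}$ to $P_{k,\vep}$).} Given $\Pi_{k,\vep}$ self-adjoint (we keep it self-adjoint throughout), choose $\vep_k\le\vep_{k-1}$ so that $\|\Pi_{k,\vep}^2-\Pi_{k,\vep}\|<1/8$ for $|\vep|\le\vep_k$. Then the spectrum of $\Pi_{k,\vep}$ lies in small neighbourhoods of $0$ and $1$, and we set
$$P_{k,\vep}=\frac{1}{2\pi i}\oint_{|z-1|=1/2}(z-\Pi_{k,\vep})^{-1}\,dz.$$
This is an orthogonal projection. It is local by the resolvent and functional-calculus results of Appendix \ref{sec:exp_localization}, and periodic because the magnetic translations commute with $\Pi_{k,\vep}$. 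The standard identity $P_{k,\vep}-\Pi_{k,\vep}=\oint(z-\Pi)^{-1}(\Pi^2-\Pi)\,g(z,\Pi)\,dz$, with $g$ a local rational function, gives $P_{k,\vep}=\Pi_{k,\vep}+\mathcal{O}^{\exp}(\vep^k)$. Hence $[H_{b_0+\vep},P_{k,\vep}]=\mathcal{O}^{\exp}(\vep^k)$, and for $k>1$ we get $P_{k,\vep}=P_{k-1,\vep}+\mathcal{O}^{\exp}(\vep^{k-1})$, once we know $\Pi_{k,\vep}=P_{k-1,\vep}+\mathcal{O}^{\exp}(\vep^{k-1})$. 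This verifies Condition \ref{def:P_k_epsilon}.

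\textbf{Step B (extracting $W_k$).} This step needs an expansion of $[H_{b_0+\vep},P_{k,\vep}]$ in the form $\vep^k\O_\vep(W_k)+\mathcal{O}^{\exp}(\vep^{k+1})$ with $W_k$ $(b_0,q)$-periodic. We arrange by induction that every object built is of the form $\O_\vep(\text{polynomial in }\vep\text{ with }(b_0,q)\text{-periodic local coefficients})+\mathcal{O}^{\exp}(\vep^{N})$ for any fixed $N$. This is preserved by sums and, via Lemma \ref{lemmahc2} and its corollary (extended to higher-order Taylor terms in the same way), by products. The Riesz integral is handled by a Neumann expansion of $(z-\Pi)^{-1}$ around $(z-\O_\vep(P_0))^{-1}$. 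Then $W_k$ is the $\vep^k$ coefficient.

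For the off-diagonality in Condition \ref{def:W_k}, write $P_{k,\vep}=\O_\vep(P_0+\vep R)+\dots$. Differentiating the relation $P^2=P$ gives the standard fact that the commutator $[H,P]$ of a projection is $P$-off-diagonal up to higher order: $P[H,P]P=0=(1-P)[H,P](1-P)$ exactly. Expanding this identity at order $\vep^k$ with $P=\O_\vep(P_0)+\mathcal{O}(\vep)$ yields $P_0W_kP_0=0=(1-P_0)W_k(1-P_0)$.

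\textbf{Step C (the correction, main obstacle).} We set $\Pi_{k+1,\vep}=P_{k,\vep}-\vep^k\O_\vep(X_k)$, where $X_k$ is a $(b_0,q)$-periodic, local, self-adjoint solution of $[H_{b_0},X_k]=W_k$ that is off-diagonal with respect to $P_0$. Then Lemma \ref{lemmahc2} gives $[H_{b_0+\vep},\Pi_{k+1,\vep}]=\mathcal{O}^{\exp}(\vep^{k+1})$. Moreover $\Pi_{k+1,\vep}^2-\Pi_{k+1,\vep}=-\vep^k\O_\vep(P_0X_k+X_kP_0-X_k)+\mathcal{O}^{\exp}(\vep^{k+1})$, which is $\mathcal{O}^{\exp}(\vep^{k+1})$ precisely because $X_k$ is off-diagonal.

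Solving the commutator equation is the crux. Under the local gap, choose a contour $\Gamma$ which, fiberwise, encloses the island spectrum of $H_{b_0}P_0$ and separates it from the spectrum of $H_{b_0}(1-P_0)$. Then set
$$X_k=\frac{1}{2\pi i}\oint_\Gamma (z-H_{b_0})^{-1}\,[P_0,W_k]\,(z-H_{b_0})^{-1}\,dz,$$
or equivalently solve the Sylvester equation block by block. Because the island is only locally gapped, $\Gamma$ must depend on the quasi-momentum. We therefore define $X_k$ through the Bloch–Floquet fibers of Appendix \ref{sec:fiber_decomposition}, using $(b_0,q)$-periodicity. Locality is recovered from analyticity of the fibered solution in a strip, and this is exactly where the decay rate may deteriorate at each step.

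Verifying that $[H_{b_0},X_k]=W_k$ uses the off-diagonality of $W_k$ from Step B. Self-adjointness follows from $W_k^*=-W_k$, since $W_k$ comes from the commutator of self-adjoint operators.
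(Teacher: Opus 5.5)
Your proposal is correct and follows essentially the same route as the paper: $\Pi_{1,\vep}=\O_\vep(P_0)$, the Riesz projection of $\Pi_{k,\vep}$, extraction of the $P_0$-off-diagonal leading term $W_k$ via the exact identities $P[H,P]P=0=(1-P)[H,P](1-P)$, and the correction $\Pi_{k+1,\vep}=P_{k,\vep}-\vep^k\O_\vep(X_k)$, where your fiberwise contour solution $X_k$ coincides with the paper's $T(W_k)$ (both reduce on each fiber to the standard off-diagonal solution of the Sylvester equation $[h(k),x]=w_k(k)$). Your Step B bookkeeping (an asymptotic expansion of the form $\O_\vep$ of a polynomial in $\vep$ with periodic local coefficients) and your remark that $X_k$ is self-adjoint because $W_k^*=-W_k$ make explicit two points the paper's proof leaves implicit: the existence of the limit defining $W_k$, and the orthogonality of $P_{k+1,\vep}$.
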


\begin{proof}

    \begin{itemize}
        \item \emph{Construction of} $\Pi_{1,\varepsilon}$ {\it and} $\vep_1$

        Let $\Pi_{1,\varepsilon} = \O_\varepsilon(P_0)$.  Lemma \ref{lem:P_0_local} and Lemma \ref{lem:properties_O_epsilon} ensure that $\Pi_{1,\varepsilon}\in\mathcal{B}^{\exp}(\mathcal{H})$ is local and $(b_0+\varepsilon,q)$-periodic.
        %Since $\O_\varepsilon$ acts as pointwise multiplication with $e^{i\vep \varphi(\gamma,\gamma')}$ for all $\gamma,\gamma'\in\mathbb{Z}^2$,  the maps $\varepsilon\mapsto\Pi_{1,\varepsilon} (\gamma,\gamma')$ are real analytic. 
        Moreover, Equation \ref{eq:product_related_to_delta} shows that $$\Pi_{1,\varepsilon}^2-\Pi_{1,\varepsilon} = \varepsilon\O_\varepsilon(P_0\triangle P_0) + \mathcal{O}^{\exp} (\varepsilon^2) = \mathcal{O}^{\exp}(\varepsilon).$$ 
        Since $[H_{b_0},P_0]=0$, Equation \ref{eq:product_related_to_delta} implies that  $$[H_{b_0+\varepsilon},\Pi_{1,\varepsilon}] = [\O_\varepsilon(H_{b_0}), \O_\varepsilon(P_0)] = \varepsilon \O_\varepsilon(H_{b_0}\triangle P_0 - P_0\triangle H_{b_0}) + \mathcal{O}^{\exp} (\varepsilon^2) = \mathcal{O}^{\exp} ( \varepsilon). $$
        Thus all the properties in  Condition \ref{def:PI_k_epsilon} are satisfied. Finally, we fix $0<\vep_1 \leq 1$ small enough such that $\Vert \Pi_{1,\varepsilon}^2-\Pi_{1,\varepsilon}\Vert <1/8$ for all $|\vep|\leq \vep_1$. 

        \item \emph{From }$\Pi_{k,\varepsilon}$ \emph{to}  $\vep_k$ {\it and} $P_{k,\varepsilon}$ 

    Assume that $\Pi_{k,\varepsilon}$ satisfies all the properties of Condition  \ref{def:PI_k_epsilon}. The property $\Pi_{k,\varepsilon}^2-\Pi_{k,\varepsilon} = \mathcal{O}^{\exp}(\varepsilon^k)$ together with the estimate of the operator norm in Lemma \ref{lem:locality_VS_continuity} implies that there exists a value $\varepsilon_k$ such that for all $|\varepsilon|< \varepsilon_k$ we have $\|\Pi_{k,\varepsilon }^2 -\Pi_{k,\varepsilon}\|<1/8$. 
        We may choose  $\varepsilon_k \leq  \varepsilon_{k-1}$. 
        We have that for all $|\varepsilon|< \varepsilon_k$, the spectrum of $\Pi_{k,\varepsilon}$ is contained within two disks with centers at $0$ and $1$ and a radius smaller than $1/4$. For these $\varepsilon$ define:
        \begin{equation}\label{eq:definition_P_k_epsilon}
            P_{k,\varepsilon} = \frac{1}{2\pi i}\oint_{|z-1|=1/2} (z-\Pi_{k,\vep})^{-1}dz.
        \end{equation}
        Corollary \ref{cor:locality_VS_functional_calculus} ensures that $P_{k,\varepsilon}\in\mathcal{B}^{\exp}(\mathcal{H})$ using the function $f\equiv 1$, and the functional calculus implies that $P_{k,\varepsilon}^2=P_{k,\varepsilon}$ and that $P_{k,\varepsilon}^* = P_{k,\varepsilon}$.
        $P_{k,\varepsilon}$ is $(b_0+\varepsilon,q)$-periodic as the argument inside the integral satisfies the same property. 
%        Essentially, $P_{k,\varepsilon}$ projects into the part of the spectrum of $\Pi_{k,\varepsilon} $ contained in the ball with center $1$ and radius $1/2$. 
Now we prove that 
        \begin{equation}\label{eq:relation_P_PI}
            P_{k,\varepsilon} = \Pi_{k,\varepsilon} + \mathcal{O}^{\exp} (\varepsilon^k).
        \end{equation}
        Let $\Delta_{k,\vep}\coloneq\Pi_{k,\vep}^2-\Pi_{k,\vep}$. We have 
         $$(z-\Pi_{k,\varepsilon})(\Pi_{k,\varepsilon}+z-1) = [z(z-1)]\left(1-\frac{\Delta_{k,\vep}}{z(z-1)} \right).$$
        If $|1-z|=1/2$ and $|\varepsilon|<\varepsilon_k$, we have $|z(z-1)| \geq 1/4$, so the last term is invertible since $\Vert \Delta_{k,\vep}\Vert <1/8$. 
        %$$\left(1-\frac{\Pi_{k,\varepsilon}(\Pi_{k,\varepsilon}-1)}{z(z-1)} \right)^{-1} = \sum_{l=0}^\infty \left(\frac{\Pi_{k,\varepsilon}(\Pi_{k,\varepsilon}-1)}{z(z-1)} \right)^l$$
        %Combining the last two Equations leads to $(z-\Pi_{k,\varepsilon})$ being invertible with:
        Thus introducing the function 
        $$h_z(\zeta)\coloneq\big (1-\tfrac{\zeta}{z(z-1)}\big )^{-1}-1,\quad |\zeta|<1/4,$$
        we have 
        \begin{align*}(z-\Pi_{k,\varepsilon})^{-1} &= \frac{ \Pi_{k,\varepsilon}+z-1}{z(z-1)} \left(1-\frac{\Delta_{k,\vep}}{z(z-1)} \right)^{-1}\\
        &=\frac{1}{z(z-1)}\Pi_{k,\varepsilon}+\frac{1}{z} +
        \frac{ \Pi_{k,\varepsilon}+z-1}{z(z-1)}h_z(\Delta_{k,\vep}).
        \end{align*}
We can factorize $h_z(\zeta)=\zeta f_z(\zeta)$ with $f_z$ analytic inside $|\zeta|<1/4$, hence by integrating in $z$ on $|z-1|=1/2$ we obtain 
\begin{equation}\label{oct11}
P_{k,\vep}=\Pi_{k,\vep}+\Delta_{k,\vep}\frac{1}{2\pi i}\oint_{|z-1|=1/2}\frac{ \Pi_{k,\varepsilon}+z-1}{z(z-1)}f_z(\Delta_{k,\vep}) dz.
\end{equation}
Corollary \ref{cor:locality_VS_functional_calculus} implies that $f_z(\Delta_{k,\vep})$ is local  uniformly in $z$, while the factor $\Delta_{k,\vep}$ is of order $\mathcal{O}^{\rm exp}(\vep^k)$, which implies the same property for $P_{k,\vep}-\Pi_{k,\vep}$.
        
       % Using the right hand side inside the integral of Equation \eqref{eq:definition_P_k_epsilon} proves the following equality:
        %$$P_{k,\varepsilon} = \sum_{l=0}^\infty   \frac{(\Pi_{k,\varepsilon}(\Pi_{k,\varepsilon}-1))^l}{2\pi i} \left[\int_{|z-1|=1/2}\frac{dz}{z^{l+1}(z-1)^l}+ \Pi_{k,\varepsilon}\int_{|z-1|=1/2}\frac{dz}{z^{l+1}(z-1)^{l+1}} \right]$$
        %The integrals are solvable using the Cauchy integral formula:
        %$$\frac{1}{2\pi i}\int_{|z-1|=1/2}\frac{z^{-j}dz}{(z-1)^{l+1}} = \frac{1}{l!} \left. \frac{\partial^l z^{-j}}{\partial z^l }\right|_{z=1 } $$
        %So, the previous series admits an explicit description of the first two terms
        %\begin{equation}\label{eq:development_of_Riesz_formula}P_{k,\varepsilon} = \Pi_{k,\varepsilon} +\Pi_{k,\varepsilon}(\Pi_{k,\varepsilon}-1) (1-2\Pi_{k,\varepsilon}) + \Pi_{k,\varepsilon}^2(\Pi_{k,\varepsilon}-1)^2 (\cdots).\end{equation}
        %We assumed that $\Pi_{k,\varepsilon}(\Pi_{k,\varepsilon}-1) = \mathcal{O}^{\exp}(\varepsilon^k)$, 
        This proves Equation \eqref{eq:relation_P_PI}.
        The combination of properties $[H_{b_0+\varepsilon},\Pi_{k,\varepsilon}] = \mathcal{O}^{\exp}(\varepsilon^k)$ and $\Pi_{k,\varepsilon} = P_{k-1,\varepsilon} + \mathcal{O}^{\exp}(\varepsilon^{k-1})$ together with Equation \eqref{eq:relation_P_PI} ensures that the two final properties $[H_{b_0+\varepsilon},P_{k,\varepsilon}] = \mathcal{O}^{\exp}(\varepsilon^k)$ and $P_{k,\varepsilon} = P_{k-1,\varepsilon} + \mathcal{O}^{\exp}(\varepsilon^{k-1})$ hold.

        \item \emph{From }$P_{k,\varepsilon}$ \emph{to} $W_{k}$

        Consider $P_{k,\varepsilon}$ satisfying all the properties of Definition \ref{def:P_k_epsilon}, then we define the operator $W_k $ as the leading order of $[H_{b_0+\varepsilon},P_{k,\varepsilon}] = \mathcal{O}^{\exp}(\varepsilon^k)$, which means 
        \begin{equation}\label{eq:auxiliary_equation}
        [H_{b_0+\varepsilon},P_{k,\varepsilon}] = \varepsilon^k \O_\varepsilon(W_k) + \mathcal{O}^{\exp}(\varepsilon^{k+1}).
        \end{equation}
        %, to be precise, its hopping matrices are:
        
        %This is well defined because $H_{b_0+\varepsilon}$ and $P_{k,\varepsilon}$ are real analytic around $\varepsilon=0$. The locality follows because \[\begin{split}[H_{b_0+\varepsilon},P_{k,\varepsilon}]=\mathcal{O}^{\exp}(\varepsilon^k) &\Rightarrow \|[H_{b_0+\varepsilon},P_{k,\varepsilon}](\gamma,\gamma')\| \le C |\varepsilon|^k  e^{-a|\gamma-\gamma'|} \\&\Rightarrow\|W_k(\gamma,\gamma')\| \le Ce^{-a|\gamma-\gamma'|}.\end{split}\]
        Since $[H_{b_0+\varepsilon},P_{k,\varepsilon}]$ is $(b_0+\varepsilon,q)$-periodic, Lemma \ref{lem:properties_O_epsilon} implies that $\O_\varepsilon^{-1}([H_{b_0+\varepsilon},P_{k,\varepsilon}])$ is $(b_0,q)$-periodic and its limit 
        \begin{equation*}\label{eq:W_k_definition}
           W_k (\gamma,\gamma') = \lim_{\varepsilon\to 0} \frac{\O_\varepsilon^{-1}([H_{b_0+\varepsilon},P_{k,\varepsilon}])(\gamma,\gamma')}{\varepsilon^k}
       \end{equation*}
       must also be $(b_0,q)$-periodic.
        
        The equality $P_{k,\varepsilon} = P_{k-1,\varepsilon} + \mathcal{O}^{\exp} ( \varepsilon^{k-1})$ leads to $P_{k,\varepsilon} = P_{1,\varepsilon} + \mathcal{O}^{\exp} ( \varepsilon)$, and Equation \eqref{eq:relation_P_PI} plus the properties at $k=1$ imply that  
        \begin{equation}\label{eq:auxiliary_equation_2} P_{k,\varepsilon} = \O_\varepsilon (P_0) + \mathcal{O}^{\exp} ( \varepsilon). \end{equation} 
        
        These estimates transform the equation
        $$P_{k,\varepsilon}[H_{b_0+\varepsilon},P_{k,\varepsilon}]P_{k,\varepsilon} \equiv0 \quad \mbox{into} $$$$ [\O_\varepsilon (P_0) + \mathcal{O}^{\exp} ( \varepsilon)] [\O_\varepsilon(W_k) + \mathcal{O}^{\exp}(\varepsilon)][\O_\varepsilon (P_0) + \mathcal{O}^{\exp} ( \varepsilon)]\equiv 0,$$
        further simplified into $$\O_\varepsilon (P_0) \O_\varepsilon(W_k)\O_\varepsilon (P_0) + \mathcal{O}^{\exp} ( \varepsilon)\equiv 0.$$
        Equation \eqref{eq:product_related_to_delta} implies $P_0 W_kP_0 + \mathcal{O}^{\exp} ( \varepsilon)\equiv 0$,
      for all $|\vep|$ small enough,  hence $P_0 W_kP_0= 0$. A similar argument leads to $(1-P_0) W_k(1-P_0)=0$. 

        \item \emph{From }$P_{k,\varepsilon}$ \emph{and} $W_{k}$ \emph{to} $\Pi_{k+1,\varepsilon}$ 

        We need to introduce another auxiliary object. \begin{definition}[Adjusting super operator $T$]\label{def:T}
    Consider $p_0(k)$, $g_j(k)$, and $\mathcal{C}_j$ being the same as those in the proof of Lemma \ref{lem:P_0_local}. For any $(b_0,q)$-periodic operator $W\in\mathcal{B}^{\exp}(\mathcal{H})$ with fibers $w(k)$, we define the operator $T(W)$ through its fibers:
    \begin{equation}\label{eq:T_definition}
    \begin{aligned}
    t(W)(k)\coloneq &\sum_{j=1}^L \frac{g_j(k)}{2\pi i}\times \oint_{\mathcal{C}_j} (z-h(k))^{-1}w(k) (z-h(k))^{-1} (1-p_{0}(k)) dz \\
   &
    -\sum_{j=1}^L \frac{g_j(k)}{2\pi i}\times\oint_{\mathcal{C}_j}(1-p_{0}(k))(z-h(k))^{-1}w(k) (z-h(k))^{-1}\, dz.
    \end{aligned}
    \end{equation}
\end{definition}

\begin{lemma}\label{lem:T_properties}
    The super operator $T$ has the following properties:
    \begin{enumerate}
        \item If $W$ is local and $(b_0,q)$-periodic, then $T(W)$ is local and $(b_0,q)$-periodic
        \item $P_0 T(W)P_0 =0 =  (1-P_0)T(W) (1-P_0) $
        \item $[H_{b_0},T(W)] = P_0W(1-P_0) + (1-P_0)WP_0$
    \end{enumerate}
\end{lemma}

\begin{proof}
    For the first part, we have that $T(W)$ is defined through its fibers, so it is $(b_0,q)$-periodic and local due to Lemma \ref{lem:local+periodic=analytic_fibers}.
    The second equality holds because $\forall k\in \mathbb{T}^2$, $p_0(k)$ is a spectral projection of $h(k)$, so $$p_0(k)^2 = p_0(k), \quad (1-p_0(k))^2 = (1-p_0(k)),\quad \mbox{and} \quad [h(k),p_0(k)]=0.$$
    For the third equality, we have that $h(k)$, $p(k)$, and $(z-h(k))^{-1}$ commute with each other, so $[H_{b_0},T(W)] = T([H_{b_0},W])$.
    Since the fibers of $[H_{b_0},W]$ are $[h(k),w(k)]$, changing $w(k)$ with $ h(k)w(k)-w(k)h(k)$ in Equation \eqref{eq:T_definition} leads to terms such as $h(k)(z-h(k))^{-1}$ and $(z-h(k))^{-1}h(k)$, which are equal to $-1+z(z-h(k))^{-1}$. 
    But, all terms related to $z(z-h(k))^{-1}$ simplify, obtaining that the fibers $T([H_{b_0},W])(k)$ are equal to:
    \[\begin{split}& \sum_{j=1}^L \frac{g_j(k)}{2\pi i} \times\oint_{\mathcal{C}_j}\big[ -w(k) (z-h(k))^{-1} (1-p_{0}(k)) + (z-h(k))^{-1} w(k) (1-p_{0}(k)) \big] dz \\ &
    -\sum_{j=1}^L \frac{g_j(k)}{2\pi i}\times\oint_{\mathcal{C}_j}\big[-(1-p_{0}(k))w(k) (z-h(k))^{-1} + (1-p_{0}(k)) (z-h(k))^{-1} w(k)\big]\, dz.\end{split}\]
    Now, the integrals and the sums can be reabsorbed using Equation \eqref{eq:fibers_of_P_0} to have that 
    \[\begin{split}t ( [ H_{b_0} ,W ] ) (k) = &-w(k)p_0(k) (1-p_0(k)) + p_0(k)w(k)(1-p_0(k))\\ & + (1-p_0(k))w(k) p_0(k) - (1-p_0(k)) p_0(k)w(k) .\end{split} \]
    In the end, the equality $p_0(k)(1-p_0(k))\equiv 0$ finishes the proof.
\end{proof}

        If $P_{k,\varepsilon}$ and $W_{k}$ satisfy all the properties from Conditions \ref{def:P_k_epsilon} and \ref{def:W_k}, then we define 
        \begin{equation}\label{eq:definition_PI_k_epsilon}
            \Pi_{k+1,\varepsilon} = P_{k,\varepsilon} -\varepsilon^k\O_\varepsilon(T(W_k)).
        \end{equation}
        It is a local operator since $T$ preserves locality (Lemma \ref{lem:T_properties}) and both operators are local. 
        
        Since $T$ preserves $(b_0,q)$-periodicity, Lemma \ref{lem:properties_O_epsilon} ensures that $\O_\varepsilon(T(W_k))$ is $(b_0+\varepsilon,q)$-periodic and by hypothesis, $P_{k,\varepsilon}$ does the same, so $\Pi_{k+1,\varepsilon}$ is also $(b_0+\varepsilon,q)$-periodic.
        As for the properties of the order, Lemma \ref{eq:product_related_to_delta}, the properties of Definitions \ref{def:P_k_epsilon} and Lemma \ref{lem:properties_O_epsilon} imply that
        \[\begin{split} [\O_\varepsilon \big (T(W_k)\big ),\O_\varepsilon (H_{b_0}) ] & = \O_\varepsilon (\big [T(W_k),H_{b_0}]\big ) + \mathcal{O}^{\exp}(\varepsilon) \\ & = \O_\varepsilon(P_0W_k(1-P_0)  + (1-P_0)W_kP_0)) + \mathcal{O}^{\exp}(\varepsilon) \\ &= \O_\varepsilon(W_k) +\mathcal{O}^{\exp}(\varepsilon).\end{split}\]
        This together with Equation \eqref{eq:auxiliary_equation} lead to:
        \[ \begin{split} [\Pi_{k+1,\varepsilon}, H_{b_0+\varepsilon}] &= [P_{k,\varepsilon}, H_{b_0+\varepsilon}] - \varepsilon^k [\O_\varepsilon \big (T(W_k)\big ),\O_\varepsilon (H_{b_0}) ] \\& = [P_{k,\varepsilon}, H_{b_0+\varepsilon}] - \varepsilon^k \O_\varepsilon(W_k) +\mathcal{O}^{\exp}(\varepsilon^{k+1}) = \mathcal{O}^{\exp}(\varepsilon^{k+1}) \end{split}\]
        Finally, consider
        \[\Pi_{k+1,\varepsilon}^2 = P_{k,\varepsilon}^2 -\varepsilon^k\O_\varepsilon(T(W_k)) P_{k,\varepsilon}  - P_{k,\varepsilon}\varepsilon^k\O_\varepsilon(T(W_k)) + \mathcal{O}^{\exp}(\varepsilon^{k+1}).\]
        Equation \eqref{eq:auxiliary_equation_2}, the second property of $T$ in Lemma \ref{lem:T_properties}, and Equation \eqref{eq:product_related_to_delta}, allow to write the middle terms as
        \[\begin{split}\O_\varepsilon(T(W_k)) P_{k,\varepsilon} + &P_{k,\varepsilon}\O_\varepsilon(T(W_k))  \\ &= \O_\varepsilon(T(W_k)) \O_\varepsilon (P_0) + \O_\varepsilon (P_0)\O_\varepsilon(T(W_k)) + \mathcal{O}^{\exp}(\varepsilon) \\ 
        &=\O_\varepsilon(T(W_k)P_0+P_0T(W_k) ) + \mathcal{O}^{\exp}(\varepsilon) \\ 
        &=\O_\varepsilon(T(W_k))+\mathcal{O}^{\exp}(\varepsilon).\end{split} \]
        Hence we have 
        \[\Pi_{k+1,\varepsilon}^2 = P_{k,\varepsilon}^2 - \varepsilon^k  (\O_\varepsilon(T(W_k))+\mathcal{O}^{\exp}(\varepsilon) ) +\mathcal{O}^{\exp}(\varepsilon^{k+1}) =\Pi_{k+1,\varepsilon} +\mathcal{O}^{\exp}(\varepsilon^{k+1}).\]
        Thus we obtained all the properties of $\Pi_{k+1,\varepsilon}$ and the proof of Proposition \ref{pro:Sequence_of_P_k_epsilon} is over.\qedhere
    \end{itemize}    
\end{proof}

%\begin{remark}[The role of $q$ in $b_0=\tfrac{2\pi p}{q}$]\label{rem:role_q}
 %   The relation between $q$ and $\varepsilon_1$ is chaotic. On one hand $P_0\triangle P_0$ is $(b_0,q)$-periodic, so $\mathcal{U}(P_0\triangle P_0) \mathcal{U}^{-1}$ admits a fiber decomposition obtained using Equations \eqref{eq:fiber_of_periodic_operator} and \eqref{eq:bilinear_superoperator_triangle_circ}:
  %  \begin{equation}
   %     \begin{split}
    %    (P_0\triangle P_0)(k)|_{jn+a,j'n+a'} = & \sum_{\alpha,\gamma\in\Z^2} e^{ik\cdot \alpha} e^{-i\pi\frac{p}{q}\alpha_2(j'+q\alpha_1) } i\varphi \big((j,0)-\gamma, \gamma-(j'+q\alpha_1,\alpha_2)\big) \\ &  \cdot \big[ P_0((j,0),\gamma) P_0(\gamma,(j'+q\alpha_1,\alpha_2))\big]\big|_{a,a'}
     %   \end{split}
    %\end{equation}
    %The $q$ present in the right argument of $\varphi$ implies that there is a linear dependence on $q$ and this suggests that $\| P_0\triangle P_0 \|$ may grow as $q$ increases.
    %On the other hand, the definition of $\varepsilon_1$ relies on the constraint $\|\O _\varepsilon(P_0)^2-\O_\varepsilon(P_0)\|<1/8$ which requires to control $\|\varepsilon \O_\varepsilon(P_0\triangle P_0)\|$. 
    %However, $\O_\varepsilon$ does not preserve the norm, so, even if $\| P_0\triangle P_0 \|$ grows when $q$ increase, it may happen that $\|\varepsilon O_\varepsilon(P_0\triangle P_0)\|$ does not, so we expect to have no explicit relation between $q$ and $\varepsilon_1$.
%\end{remark}

Now we are ready to construct the almost invariant projection $P_\varepsilon$ announced in Theorem \ref{thm:main_theorem} \rm{(ii)}. The idea is to put $P_\varepsilon = P_{k(\varepsilon),\varepsilon}$ for some function $k\colon(0,\varepsilon_1)\to \N$ with $k(\varepsilon)\to \infty$ as $\varepsilon\to 0$.
%To do so, we cannot directly use the sequence $P_{k,\varepsilon}$, but we will need an auxiliary sequence $\mathfrak{P}_{k,\vep}$. Then we will be able to define $P_\varepsilon $ as a specific element of the second sequence and its choice is linked to the convergence radius of a particular power series.

\begin{proposition}\label{pro:construction_P_epsilon}
     Consider the sequences $\{\varepsilon_k\}_{k\ge 1}$ and $P_{k,\varepsilon}$ obtained thanks to Proposition \ref{pro:Sequence_of_P_k_epsilon}. 
     Then, for all $|\varepsilon|< \varepsilon_1$, there is an orthogonal local projection $P_\vep=P_\vep^2=P_\vep^*\in \mathcal{B}^{\exp}(\mathcal{H})$ such that:
     \begin{itemize}
         \item It is $(b_0+\varepsilon,q)$-periodic
         \item $\lim_{\vep\rightarrow 0}\|P_\vep-\O_\vep(P_0)\|=0$
         \item For every $l\in\N$ it holds that $\lim_{\vep\rightarrow 0}\Big (|\vep|^{-l}\Vert [H_{b_0+\varepsilon},P_\vep]\Vert \Big ) =0$
     \end{itemize}
\end{proposition}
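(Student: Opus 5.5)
The plan is a diagonal (Borel-type) choice $P_\vep := P_{k(\vep),\vep}$, with $k(\vep)\to\infty$ slowly as $\vep\to0$. From Proposition \ref{pro:Sequence_of_P_k_epsilon}, for each $k$ the operator $P_{k,\vep}$ is defined for $|\vep|<\vep_k$. It is an orthogonal local projection, it is $(b_0+\vep,q)$-periodic, and it satisfies $[H_{b_0+\vep},P_{k,\vep}]=\mathcal{O}^{\exp}(\vep^k)$. By Lemma \ref{lem:locality_VS_continuity} this gives an operator-norm bound $\|[H_{b_0+\vep},P_{k,\vep}]\|\le C_k|\vep|^k$ for $|\vep|\le \vep_k$, with some constant $C_k$.

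I would also record $\|P_{k,\vep}-\O_\vep(P_0)\|\le D_k|\vep|$. This follows from \eqref{eq:auxiliary_equation_2} and Lemma \ref{lem:locality_VS_continuity}.

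\textbf{Choosing $k(\vep)$.} Replace $\vep_k$ by a strictly decreasing sequence $\tilde\vep_k\le\min\{\vep_k,\,1/(kC_k),\,1/(kD_k)\}$ with $\tilde\vep_1=\vep_1$ and $\tilde\vep_k\to0$. For $|\vep|<\vep_1$ set
\[ k(\vep)=\max\{k\ge1:\ |\vep|<\tilde\vep_k\}. \]
This is finite because $\tilde\vep_k\to0$, it is at least $1$, and $k(\vep)\to\infty$ as $\vep\to0$. Define $P_\vep=P_{k(\vep),\vep}$. Since $|\vep|<\tilde\vep_{k(\vep)}\le \vep_{k(\vep)}$, this $P_\vep$ is well defined. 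It inherits the orthogonal projection property, locality and $(b_0+\vep,q)$-periodicity pointwise in $\vep$. That is all the proposition asks for, since the decay rate is allowed to depend on $\vep$.

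\textbf{The two limits.} For the first, $\|P_\vep-\O_\vep(P_0)\|\le D_{k(\vep)}|\vep|\le D_{k(\vep)}\tilde\vep_{k(\vep)}\le 1/k(\vep)\to0$.

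For the second, fix $l$ and take $\vep$ small enough that $k(\vep)\ge l+1$. Then
\[ |\vep|^{-l}\|[H_{b_0+\vep},P_\vep]\|\le C_{k(\vep)}|\vep|^{k(\vep)-l}\le C_{k(\vep)}|\vep|\le 1/k(\vep)\to0, \]
using $|\vep|<1$.

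\textbf{Main obstacle.} The $\mathcal{O}^{\exp}(\vep^k)$ constants, and the exponential decay rates, depend on $k$ with no uniform control, so one cannot simply take $k\to\infty$. The diagonal choice above absorbs exactly this dependence. What must be checked is that the operator-norm bounds $C_k$ and $D_k$ really are finite for each fixed $k$, uniformly for $|\vep|\le\vep_k$. This follows from the definition of $\mathcal{O}^{\exp}$ (constants independent of $\vep$) together with the Schur-type bound of Lemma \ref{lem:locality_VS_continuity}. One possible subtlety is that the neighbourhood of validity of the $\mathcal{O}^{\exp}$ estimates might be smaller than $\vep_k$; in that case $\tilde\vep_k$ is shrunk further to fit inside it.
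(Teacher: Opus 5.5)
Your argument is correct, modulo two trivial bookkeeping points noted at the end, and it takes a more direct route than the paper.

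\textbf{Comparison with the paper.} The paper never builds the commutator constants into the cutoff. Instead it does the following:
\begin{itemize}
\item It bounds the consecutive differences $\|P_{k,\varepsilon}-P_{k-1,\varepsilon}\|\le d_k|\varepsilon|^{k-1}$.
\item It extends the sequence to all $0<|\varepsilon|<\varepsilon_1$ through the auxiliary $\mathfrak{P}_{k,\varepsilon}$, which equals $P_{1,\varepsilon}$ where $P_{k,\varepsilon}$ is not defined, and then telescopes.
\item It splits into two cases according to whether $\sum_k c_k z^k$ has positive radius of convergence. In the divergent case it truncates at $N(\varepsilon)$, the largest $K$ with $c_k|\varepsilon|^k\le 2^{-k}$ for all $k\le K$. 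This is the same Borel-type diagonal idea as yours, but applied to the differences.
\item It gets the bound on $[H_{b_0+\varepsilon},P_\varepsilon]$ from the bound for a fixed $\mathfrak{P}_{l,\varepsilon}$ together with $\|P_\varepsilon-\mathfrak{P}_{l,\varepsilon}\|=O(|\varepsilon|^l)$.
\end{itemize}

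By putting $C_k$ and $D_k$ directly into $\tilde\varepsilon_k$, you avoid the telescoping and the case distinction altogether. You still get $P_\varepsilon=P_{k(\varepsilon),\varepsilon}$ and $\|P_\varepsilon-\O_\varepsilon(P_0)\|\to0$, which is all the later sections (Chern number, St\v{r}eda formula, Parseval frames) use.

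What the paper's route gives in addition is $\|P_\varepsilon-P_{l,\varepsilon}\|=O(|\varepsilon|^l)$ for every fixed $l$. So its $P_\varepsilon$ is determined up to $O(|\varepsilon|^\infty)$ regardless of the precise cutoff. The proposition does not need this. You could recover it by also imposing $d_j\tilde\varepsilon_k\le 2^{-j}$ for all $j\le k$.

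\textbf{Bookkeeping.}
\begin{itemize}
\item Requiring both $\tilde\varepsilon_1=\varepsilon_1$ and $\tilde\varepsilon_1\le\min\{1/C_1,1/D_1\}$ is inconsistent in general. Impose the constraints involving $C_k$ and $D_k$ only for $k\ge 2$; the limits only see small $\varepsilon$.
\item $k(0)$ is undefined, so set $P_\varepsilon=P_0$ at $\varepsilon=0$.
\end{itemize}

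Your concern about the range of validity of the $\mathcal{O}^{\exp}$ bounds is harmless. Both $\|[H_{b_0+\varepsilon},P_{k,\varepsilon}]\|$ and $\|P_{k,\varepsilon}-\O_\varepsilon(P_0)\|$ are bounded uniformly in $\varepsilon$ by Lemma~\ref{lem:locality_VS_continuity}. Hence enlarging $C_k$ and $D_k$ makes your norm bounds hold on all of $|\varepsilon|<\varepsilon_k$, with no need to shrink $\tilde\varepsilon_k$.
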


\begin{proof}

        The fact that $P_{k,\vep}-P_{k-1,\vep} = \mathcal{O}^{\exp} (\varepsilon^{k-1})$ does not imply that $\|P_{k,\vep}-P_{k-1,\vep}\| $ goes to zero as $k\to \infty$. 
        %In fact, Lemma \ref{lem:locality_VS_continuity} states that the norm of a local operator is bounded by a constant that depends on the exponential weight that in this case depends on $k$. 
        The best estimate we can give comes from Lemma \ref{lem:locality_VS_continuity} and is actually the following: given $k\geq 1$ we have
        $$\| P_{k,\vep}-P_{k-1,\vep}\| \leq  d_k\, |\vep|^{k-1},\quad  0<|\vep|<\vep_k,\quad P_{0,\vep}\coloneq P_0.$$
        Now, define
        $$\mathfrak{P}_{k,\vep}= \begin{cases}
        P_{1,\vep} & \mbox{if } \vep_{k}\leq|\vep|< \vep_1,\\
        P_{k,\vep}& \mbox{if } 0<|\vep|< \vep_{k}.
        \end{cases} $$
        This implies
        $$\mathfrak{P}_{k+1,\vep}-\mathfrak{P}_{k,\vep}= \begin{cases}
        0 & {\rm if}\; \vep_{k}\leq|\vep|< \vep_1,\\
        P_{1,\vep}-P_{k,\vep} & {\rm if}\; \vep_{k+1}\leq|\vep|< \vep_{k},\\
        P_{k+1,\vep}-P_{k,\vep} & {\rm if}\; 0<|\vep|< \vep_{k+1}.
        \end{cases} $$
        Up to adjusting the constants to take care of the case $\vep_{k+1}\leq|\vep|< \vep_{k}$, there exists $\tilde{c}_k>0$ such that 
        $$\Vert \mathfrak{P}_{k+1,\vep}-\mathfrak{P}_{k,\vep}\Vert \, \leq \tilde{c}_k\, |\vep|^k,\quad 0<|\vep| < \vep_1,\quad k\geq 1.$$
        Applying the triangle inequality involving $\mathfrak{P}_{2k+1,\vep}$ leads to the existence of $c_k>0$ such that 
        \begin{equation*}\label{eq:norm_estimate_auxiliary_projection}
            \Vert \mathfrak{P}_{2k+2,\vep}-\mathfrak{P}_{2k,\vep}\Vert \, \leq c_k\, |\vep|^{2k},\quad 0<|\vep| <\vep_1,\quad k\geq 1.
        \end{equation*}

        Let us first consider the case in which the power series $\sum_{k\geq 1} c_kz^k$ has a positive convergence radius $r>0$, then for $0<|\vep|<\min\{\vep_1,r\}$ the sequence $\{\mathfrak{P}_{2k,\vep}\}_{k\geq 1}$ is Cauchy and 
        $$\sum_{k\geq l} \big \Vert \mathfrak{P}_{2k+2,\vep}-\mathfrak{P}_{2k,\vep}\big \Vert \leq C\, |\vep|^l.$$
        For $0< |\vep| <\min\{\vep_1,r\}$ we set 
        $$P_\vep \coloneq \mathfrak{P}_{2\lfloor |\vep|^{-1}\rfloor,\vep} $$
        where $\lfloor x\rfloor$ denotes the largest integer smaller than $x$, and the first property of $P_\varepsilon$ follows from the properties of $P_{k,\varepsilon}$. 
        Then, for a fixed $l\geq 1$ and $\vep$ near zero, we have $\lfloor |\vep|^{-1}\rfloor>l$, so 
        \begin{equation}\label{oct2}
        \Vert P_\vep- \mathfrak{P}_{2l,\vep}\Vert\leq \sum_{k\geq 2l} \Vert\mathfrak{P}_{2k+2,\vep}-\mathfrak{P}_{2k,\vep}\Vert \leq C\, |\vep|^{l}. 
        \end{equation}
        Since $\mathfrak{P}_{2l,\varepsilon} = P_{2l, \varepsilon}$ for $\varepsilon$ close enough to zero, when $l=1$, Equation \eqref{eq:relation_P_PI} and the estimate of Lemma \ref{lem:locality_VS_continuity} ensure that
        $$\|P_\varepsilon-\O_\varepsilon(P_0)\| \le \|P_\varepsilon-P_{2,\varepsilon}\|+\|P_{2,\varepsilon}-\O_\varepsilon(P_0)\|\le C'|\varepsilon|,$$
        obtaining the second property of $P_\varepsilon$. 
        For the third property, for any $\lfloor |\vep|^{-1}\rfloor>l$ it holds that $[H_{b_0+\varepsilon}, \mathfrak{P}_{2l,\vep}] = \mathcal{O}^{\exp}(\varepsilon^{2l})$ and the estimate of Lemma \ref{lem:locality_VS_continuity} ensures that $\|[H_{b_0+\varepsilon}, \mathfrak{P}_{2l,\vep}]\| \le D|\varepsilon|^{2l}$. 
        Thus for a fixed $l\in\N$ and also using \eqref{oct2} we obtain
        $$\|[H_{b_0+\varepsilon}, P_\vep]\|\le \|[H_{b_0+\varepsilon}, \mathfrak{P}_{2l,\vep}]\|+\|[H_{b_0+\varepsilon}, P_\vep-\mathfrak{P}_{2l,\vep}]\|\le C'|\vep|^l,\quad \forall l\geq 1.$$

        Now let us consider the case in which the power series $\sum_{k\geq 1} c_k z^k$ has a convergence radius equal to zero. In this case, for every $0<|\vep|< \vep_1$ we set $$ N(\vep)= \begin{cases}
            1 & \mbox{if }\quad  c_k|\vep|^k > 2^{-k},\quad \forall k\geq 1, \\ \mbox{largest }K\in\N &\mbox{such that }c_k|\vep|^k \leq 2^{-k},\quad \forall\, 1\leq k\leq K.
        \end{cases} $$ 
        Note that $N(\vep)<\infty$ for all $\vep$, otherwise the power series would be dominated by the convergent series with elements $2^{-k}$. 
        Moreover, $N(\vep)$ is non-increasing in $|\varepsilon|$ and goes to infinity as $|\vep|\downarrow 0$.

        Now, set $$P_\vep= \mathfrak{P}_{2N(\vep)+2,\vep},$$ and the first property of $P_\varepsilon$ follows naturally.
        Fix $l\geq 1$. 
        There exists $0<\vep(l)<\vep_1$ such that $N(\vep(l))>l$, thus $N(\vep)\geq N(\vep(l))>l$ for all $0<|\vep|<\vep(l)$. Then for such $\vep$ it holds that 
        $$P_\vep-\mathfrak{P}_{2l,\vep}=\sum_{k=l}^{N(\vep)}\big (\mathfrak{P}_{2k+2,\vep}-\mathfrak{P}_{2k,\vep}\big )$$
        and 
        $$\Vert P_\vep-\mathfrak{P}_{2l,\vep}\Vert \leq \sum_{k=l}^{N(\vep)}|\vep|^k\, 2^{-k}\leq |\vep|^{l}.$$
        As done before, when $l=1$, this estimate proves the second property of $P_\varepsilon$. 
        Moreover, this shows that  
        $$\limsup_{\vep\rightarrow 0}\Big ( |\vep|^{-l}\Vert P_\vep-\mathfrak{P}_{2l,\vep}\Vert\Big )  \leq 1.$$
        Then 
        $$|\vep|^{-l}\Vert P_\vep-\mathfrak{P}_{l,\vep}\Vert\leq |\vep|^{-l}\Vert P_\vep-\mathfrak{P}_{2l,\vep}\Vert +|\vep|^{-l}\Vert \mathfrak{P}_{2l,\vep}-\mathfrak{P}_{l,\vep}\Vert.$$
        Since $\mathfrak{P}_{2l,\vep}-\mathfrak{P}_{l,\vep} = P_{2l,\vep}-P_{l,\vep} $ for $\varepsilon$ near zero, it holds by the triangle inequality that $\mathfrak{P}_{2l,\vep}-\mathfrak{P}_{l,\vep} =l\, \mathcal{O}^{\exp}(\varepsilon^l)$, and the estimate of Lemma \ref{lem:locality_VS_continuity} ensures that $\|\mathfrak{P}_{2l,\vep}-\mathfrak{P}_{l,\vep} \| \le C |\varepsilon|^l$, where the constant $C$ is  $l$-dependent. Therefore
        $$\limsup_{\vep\rightarrow 0}\Big (|\vep|^{-l}\Vert P_\vep-\mathfrak{P}_{l,\vep}\Vert\Big )  \leq 1 +\limsup_{\vep\rightarrow 0}\Big (|\vep|^{-l}\Vert \mathfrak{P}_{2l,\vep}-\mathfrak{P}_{l,\vep}\Vert\Big )\leq C_l<\infty.$$

        All the previous estimates imply that
        \begin{align*}
            \|[H_{b_0+\varepsilon}, P_\vep]\|\le \|[H_{b_0+\varepsilon}, \mathfrak{P}_{l,\vep}]\|+\|[H_{b_0+\varepsilon}, P_\vep-\mathfrak{P}_{l,\vep}]\| \le C_l'|\vep|^l,\quad \forall l\geq 1.\tag*{\qedhere}
        \end{align*}        
\end{proof}

\begin{remark}
    Although it could be desirable to have $$P_\varepsilon=\O_\varepsilon(P_0)+\mathcal{O}^{\exp}(\varepsilon) \quad  \mbox{and} \quad [H_{b_0+\varepsilon},P_\varepsilon]=\mathcal{O}^{\exp}(\varepsilon^l) \quad \forall l\in \N,$$ 
    with an exponential decay rate independent of $\vep$, we expect that this property does not hold true in the locally gapped case, in the sense that the exponential decay rate of $P_\vep$ might go to zero with $\vep$. The reason is that $P_\varepsilon$ is defined to be equal to $P_{k(\varepsilon),\varepsilon}$ for increasing values of $k(\varepsilon)$ when $\vep\to 0$. Since $P_{k+1,\varepsilon}$ is obtained from $P_{k,\varepsilon}$ after using a resolvent, Lemma \ref{lem:locality_of_resolvent} ensures that $P_{k+1,\varepsilon}$ is still local but with a smaller exponential decay rate $a(k)$. This means that at each step we lose a portion of the exponential decay rate and the proof of Lemma \ref{lem:locality_of_resolvent} suggests that $a(k) \to 0$ as $k \to \infty$. %This makes it extremely hard, if not impossible, to estimate the hopping matrices of $P_\varepsilon$ with $|\varepsilon|^l$.
\end{remark}

\subsection{The strictly gapped case}
\begin{lemma}\label{lem:P_epsilon_strictly_gapped}
    In the strictly gapped case, we can define $P_\vep$ to be a Riesz spectral projection of $H_{b_0+\vep}$,    satisfying the same properties as in Proposition \ref{pro:construction_P_epsilon} but with the optimal constraint $[H_{b_0+\varepsilon},P_\varepsilon]=0$. 
\end{lemma}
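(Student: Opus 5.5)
In the strictly gapped case the spectral island $\sigma_0$ of $H_{b_0}$ is separated from the rest of the spectrum by a genuine gap. So there is a positively oriented contour $\mathcal{C}$ (a finite union of circles $\mathcal{C}_j$ as in the proof of Lemma \ref{lem:P_0_local}) enclosing $\sigma_0$, disjoint from $\sigma(H_{b_0})$ and at distance $d>0$ from it. Then $P_0=\frac{1}{2\pi i}\oint_{\mathcal C}(z-H_{b_0})^{-1}dz$. The plan is to show that $\mathcal{C}$ stays in the resolvent set of $H_{b_0+\vep}$ for small $|\vep|$. We then set
\[
P_\vep\coloneq \frac{1}{2\pi i}\oint_{\mathcal C}(z-H_{b_0+\vep})^{-1}dz ,
\]
which is an orthogonal projection commuting with $H_{b_0+\vep}$, and check the listed properties.

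\textbf{Step 1: stability of the gap.} This is the main obstacle, since $H_{b_0+\vep}-H_{b_0}$ is not small in norm. For $z\in\mathcal C$, Lemma \ref{lem:locality_of_resolvent} gives that $R_0(z)\coloneq (z-H_{b_0})^{-1}$ is local, with constants uniform in $z\in\mathcal C$ by compactness. Define the approximate resolvent $\widetilde R_\vep(z)\coloneq \O_\vep(R_0(z))$. Lemma \ref{lemmahc2} gives
\[
(z-H_{b_0+\vep})\widetilde R_\vep(z)=\O_\vep\big((z-H_{b_0})R_0(z)\big)+\mathcal{O}^{\exp}(\vep)=1+\mathcal{O}^{\exp}(\vep),
\]
and similarly on the other side, using $\O_\vep(1)=1$. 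The $\mathcal{O}^{\exp}(\vep)$ remainder is uniform in $z$, so Lemma \ref{lem:locality_VS_continuity} bounds its norm by $C|\vep|$. For $|\vep|<\vep'$ with $C\vep'<1/2$, a Neumann series shows that $z-H_{b_0+\vep}$ is invertible for all $z\in\mathcal C$, with
\[
(z-H_{b_0+\vep})^{-1}=\widetilde R_\vep(z)\big(1+\mathcal{O}^{\exp}(\vep)\big)^{-1}.
\]
This resolvent is local, uniformly in $z$, either by Lemma \ref{lem:locality_of_resolvent} applied directly to the local operator $H_{b_0+\vep}$ or through the Neumann series via Corollary \ref{cor:locality_VS_functional_calculus}.

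\textbf{Step 2: properties of $P_\vep$.} The Riesz integral is idempotent and self-adjoint, because $H_{b_0+\vep}$ is self-adjoint by Lemma \ref{lem:properties_O_epsilon} and $\mathcal C$ is symmetric under conjugation. It commutes exactly with $H_{b_0+\vep}$, so $[H_{b_0+\vep},P_\vep]=0$, and the third bullet of Proposition \ref{pro:construction_P_epsilon} holds trivially. Locality follows from Step 1 and integration over a compact contour. Periodicity holds because $H_{b_0+\vep}$ is $(b_0+\vep,1)$-periodic, hence $(b_0+\vep,q)$-periodic, so its resolvent commutes with the magnetic translations $\tau_{b_0+\vep,\gamma}$, $\gamma\in q\Z\times\Z$. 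Finally, Step 1 gives
\[
P_\vep=\frac{1}{2\pi i}\oint_{\mathcal C}\O_\vep(R_0(z))\,dz+\mathcal{O}^{\exp}(\vep)=\O_\vep(P_0)+\mathcal{O}^{\exp}(\vep),
\]
since $\O_\vep$ is linear and acts entrywise on hopping matrices, so it commutes with the $z$-integral. Lemma \ref{lem:locality_VS_continuity} then gives $\|P_\vep-\O_\vep(P_0)\|\le C|\vep|\to0$.

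Here the decay rate is uniform in $\vep$, unlike in the local island case, since no iteration is involved. The delicate point is only the uniformity in $z\in\mathcal C$ of the constants in Lemma \ref{lemmahc2}. This follows because the locality constants of $R_0(z)$ depend continuously on $\operatorname{dist}(z,\sigma(H_{b_0}))\ge d$.
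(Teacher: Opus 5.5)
Your proposal is correct and follows essentially the same route as the paper: approximate $(z-H_{b_0+\vep})^{-1}$ on the gap contour by $\O_\vep\big((z-H_{b_0})^{-1}\big)$, use Lemma \ref{lemmahc2} to get $\Id+\mathcal{O}^{\exp}(\vep)$ uniformly in $z$, invert, and integrate to obtain $P_\vep=\O_\vep(P_0)+\mathcal{O}^{\exp}(\vep)$ with an $\vep$-independent decay rate. Your two-sided inverse check and your first option for locality, namely Lemma \ref{lem:locality_of_resolvent} applied directly to $H_{b_0+\vep}$ (whose hopping bounds do not depend on $\vep$), are slightly cleaner than the paper's use of Corollary \ref{cor:locality_VS_functional_calculus} for the non-self-adjoint $A_\vep$; one correction is that $\mathcal{C}$ should be a single contour through the gaps at $c$ and $d$, as in \eqref{eq:strictly_gaped_zero_projection}, rather than the $k$-local contours $\mathcal{C}_j$ of Lemma \ref{lem:P_0_local}, which need not avoid $\sigma(H_{b_0})$.
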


\begin{proof}
    We can define $P_0$ as we did in Equation \eqref{eq:strictly_gaped_zero_projection}, so that
    $$P_0=\frac{1}{2\pi i}\oint_{\mathcal{C}} (z-H_{b_0})^{-1} dz.$$
    For every $z\in \mathcal{C}$, Equation \eqref{eq:product_related_to_delta} and Lemma \ref{lem:locality_of_resolvent} imply that
    $$(z-H_{b_0+\varepsilon}) \O_\vep \big ((z-H_{b_0})^{-1}\big )=(z-\O_\vep(H_{b_0})) \O_\vep \big ((z-H_{b_0})^{-1}\big )=\Id +\mathcal{O}^{\exp}(\vep) \eqcolon \Id +A_\vep,$$
    where the error $A_\vep$ on the right hand side is local and of order $\vep$, uniformly in $z$ when $z\in \mathcal{C}$.
Let $f(x)=x/(1+x)$. Then if $|\vep|$ is small enough and using Corollary \ref{cor:locality_VS_functional_calculus} we have 
$$\big (\Id + A_\vep\big )^{-1}=\Id -f\big (A_\vep\big )=\Id + \mathcal{O}^{\exp} (\vep) .$$
    Hence for $\vep$ close enough to zero, $\mathcal{C}$ lies in the resolvent set of $H_{b_0+\varepsilon}$ and 
    $$(z-H_{b_0+\varepsilon})^{-1}=\O_\vep \big ((z-H_{b_0})^{-1}\big )+\mathcal{O}^{\exp}(\vep).$$
    Now we may define 
    $$P_\vep =\frac{1}{2\pi i}\oint_{\mathcal{C}} (z-H_{b_0+\varepsilon})^{-1} dz=\O_\vep(P_0) +\mathcal{O}^{\exp}(\vep)$$
    and all the desired properties hold due to functional calculus. This also shows that in the strictly gapped case, the rate of the exponential localization of $P_\vep$ is independent of $\vep$. 
\end{proof}

\section{Equality of the Chern numbers}\label{sec:Chern_numbers}

In this section we concentrate on the case of a local spectral gap, because the strictly gapped case was essentially treated in \cite{CorneanMonacoMoscolari2021}.
\begin{lemma}\label{lem:Ch_hopping_matrices_and_reduction}
    If $P\in \mathcal{B}^{\exp} (\mathcal{H})$ is a local projection, then $\forall \gamma\in\Z^2$ we have 
    \begin{equation} \label{eq:hopping_matrices_Chern_number}
        P\big[ [X_1,P],[X_2,P]\big] (\gamma,\gamma)= \sum_{\gamma',\gamma''\in\Z^2} 2\varphi(\gamma''-\gamma,\gamma'-\gamma'')P(\gamma ,\gamma')P(\gamma',\gamma'')P(\gamma'',\gamma).
    \end{equation}
    In addition, if $P$ is $(\varepsilon,q)$-periodic, we have
    \begin{equation}\label{oct6}
        {\rm Ch} (P) = \frac{2\pi i}{q} \sum_{j=0}^{q-1} \Tr\big ( P\big[[X_1,P],[X_2,P]\big] ((j,0),(j,0)) \big).
    \end{equation}
\end{lemma}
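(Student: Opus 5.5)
The plan is to compute the hopping matrices of the commutators directly and then use the periodicity of the kernel to reduce the averaged trace to one fundamental cell of $q\Z\times\Z$.

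\emph{Step 1 (formula \eqref{eq:hopping_matrices_Chern_number}).} Since $X_j$ acts by multiplication with $\gamma_j$, we have
$$[X_j,P](\gamma,\gamma')=(\gamma_j-\gamma'_j)\,P(\gamma,\gamma').$$
All sums below converge absolutely, because the exponential decay of $P$ dominates the polynomial weights. Using this, I compute
$$\big([X_1,P][X_2,P]\big)(\gamma',\gamma)=\sum_{\gamma''}(\gamma'_1-\gamma''_1)(\gamma''_2-\gamma_2)P(\gamma',\gamma'')P(\gamma'',\gamma).$$
I subtract the same expression with the indices $1\leftrightarrow 2$ exchanged. Then I multiply on the left by $P(\gamma,\gamma')$ and sum over $\gamma'$. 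Set $a=\gamma'-\gamma''$ and $b=\gamma''-\gamma$. The resulting scalar weight is $a_1b_2-a_2b_1$. By the definition of $\varphi$, this equals $2\varphi(b,a)=2\varphi(\gamma''-\gamma,\gamma'-\gamma'')$, which gives \eqref{eq:hopping_matrices_Chern_number}. This step is routine.

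\emph{Step 2 (covariance of the kernel).} Assume $P$ is $(\varepsilon,q)$-periodic. In \eqref{eq:condition_commuting_with_mag_translations} I replace $\gamma$ by $\gamma+\alpha$, with $\alpha\in q\Z\times\Z$. Using bilinearity and $\varphi(\alpha,\alpha)=0$, this yields
$$P(\gamma+\alpha,\gamma'+\alpha)=e^{i\varepsilon\varphi(\gamma-\gamma',\alpha)}P(\gamma,\gamma').$$
Now shift all three summation variables in \eqref{eq:hopping_matrices_Chern_number} by $\alpha$. The weight $\varphi(\gamma''-\gamma,\gamma'-\gamma'')$ depends only on differences, so it is unchanged. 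The three scalar phases combine to
$$e^{i\varepsilon\varphi\left((\gamma-\gamma')+(\gamma'-\gamma'')+(\gamma''-\gamma),\,\alpha\right)}=1.$$
Hence the matrix $P[[X_1,P],[X_2,P]](\gamma,\gamma)$ is exactly invariant under $\gamma\mapsto\gamma+\alpha$. In particular, its trace depends only on the class of $\gamma$ modulo $q\Z\times\Z$. The representatives of these classes are $(j,0)$ with $0\le j\le q-1$.

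\emph{Step 3 (the limit).} By locality, the diagonal entries are uniformly bounded in $\gamma$. The box $[-L,L]^2$ decomposes into complete cells of $q\Z\times\Z$, each containing the $q$ representatives exactly once, plus a leftover strip of $O(qL)$ sites. The contribution of the leftover strip, divided by $(2L)^2$, tends to $0$. The full cells contribute $(2L)^2/q\,(1+o(1))$ copies of $\sum_{j=0}^{q-1}\Tr(\cdot)((j,0),(j,0))$. Multiplying by $2\pi i$ gives \eqref{oct6}.

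I expect the only mild obstacle to be the bookkeeping of box sizes that are not multiples of $q$, together with making sure the invariance in Step 2 is exact rather than up to a phase. The telescoping of the phases settles the latter, and the uniform bound settles the former.
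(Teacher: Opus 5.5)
Your proof is correct and follows essentially the same route as the paper: you compute the kernels of $[X_j,P]$ directly, identify the weight with $2\varphi(\gamma''-\gamma,\gamma'-\gamma'')$, and use the covariance relation \eqref{eq:condition_commuting_with_mag_translations} so that the three magnetic phases cancel, before averaging over cells of $q\Z\times\Z$. Your bookkeeping of the box $[-L,L]^2$ in the limit is, if anything, more careful than the paper's reindexing.
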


\begin{proof}
    Consider the multiplication operators $X_1,X_2$ such that $(X_j f) (\gamma) = \gamma_j f(\gamma) $. 
    %Their hopping matrices are
    %$$X_j(\gamma,\gamma') = \begin{cases}
 %       \gamma_j\Id & \mbox{if } \gamma=\gamma', \\
  %      0&\mbox{otherwise,}
   % \end{cases}%\quad X_2(\gamma,\gamma') = \begin{cases}
        %\gamma_2\Id & \mbox{if } \gamma=\gamma', \\
        %0&\mbox{otherwise.}
    %\end{cases}
   % $$
    Then
\begin{equation}\label{eq:hopping_matrices_commutators_with_X}[X_j,P](\gamma,\gamma') = (\gamma_j-\gamma_j')P(\gamma,\gamma').\end{equation}
    We now prove Equation \eqref{eq:hopping_matrices_Chern_number}:

    \[\begin{split}
        \begin{split}
            P\big[&[X_1,P],[X_2,P]\big](\gamma,\gamma)=\sum_{\gamma'\in\Z^2}P(\gamma,\gamma') \big[[X_1,P],[X_2,P]\big](\gamma',\gamma) \\
            &= \sum_{\gamma',\gamma'' \in\Z^2} P(\gamma,\gamma')\big[ [X_1,P](\gamma',\gamma'') [X_2,P](\gamma'',\gamma) -  [X_2,P](\gamma',\gamma'') [X_1,P](\gamma'',\gamma) \big ] \\
            &=\sum_{\gamma',\gamma''\in\Z^2} P(\gamma,\gamma') \big[ (\gamma_1'-\gamma_1'')P(\gamma',\gamma'')(\gamma_2''-\gamma_2) - (\gamma_2'-\gamma_2'')P(\gamma',\gamma'')(\gamma_1''-\gamma_1) \big]P(\gamma'',\gamma)  \\
            &= \sum_{\gamma',\gamma''\in\Z^2} 2\varphi(\gamma''-\gamma,\gamma'-\gamma'') P(\gamma,\gamma')P(\gamma',\gamma'')P(\gamma'',\gamma).
        \end{split}
    \end{split}\]

    Suppose that $P$ is $(\varepsilon,q)$-periodic and that $\gamma= (j+q\alpha_1,\alpha_2)$ with $j\in\{0,\dots,q-1\}$. Then Equation \eqref{eq:condition_commuting_with_mag_translations} ensures that
    \[\begin{split}
        P(\gamma,\gamma') &= P((j,0),\gamma '-(q\alpha_1,\alpha_2))e^{i\varepsilon\varphi(\gamma'-(j,0),-(q\alpha_1,\alpha_2))} \\
        P(\gamma',\gamma'')&=P(\gamma'-(q\alpha_1,\alpha_2),\gamma''-(q\alpha_1,\alpha_2)) e^{i\varepsilon\varphi (\gamma'-\gamma''+(q\alpha_1,\alpha_2),(q\alpha_1,\alpha_2))} \\
        P(\gamma'',\gamma)&=P(\gamma''-(q\alpha_1,\alpha_2),(j,0)) e^{i\varepsilon\varphi (\gamma''-(j,0), (q\alpha_1,\alpha_2))}.
    \end{split}\]
    Imposing these equalities in Equation \eqref{eq:hopping_matrices_commutators_with_X} and renaming $\gamma'-(q\alpha_1,\alpha_2)=\gamma'''$ and $\gamma''-(q\alpha_1,\alpha_2)=\gamma''''$ yields
    \[\begin{split}P&\big[ [X_1,P], [X_2,P]\big] (\gamma,\gamma)\\  &=\sum_{\gamma'''',\gamma''''\in\Z^2} 2\varphi(\gamma''''-(j,0),\gamma'''-(j,0))P((j,0) ,\gamma''')P(\gamma''',\gamma'''')P(\gamma'''',(j,0)) \\ &=P\big[ [X_1,P], [X_2,P]\big] ((j,0),(j,0)).\end{split}\]
    This allows the following simplification
    \begin{align*}
        \mathrm{Ch}(P) &= 2\pi i \lim_{L\to \infty} \frac{1}{(2L)^2} \sum_{\gamma\in [-L,L]^2}  \Tr\big ( P\big[[X_1,P],[X_2,P]\big] (\gamma,\gamma) \big) \\
        &=2\pi i \lim_{L\to \infty} \frac{1}{q4L^2}  \sum_{\alpha\in [-L,L]^2} \sum_{j=0}^{q-1}  \Tr\big ( P\big[[X_1,P],[X_2,P]\big] ((j+q\alpha_1,\alpha_2),(j+q\alpha_1,\alpha_2)) \big) \\
        &=\frac{2\pi i}{q} \sum_{j=0}^{q-1}  \Tr\big ( P\big[[X_1,P],[X_2,P]\big] ((j,0),(j,0)) \big).\tag*{\qedhere}
    \end{align*}
\end{proof}

Before we get additional results on the Chern number, we need an auxiliary result involving the relation between $P_{k,\vep}$ and $P_{k,\vep+\eta}$.

\begin{lemma}\label{lem:Order_P_PI_in_epsilon}
    In the framework of Proposition \ref{pro:Sequence_of_P_k_epsilon}, for any $k\in\N$, $0\le|\varepsilon|<\varepsilon_k$, and every $\eta\in\R$  such that $|\varepsilon+\eta| < \varepsilon_k$ we have:
    \begin{equation*}
        P_{k,\vep+\eta} = \O_\eta(P_{k,\vep}) + \mathcal{O}^{\exp}(\eta) \quad \mbox{and} \quad \Pi_{k,\vep+\eta} =\O_\eta( \Pi_{k,\vep}) + \mathcal{O}^{\exp}(\eta).
    \end{equation*}
\end{lemma}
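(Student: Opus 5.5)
We will prove both identities simultaneously by induction on $k$, following the same three steps used in Proposition \ref{pro:Sequence_of_P_k_epsilon}: $\Pi_{k}\to P_k$ (Riesz formula), $P_k\to W_k$, and $(P_k,W_k)\to\Pi_{k+1}$. Throughout, we need a ``shifted'' version of Lemma \ref{lemmahc2}. Applying that lemma to $\O_\vep(A)$ and $\O_\vep(B)$, and using $\O_\eta\circ\O_\vep=\O_{\vep+\eta}$ from Lemma \ref{lem:properties_O_epsilon}, gives
\[
\O_\eta(A')\O_\eta(B')=\O_\eta(A'B')+\mathcal{O}^{\exp}(\eta)
\]
for local $A',B'$. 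The constants stay uniform as long as the decay rates and constants of $A',B'$ are uniform in $\vep$. We will use this repeatedly together with the trivial fact $\O_\eta(\mathcal{O}^{\exp}(\eta))=\mathcal{O}^{\exp}(\eta)$.

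\emph{Base case.} $\Pi_{1,\vep+\eta}=\O_{\vep+\eta}(P_0)=\O_\eta(\O_\vep(P_0))=\O_\eta(\Pi_{1,\vep})$ exactly.

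\emph{Step $\Pi_k\Rightarrow P_k$.} Suppose $\Pi_{k,\vep+\eta}=\O_\eta(\Pi_{k,\vep})+\mathcal{O}^{\exp}(\eta)$. We will show that for $z$ on $|z-1|=1/2$,
\[
(z-\Pi_{k,\vep+\eta})^{-1}=\O_\eta\big((z-\Pi_{k,\vep})^{-1}\big)+\mathcal{O}^{\exp}(\eta),
\]
uniformly in $z$. The argument mirrors the proof of Lemma \ref{lem:P_epsilon_strictly_gapped}. First, $(z-\Pi_{k,\vep+\eta})\O_\eta\big((z-\Pi_{k,\vep})^{-1}\big)=\Id+\mathcal{O}^{\exp}(\eta)$ by the shifted product rule. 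We then invert $\Id+A_\eta$ using Corollary \ref{cor:locality_VS_functional_calculus}, and multiply by the resolvent $(z-\Pi_{k,\vep+\eta})^{-1}$, which is local by Lemma \ref{lem:locality_of_resolvent}. Integrating over $z$ yields the claim for $P_k$. If $|\eta|$ is not small, the statement is trivial: both sides are local uniformly, so their difference is bounded by $C e^{-w|\gamma-\gamma'|}\le C'|\eta|e^{-w|\gamma-\gamma'|}$ for $|\eta|\ge\eta_0$, given that $|\eta|<2\vep_k$.

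\emph{Step $P_k\Rightarrow W_k$ (the crux).} Here $W_k$ is independent of $\vep$, so nothing needs to be propagated. For $\Pi_{k+1}$ we compute
\[
\Pi_{k+1,\vep+\eta}-\O_\eta(\Pi_{k+1,\vep})=\big[P_{k,\vep+\eta}-\O_\eta(P_{k,\vep})\big]-\big[(\vep+\eta)^k-\vep^k\big]\O_{\vep+\eta}(T(W_k)).
\]
The first bracket is $\mathcal{O}^{\exp}(\eta)$ by induction. The second is also $\mathcal{O}^{\exp}(\eta)$, since $|(\vep+\eta)^k-\vep^k|\le k\,|\eta|\,\max(|\vep|,|\vep+\eta|)^{k-1}\le k|\eta|$ and $T(W_k)$ is local (Lemma \ref{lem:T_properties}).

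\emph{Main obstacle.} The main difficulty is uniformity. The constants hidden in $\mathcal{O}^{\exp}(\eta)$ must be independent of both $\vep$ and $\eta$. This requires the locality constants of $\Pi_{k,\vep}$, $P_{k,\vep}$ and of the resolvents to be uniform in $|\vep|<\vep_k$. That uniformity follows from the construction, because at step $k$ each object is built from $P_0$, $W_j$ and $T(W_j)$ with $j<k$, through finitely many uniformly controlled operations. We also need the spectral gap of $\Pi_{k,\vep}$ to be uniform, which is guaranteed by the bound $\|\Pi^2-\Pi\|<1/8$ on the whole range $|\vep|<\vep_k$.
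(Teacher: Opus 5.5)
Your proposal is correct and follows the paper's proof essentially step by step: the exact identity $\Pi_{1,\vep+\eta}=\O_\eta(\Pi_{1,\vep})$, then the resolvent comparison $(z-\Pi_{k,\vep+\eta})^{-1}=\O_\eta\big((z-\Pi_{k,\vep})^{-1}\big)+\mathcal{O}^{\exp}(\eta)$ integrated over $|z-1|=1/2$, and finally the direct computation of $\Pi_{k+1,\vep+\eta}-\O_\eta(\Pi_{k+1,\vep})$ using $|(\vep+\eta)^k-\vep^k|\le k|\eta|$ (the paragraph you label $P_k\Rightarrow W_k$ is really this $P_k\Rightarrow\Pi_{k+1}$ step). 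Your insistence that the locality constants of $\Pi_{k,\vep}$, $P_{k,\vep}$ and of the resolvents be uniform in $\vep$ makes explicit what the paper leaves implicit; moreover, multiplying $(z-\Pi_{k,\vep+\eta})Y=\Id+A_\eta$, with $Y=\O_\eta\big((z-\Pi_{k,\vep})^{-1}\big)$, on the left by the local resolvent, as you suggest, gives $(z-\Pi_{k,\vep+\eta})^{-1}=Y-(z-\Pi_{k,\vep+\eta})^{-1}A_\eta$ directly, so you do not even need to invert the non-self-adjoint $\Id+A_\eta$ via Corollary~\ref{cor:locality_VS_functional_calculus}.
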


\begin{proof} We proceed iteratively in $k$. 

    \begin{itemize}
        \item 
        {\it The starting point}:

        By definition, $\Pi_{1,\vep+\eta} = \O_{\varepsilon+\eta} (P_0)=\O_\eta(\O_\varepsilon(P_0))=\O_\eta(\Pi_{1,\vep})$. Thus here the remainder is absent. 
        
        \item \emph{From }$\Pi_{k,\vep+\eta}$ \emph{to} $P_{k,\vep+\eta}$

        Suppose that for $k\geq 1$ we have $\Pi_{k,\vep+\eta} =\O_\eta( \Pi_{k,\vep}) + \mathcal{O}^{\exp}(\eta)$.  Equation \ref{eq:product_related_to_delta} and Lemma \ref{lem:locality_of_resolvent} imply that:
        \[\begin{split}(z-\Pi_{k,\vep+\eta})\O_\eta \left[(z-\Pi_{k,\vep})^{-1} \right]&=   \left[\O_\eta(z-\Pi_{k,\vep}) + \mathcal{O}^{\exp} (\eta)\right] \O_\eta \left[(z-\Pi_{k,\vep})^{-1} \right] \\& = \Id + \mathcal{O}^{\exp} (\eta) \eqcolon\Id +A_\eta,\end{split}\]
        where $A_\eta$ is local,  of order $\eta$  and uniformly bounded in $z$ when $|z-1|=1/2$.
Let $f(x)=x/(1+x)$. Then if $|\eta|$ is small enough and using Corollary \ref{cor:locality_VS_functional_calculus} we have 
$$\big (\Id + A_\eta\big )^{-1}=\Id -f\big (A_\eta\big )=\Id + \mathcal{O}^{\exp} (\eta) .$$
        Therefore, it holds that:
        \[\begin{split}
            (z-\Pi_{k,\vep+\eta})^{-1} = \O_\eta \left[(z-\Pi_{k,\vep})^{-1}  \right] (\Id + \mathcal{O}^{\exp} (\eta)) =\O_\eta \left[(z-\Pi_{k,\vep})^{-1} \right]+\mathcal{O}^{\exp} (\eta).
        \end{split}\]
        Using the right hand side inside the integral of Equation \eqref{eq:definition_P_k_epsilon} ensures that
        \begin{equation}\label{oct3}
        P_{k,\vep+\eta} = \O_\eta(P_{k,\vep}) +\mathcal{O}^{\exp} (\eta)  \end{equation}
        due to the boundedness of the error in $z$ and the linearity of $\O_\eta$.

        \item \emph{From }$P_{k,\vep+\eta}$ \emph{to} $\Pi_{k+1,\vep+\eta}$

        Equations \eqref{eq:definition_PI_k_epsilon} and \eqref{oct3} imply that
        \[\begin{split}
            \Pi_{k+1,\varepsilon+\eta} & - \O_\eta(\Pi_{k+1,\varepsilon})  \\ &= P_{k,\varepsilon+\eta} - \O_\eta(P_{k,\varepsilon})- (\varepsilon+\eta)^k\O_{\varepsilon+\eta}(T(W_k)) + \O_\eta[\varepsilon^k\O_{\varepsilon}(T(W_k))] \\
            &= \mathcal{O}^{\exp}(\eta ) - \left[   (\varepsilon+\eta)^k-\varepsilon^k\right]\O_{\varepsilon+\eta}(T(W_k)) \\
            &= \mathcal{O}^{\exp}(\eta )
        \end{split}\]
        because $T(W_k)$ is local and $(\varepsilon+\eta)^k-\varepsilon^k$ is of order $\eta$. \qedhere
    \end{itemize}
\end{proof}

\begin{proposition}\label{pro:Ch_constant}
    Consider $P_{k,\varepsilon}$ as in Proposition \ref{pro:Sequence_of_P_k_epsilon} or $P_\varepsilon$ as in Proposition \ref{pro:construction_P_epsilon}. Then ${\rm Ch}(P_{k,\vep})={\rm Ch}(P_{0})$ for every $k\in \N$. Consequently, ${\rm Ch}(P_\vep)={\rm Ch}(P_{0})$.
\end{proposition}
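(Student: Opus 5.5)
Fix $k$. The plan is to show that $\vep\mapsto {\rm Ch}(P_{k,\vep})$ is continuous on $|\vep|<\vep_k$ and takes integer values there. Since the interval is connected, it is then constant and equal to its value at $\vep=0$. At $\vep=0$ the construction is unperturbed: $\Pi_{1,0}=P_0$ is already a projection and $W_k$-corrections carry the factor $\vep^k$, so $P_{k,0}=P_0$.

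\emph{Integrality.} I take integrality of ${\rm Ch}(P)$ for any local $(\vep,q)$-periodic orthogonal projection $P$ as known from the literature. One route is the index formula for the flux insertion unitary; another reduces, for rational $\vep$, to Bloch bundles over the torus. For irrational $\vep$ one needs the noncommutative-torus or index version of \cite{CorneanMonacoMoscolari2021}. If this is not available in the needed generality, one can argue through local traces, using that ${\rm Ch}$ is a Fredholm index computed from the local trace.

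\emph{Continuity.} Fix $\vep$ and take $\eta\to0$ with $|\vep+\eta|<\vep_k$. By Lemma \ref{lem:Order_P_PI_in_epsilon},
\[
P_{k,\vep+\eta}=\O_\eta(P_{k,\vep})+\mathcal{O}^{\exp}(\eta),
\]
with decay constants uniform in $\eta$.
\begin{itemize}
\item Insert this into formula \eqref{oct6} of Lemma \ref{lem:Ch_hopping_matrices_and_reduction}, which is available because both projections are $(\cdot,q)$-periodic. This expresses ${\rm Ch}$ as a finite sum, over $j$, of absolutely convergent sums over $\gamma',\gamma''$ of the kernel expression \eqref{eq:hopping_matrices_Chern_number}.
\item Write $P_{k,\vep+\eta}(\gamma,\gamma')=e^{i\eta\varphi(\gamma,\gamma')}P_{k,\vep}(\gamma,\gamma')+R_\eta(\gamma,\gamma')$, where $\|R_\eta(\gamma,\gamma')\|\le C|\eta|e^{-w|\gamma-\gamma'|}$.
\item For the triple product $P(\gamma,\gamma')P(\gamma',\gamma'')P(\gamma'',\gamma)$ in \eqref{eq:hopping_matrices_Chern_number}, the phases combine to $e^{i\eta[\varphi(\gamma,\gamma')+\varphi(\gamma',\gamma'')+\varphi(\gamma'',\gamma)]}$. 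This is the flux through a triangle, so it is $1+\mathcal{O}(|\eta|\,|\gamma-\gamma'||\gamma'-\gamma''|)$.
\item Every other term in the expansion contains at least one factor $R_\eta$.
\end{itemize}
The weight $2|\varphi(\gamma''-\gamma,\gamma'-\gamma'')|$ is polynomially bounded, and it is absorbed by the exponential decay at fixed $\gamma=(j,0)$. Hence the difference of the two Chern numbers is $\mathcal{O}(|\eta|)$, which gives continuity.

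\emph{The limiting projection $P_\vep$.} The main obstacle is that $P_\vep=\mathfrak P_{N,\vep}$ has an index $N$ that jumps with $\vep$, and its decay rate degenerates as $\vep\to0$. I would not try to prove continuity for $P_\vep$ directly. Instead, for each fixed $\vep$ the operator $P_\vep$ coincides with $P_{k,\vep}$ for some $k=k(\vep)$, namely $P_{1,\vep}$ or $P_{k,\vep}$ by the definition of $\mathfrak P$. It is an element of a sequence whose Chern numbers all equal ${\rm Ch}(P_0)$, so ${\rm Ch}(P_\vep)={\rm Ch}(P_0)$ follows immediately.

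The delicate point in the $k$-step is uniformity of the decay constants of the $\mathcal{O}^{\exp}(\eta)$ remainder near a fixed $\vep$. Lemma \ref{lem:Order_P_PI_in_epsilon} provides this for fixed $k$.
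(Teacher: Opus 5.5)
Your proof is correct and follows the paper's argument. Both proofs get continuity of $\vep\mapsto{\rm Ch}(P_{k,\vep})$ from Lemma \ref{lem:Order_P_PI_in_epsilon} and the reduced formula \eqref{oct6}, take integrality as given (the paper also just asserts it), use $P_{k,0}=P_0$, and then note that $P_\vep=P_{K,\vep}$ for some $K$ with $|\vep|<\vep_K$. The only difference is cosmetic: you expand the triple kernel product directly with the triangle-flux phase, whereas the paper packages the same estimate through $[X_j,P_{k,\vep+\eta}]=\O_\eta([X_j,P_{k,\vep}])+\mathcal{O}^{\exp}(\eta)$, the product rule \eqref{eq:product_related_to_delta}, and the fact that $\O_\eta$ leaves diagonal entries unchanged.
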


\begin{proof}
    First, if $A\in\mathcal{B}^{\exp}(\mathcal{H})$ is local, then $[A,X_j]\in\mathcal{B}^{\exp}(\mathcal{H})$ is local:
    due to Equation \eqref{eq:hopping_matrices_commutators_with_X}, the hopping matrices are $[A,X_j](\gamma,\gamma') = (\gamma_j'-\gamma_j)A(\gamma,\gamma')$. If $\|A(\gamma,\gamma')\| \le Ce^{-a|\gamma-\gamma'|}$, then up to rescaling to two different constants, we have $$\|[A,X_j](\gamma,\gamma')\| \le C |\gamma_j'-\gamma_j| e^{-a|\gamma-\gamma'|} \le C' e^{-a'|\gamma-\gamma'|}.$$
    Similarly, for every $k\in\N$ and $0\le |\varepsilon+\eta| < \varepsilon_k$ we have $$\left([X_j,P_{k,\varepsilon+\eta}]-\O_\eta ( [X_j,P_{k,\varepsilon}]\right) (\gamma,\gamma') = (\gamma_j-\gamma_j')\left[\left( P_{k,\varepsilon+\eta}- \O_\eta ( P_{k,\varepsilon})\right) (\gamma,\gamma')\right] $$
    Now, Lemma \ref{lem:Order_P_PI_in_epsilon} states that $$ \left\|\left(P_{k,\varepsilon+\eta}- \O_\eta ( P_{k,\varepsilon})\right) (\gamma,\gamma')\right\|  \le C|\eta| e^{-a|\gamma-\gamma'|}.$$
    So, up to rescaling the constants we get
    $$\left\|\left([X_j,P_{k,\varepsilon+\eta}]-\O_\eta ( [X_j,P_{k,\varepsilon}]\right) (\gamma,\gamma')\right\| \le C'|\eta| e^{-a'|\gamma-\gamma'|}$$
    meaning that $$[X_j,P_{k,\varepsilon+\eta}]=\O_\eta ( [X_j,P_{k,\varepsilon}]) + \mathcal{O}^{\exp}(\eta). $$
    This equality allows the use of Equation \eqref{eq:product_related_to_delta}, implying that $$P_{k,\vep+\eta} [\, [X_1,P_{k,\vep+\eta}]\, , \, [X_2,P_{k,\vep+\eta}]\,] = \O_\eta\left( P_{k,\vep} [\, [X_1,P_{k,\vep}]\, , \, [X_2,P_{k,\vep}]\,] \right) + \mathcal{O}^{\exp}(\eta).$$
    Since $\O_\eta(A)((j,0)(j,0))=A((j,0),(j,0))$, Equation \eqref{oct6} implies that 
    \begin{equation*}\label{eq:Ch_behaviour}{\rm Ch}(P_{k,\vep+\eta})= {\rm Ch} (P_{k,\varepsilon}) + \frac{2\pi i}{q} \sum_{j=0}^{q-1} \Tr \big(\mathcal{O}^{\exp}(\eta) ((j,0),(j,0)) \big),\end{equation*}
    so the continuity in $\varepsilon$ follows for every $0\le |\varepsilon| < \varepsilon_k$. However, the Chern number is always integer-valued, so it must be constant for $\varepsilon$ in a neighborhood of zero. The final ingredient is $P_{k,0}=P_0$. 
\end{proof}

\section{The St\v{r}eda formula}\label{sec:Streda_formula}

%First, we prove that for $(b_0+\varepsilon,q)$-periodic operators, the integrated density of states assumes a simple formula.
%Then, we prove that two projections intertwined by a local and $(b_0+\varepsilon,q)$-periodic unitary equivalence have the same integrated density of states.
%Finally, we will focus on $P_{\varepsilon+\eta}$ and find a unitarily equivalent projection $Q_\eta$ whose expansion in $\eta$ of the integrated density of states has the Chern number ${\rm Ch}$ as its linear coefficient.

%\begin{remark}%\label{rmk:IDS_is_just_trace}
    If an operator $A$ is $(b_0+\varepsilon,q)$-periodic, then imposing $\gamma''=\gamma$ and $\gamma'=0$ in Equation \eqref{eq:condition_commuting_with_mag_translations}, yields $A(0,0)\equiv A(\gamma,\gamma)$ for all $\gamma \in q\Z\times\Z.$
    For such operators, the  integrated density of states exists and can be written as a ``trace per unit surface": 
    \begin{equation}\label{oct7}
    \mathcal{I}(A) = \lim_{L\to \infty} \frac{1}{(2L)^2} \sum_{\gamma\in [-L,L]^2} \Tr (A(\gamma,\gamma)) = \frac{1}{q}\sum_{j=0}^{q-1} \Tr(A((j,0),(j,0))) .
    \end{equation}
%\end{remark}

\begin{lemma}\label{lem:IDS_equal_for-un_eq_projections}
   Consider two local operators $A,B\in \mathcal{B}^{\exp}(\mathcal{H})$ that admit an integrated density of states.
    If there is a unitary $U\in \mathcal{B}^{\exp}(\mathcal{H})$ with $A=UBU^*$, then $\mathcal{I} (A)=\mathcal{I}(B)$.
\end{lemma}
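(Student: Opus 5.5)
The plan is to reduce the claim to a cyclicity property of the trace per unit volume for local operators: $\mathcal{I}(XY)=\mathcal{I}(YX)$ whenever $X,Y\in\mathcal{B}^{\exp}(\mathcal{H})$ and the limits exist. Taking $X=UB$ and $Y=U^*$ then gives
\[
\mathcal{I}(A)=\mathcal{I}(UBU^*)=\mathcal{I}(U^*UB)=\mathcal{I}(B).
\]
All products involved are local, by the closure of $\mathcal{B}^{\exp}(\mathcal{H})$ under composition (Appendix \ref{sec:exp_localization}). The existence of $\mathcal{I}(A)$ and $\mathcal{I}(B)$ is assumed, so nothing has to be checked there.

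To prove cyclicity, let $\Lambda_L=[-L,L]^2\cap\Z^2$ and $\chi_L$ its indicator. We write
\[
\sum_{\gamma\in\Lambda_L}\Tr\big((XY)(\gamma,\gamma)\big)-\sum_{\gamma\in\Lambda_L}\Tr\big((YX)(\gamma,\gamma)\big)
=\sum_{\gamma\in\Lambda_L,\ \gamma'\in\Z^2}\Tr\big(X(\gamma,\gamma')Y(\gamma',\gamma)\big)-\sum_{\gamma\in\Lambda_L,\ \gamma'\in\Z^2}\Tr\big(Y(\gamma,\gamma')X(\gamma',\gamma)\big).
\]
In the second sum we use cyclicity of the matrix trace and swap the names of the variables. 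The pairs with both $\gamma,\gamma'\in\Lambda_L$ then cancel exactly, since both sums are absolutely convergent thanks to the exponential decay. What remains is a difference of two sums over pairs $(\gamma,\gamma')$ with one point in $\Lambda_L$ and the other in $\Z^2\setminus\Lambda_L$.

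Each remaining term is bounded by $C e^{-2w|\gamma-\gamma'|}$, using $\|X(\gamma,\gamma')\|\le Ce^{-w|\gamma-\gamma'|}$ and the analogous bound for $Y$, with the matrix trace contributing at most a factor $n$. The sum over pairs separated by the boundary of $\Lambda_L$ is bounded by
\[
\sum_{\gamma\in\Lambda_L}\ \sum_{\gamma'\notin\Lambda_L}e^{-2w|\gamma-\gamma'|}
\le \sum_{\gamma\in\Lambda_L}C'e^{-w\,\mathrm{dist}(\gamma,\Z^2\setminus\Lambda_L)}
\le C''L.
\]
The last bound holds because the number of points at distance $d$ from the boundary is $\mathcal{O}(L)$, and $\sum_d e^{-wd}<\infty$. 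Dividing by $(2L)^2$ and letting $L\to\infty$ gives zero, so $\mathcal{I}(XY)=\mathcal{I}(YX)$ provided one side exists. The boundary estimate itself shows that the two finite-volume averages differ by $o(1)$, so existence of one limit implies existence of the other.

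The only step needing care is the bookkeeping that makes the interior terms cancel exactly and leaves a pure boundary sum. I expect this to be the main obstacle, and I will handle it by writing everything with $\chi_L$: the finite-volume sum equals $\Tr(\chi_L XY\chi_L)$ up to identification. Then
\[
\Tr(\chi_L XY\chi_L)-\Tr(\chi_L YX\chi_L)=\Tr\big(\chi_L X(1-\chi_L)Y\chi_L\big)-\Tr\big(\chi_L Y(1-\chi_L)X\chi_L\big),
\]
which uses $\Tr(\chi_L X\chi_L Y\chi_L)=\Tr(\chi_L Y\chi_L X\chi_L)$. This last identity is valid because $\chi_L X\chi_L$ and $\chi_L Y\chi_L$ are finite rank, being supported on the finite set $\Lambda_L$. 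The two remaining traces are exactly the boundary sums estimated above.
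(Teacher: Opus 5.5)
Your proposal is correct and is essentially the paper's argument: the paper's double sum $I_L$ is exactly your compressed trace $(2L)^{-2}\Tr(\chi_L X\chi_L Y\chi_L)$ with $X=U$, $Y=BU^*$ (your choice $X=UB$, $Y=U^*$ is equivalent), and both proofs show that this quantity differs from the finite-volume averages of $A$ and of $B$ only by boundary contributions. Your direct perimeter count gives an $\mathcal{O}(L^{-1})$ error, whereas the paper uses an interior box with a collar of width $L^\delta$ and gets $\mathcal{O}(L^{-1+\delta})$; this difference is cosmetic.
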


\begin{proof}
    Define the function $f(\gamma,\gamma') = \Tr \big (U(\gamma,\gamma')[BU^*](\gamma',\gamma)\big )$. Because all operators are local, there exist $C,\alpha>0$ such that $|f(\gamma,\gamma')|\leq Ce^{-\alpha|\gamma-\gamma'|}$ for all $\gamma,\gamma'$. Then 
    \begin{equation}\label{oct4}
    \sum_{\gamma'\in\Z^2} f(\gamma,\gamma') =\Tr \left(\sum_{\gamma'\in\Z^2}U(\gamma,\gamma')[BU^*](\gamma',\gamma)\right) = \Tr(UBU^*(\gamma,\gamma)) = \Tr (A(\gamma,\gamma)), 
    \end{equation}
    and also 
    \begin{equation}\label{oct5}
    \sum_{\gamma\in\Z^2} f(\gamma,\gamma') =\Tr \left(\sum_{\gamma\in\Z^2}[BU^*](\gamma',\gamma)U(\gamma,\gamma')\right) = \Tr(BU^*U(\gamma',\gamma')) = \Tr (B(\gamma',\gamma')). 
    \end{equation}
    
    Now consider the quantity 
     $$I_L\coloneq\frac{1}{(2L)^2}\sum_{\gamma,\gamma'\in [-L,L]^2} f(\gamma,\gamma').$$
     Due to the exponential decay of $f$, the sum $\sum_{\gamma'\in [-L,L]^2} f(\gamma,\gamma')$ is uniformly bounded in $\gamma$ and $L$, and for a fixed $0<\delta<1$ we have 
     $$I_L=\frac{1}{(2L)^2}\sum_{\gamma\in [-L+L^\delta,L- L^\delta]^2}\quad \sum_{\gamma'\in [-L,L]^2} f(\gamma,\gamma')+\mathcal{O}(L^{-1+\delta}).$$
     Due to the same exponential localization, when $\gamma \in [-L+L^\delta,L- L^\delta]^2$ we may extend the summation in $\gamma'$ over the whole $\Z^2$, using the estimate 
     $$e^{-\alpha|\gamma-\gamma'|}\leq e^{-\alpha \, L^\delta } e^{-\alpha\,  {\rm dist}(\gamma',[-L,L]^2)},\quad \gamma'\in \Z^2\setminus [-L,L]^2,$$
which together with \eqref{oct4} leads to 
$$I_L=\frac{1}{(2L)^2}\sum_{\gamma\in [-L+L^\delta,L- L^\delta]^2}\Tr (A(\gamma,\gamma))+\mathcal{O}(L^{-1+\delta})\stackrel{L\rightarrow\infty}\rightarrow\mathcal{I}(A).$$
     
%     $$\lim_{L\to \infty}\frac{1}{(2L)^2}\sum_{\gamma,\gamma'\in [-L,L]^2} f(\gamma,\gamma') =  \lim_{L\to \infty}\frac{1}{(2L)^2}\sum_{\gamma'\in [-L,L]^2 }  \Tr(B(\gamma'\gamma'))= \mathcal{I}(B), $$
 %   but also
  %  \begin{align*}
   %     \lim_{L\to \infty}\frac{1}{(2L)^2}\sum_{\gamma,\gamma'\in [-L,L]^2} f(\gamma,\gamma') =  \lim_{L\to \infty}\frac{1}{(2L)^2}\sum_{\gamma\in [-L,L]^2 }  \Tr(A(\gamma,\gamma))= \mathcal{I}(A). \tag*{\qedhere}
   % \end{align*}
   In a similar way, using \eqref{oct5} and extending the summation of $\gamma$ in $I_L$ to the whole $\Z^2$, we obtain
   $$I_L\stackrel{L\rightarrow\infty}\rightarrow\mathcal{I}(B),$$
   hence $\mathcal{I}(A)=\mathcal{I}(B)$.
\end{proof}

\begin{proposition}\label{pro:derivative_of_IDS_of_P_epsilon}
    Let $P_\varepsilon$ be as in Proposition \ref{pro:construction_P_epsilon}. Then $$\mathcal{I}(P_{\varepsilon})=\mathcal{I}(P_{0}) +\frac{\vep\, {\rm Ch}(P_0)}{2\pi}.$$
\end{proposition}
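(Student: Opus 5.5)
Since $\mathcal{I}(P_\vep)$ is a real number and we know only that $P_\vep$ is defined piecewise as some $P_{k(\vep),\vep}$, the plan is to first prove that $\vep\mapsto \mathcal{I}(P_{k,\vep})$ is differentiable on $|\vep|<\vep_k$ with derivative ${\rm Ch}(P_{k,\vep})/(2\pi)={\rm Ch}(P_0)/(2\pi)$ (Proposition \ref{pro:Ch_constant}). Integrating from $0$, where $P_{k,0}=P_0$, gives the formula for every $P_{k,\vep}$, hence for $P_\vep=\mathfrak{P}_{m,\vep}$, which equals some $P_{j,\vep}$ for each fixed $\vep$. Note that $\mathcal{I}(P_{k,\vep})$ is given by the finite formula \eqref{oct7}, since $P_{k,\vep}$ is $(b_0+\vep,q)$-periodic.

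To compute the derivative at a fixed $\vep$, I would compare $P_{k,\vep+\eta}$ with $\O_\eta(P_{k,\vep})$ as in Lemma \ref{lem:Order_P_PI_in_epsilon}, but to second order. Set $P:=P_{k,\vep}$ and $\Pi_\eta:=\O_\eta(P)$. By Lemma \ref{lemmahc2}, $\Pi_\eta^2-\Pi_\eta=\eta\O_\eta(P\triangle P)+\mathcal{O}^{\exp}(\eta^2)$. Let $Q_\eta$ be the Riesz projection of $\Pi_\eta$ as in \eqref{eq:definition_P_k_epsilon}. By \eqref{oct11} and the expansion preceding it, $Q_\eta=\Pi_\eta+(\Pi_\eta^2-\Pi_\eta)(1-2\Pi_\eta)+\mathcal{O}^{\exp}(\eta^2)$. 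Using $\O_\eta(A)((j,0),(j,0))=A((j,0),(j,0))$ and Lemma \ref{lemmahc2} once more, the diagonal blocks at $(j,0)$ satisfy
\[
\mathcal{I}(Q_\eta)=\mathcal{I}(P)+\eta\,\mathcal{I}\big((P\triangle P)(1-2P)\big)+\mathcal{O}(\eta^2).
\]
A direct hopping-matrix computation in the spirit of Lemma \ref{lem:Ch_hopping_matrices_and_reduction} should show that $-\mathcal{I}((P\triangle P)P)$, after expanding $P=P^3$ and using the skew-symmetry of $\varphi$, reproduces $\frac{1}{2\pi}{\rm Ch}(P)$. Indeed, $\Tr(P\triangle P\,P)(\gamma,\gamma)$ involves $\varphi(\gamma-\gamma'',\gamma''-\gamma')P(\gamma,\gamma'')P(\gamma'',\gamma')P(\gamma',\gamma)$, matching \eqref{eq:hopping_matrices_Chern_number} up to the factor $i$. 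The term $\mathcal{I}(P\triangle P)$ should vanish, because $\varphi(\gamma-\gamma'',\gamma''-\gamma)=0$. The signs and factors of $2$ need care, but this gives $\mathcal{I}(Q_\eta)=\mathcal{I}(P)+\eta\,{\rm Ch}(P)/(2\pi)+\mathcal{O}(\eta^2)$.

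Next, $Q_\eta$ and $P_{k,\vep+\eta}$ are two local $(b_0+\vep+\eta,q)$-periodic projections that differ by $\mathcal{O}^{\exp}(\eta)$ (Lemma \ref{lem:Order_P_PI_in_epsilon} together with \eqref{eq:relation_P_PI}). For small $\eta$, the Kato–Nagy unitary $U_\eta=\big(Q_\eta P_{k,\vep+\eta}+(1-Q_\eta)(1-P_{k,\vep+\eta})\big)\big(1-(Q_\eta-P_{k,\vep+\eta})^2\big)^{-1/2}$ is local by Corollary \ref{cor:locality_VS_functional_calculus} and intertwines the two projections. Lemma \ref{lem:IDS_equal_for-un_eq_projections} then gives $\mathcal{I}(P_{k,\vep+\eta})=\mathcal{I}(Q_\eta)$. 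Consequently $\frac{d}{d\vep}\mathcal{I}(P_{k,\vep})={\rm Ch}(P_{k,\vep})/(2\pi)={\rm Ch}(P_0)/(2\pi)$, which is constant, and the mean value theorem finishes the proof.

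I expect the main obstacle to be the algebraic identification of the first-order coefficient $\mathcal{I}((P\triangle P)(1-2P))$ with ${\rm Ch}(P)/(2\pi)$. This requires exchanging the order of summation and using periodicity to shift indices, as in the proof of Lemma \ref{lem:IDS_equal_for-un_eq_projections}, so that the cyclic trace identity holds for the trace per unit volume. A secondary point is uniformity: the $\mathcal{O}(\eta^2)$ remainder must be uniform in $\vep$ on compact subsets of $|\vep|<\vep_k$, which it is, since all the constants come from finitely many iterations.
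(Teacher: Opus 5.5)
Your proposal is correct and follows essentially the same route as the paper. Like the paper, you reduce to $P_{k,\vep}$, compare $P_{k,\vep+\eta}$ with the Riesz projection $Q_\eta$ of $\O_\eta(P_{k,\vep})$ through a local Kato--Nagy unitary and Lemma~\ref{lem:IDS_equal_for-un_eq_projections}, and read off the linear term from $Q_\eta=\O_\eta(P)+\Delta_\eta-2\Delta_\eta\O_\eta(P)+\mathcal{O}^{\exp}(\eta^2)$.

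The identification you flag as the main obstacle is simpler than you expect: it is a pointwise diagonal identity, with no cyclicity or reordering of sums needed. By bilinearity and skew-symmetry of $\varphi$ one gets $[(P\triangle P)P](\gamma,\gamma)=-\tfrac{i}{2}\big(P\big[[X_1,P],[X_2,P]\big]\big)(\gamma,\gamma)$. Hence it is $-2\,\mathcal{I}((P\triangle P)P)$, not $-\mathcal{I}((P\triangle P)P)$, that equals ${\rm Ch}(P)/(2\pi)$; your final first-order formula is nonetheless right.
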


\begin{proof} 
Since $P_\vep$ equals some $P_{k,\vep}$ for an $\vep$-dependent $k$, it suffices to prove 
\begin{equation}\label{oct9}
\mathcal{I}(P_{k,\varepsilon})=\mathcal{I}(P_{0}) +\frac{\vep\, {\rm Ch}(P_0)}{2\pi},\quad 0\leq |\vep|<\vep_k,\quad \forall k\geq 1.
\end{equation}
From Lemma \ref{lem:Order_P_PI_in_epsilon} we infer that for all $\gamma\in \Z^2$ we have 
$$P_{k,\vep+\eta}(\gamma,\gamma)=P_{k,\vep}(\gamma,\gamma)+\mathcal{O}(\eta)$$
uniformly in $\gamma\in \Z^2$. This implies that $\mathcal{I}(P_{k,\vep})$ is (Lipschitz) continuous in $\vep\in (-\vep_k,\vep_k)$ and $\mathcal{I}(P_{k,0})=\mathcal{I}(P_{0})$. Since we already know from Lemma \ref{pro:Ch_constant} that ${\rm Ch}(P_{k,\vep})={\rm Ch}(P_0)$, the identity \eqref{oct9} follows if we can prove that 
\begin{equation}\label{oct10}
\frac{d \mathcal{I}(P_{k,\vep})}{d\vep}=\frac{{\rm Ch}(P_{k,\vep})}{2\pi},\quad \vep\in (-\vep_k,\vep_k).
\end{equation}
The rest of the proof of this proposition is dedicated to \eqref{oct10}. 

        Fix $\varepsilon\in (-\varepsilon_k,\varepsilon_k)$ and let $\eta$ be near zero. The operator $\O_\eta (P_{k,\vep})$ is close to a projection because 

        $$\Delta_\eta\coloneq\big (\O_\eta (P_{k,\vep})\big )^2-\O_\eta (P_{k,\vep})= \mathcal{O}^{\rm exp}(\eta) $$
        and we may define $$Q_\eta=\frac{1}{2\pi i}\oint_{|z-1|=1/2}\big (z- \O_\eta (P_{k,\vep})\big )^{-1}\, dz.$$ 
        This is a $(\varepsilon+\eta,q)$-periodic and local projection. We may reuse the formula \eqref{oct11} where we put $Q_\eta$ instead of $P_{k,\vep}$ and $\O_\eta (P_{k,\vep})$ instead of $\Pi_{k,\vep}$, then expand it up to a second order in $\Delta_\eta$. This gives 
        \begin{equation}\label{oct13}
        Q_\eta = \O_\eta (P_{k,\vep})+\Delta_\eta -2\Delta_\eta\O_\eta (P_{k,\vep})+    \mathcal{O}^{\exp}(\eta^2).
        \end{equation}
        This also proves that $Q_\eta = \O_\eta (P_{k,\vep})+\mathcal{O}^{\exp}(\eta)$, which coupled to Lemma \ref{lem:Order_P_PI_in_epsilon} implies $Q_\eta = P_{k,\vep+\eta}+\mathcal{O}^{\exp}(\eta)$. Now, Lemma \ref{lem:locality_VS_continuity} ensures that $$\lim_{\eta \to 0}\left\|Q_\eta - P_{k,\vep+\eta}\right\|=0.$$
         For $\eta$ near zero we  define the Kato-Nagy intertwining unitary $$U_\eta =  \Big (P_{k,\varepsilon+\eta} Q_\eta + (\Id -P_{k,\varepsilon+\eta} ) (\Id -Q_\eta )  \Big ) \Big (\Id -\left(Q_\eta-P_{k,\varepsilon+\eta}\right)^2\Big )^{-1/2}.$$
         
        The triple $Q_\eta$, $P_{k,\vep+\eta}$ and $U_\eta$ satisfies the hypotheses of Lemma \ref{lem:IDS_equal_for-un_eq_projections}, meaning that $\mathcal{I}(P_{k,\varepsilon+\eta})=\mathcal{I}\left(Q_\eta\right)$.
        
        %For a general local operator $A$ it holds that \[\begin{split} \O_\eta (A)^2& (\gamma,\gamma)= \sum_{\gamma'\in\Z^2 } e^{i\eta [ \varphi (\gamma,\gamma') +\varphi(\gamma',\gamma)]} A(\gamma,\gamma')A(\gamma',\gamma) = A^2 (\gamma,\gamma) ,\quad \mbox{and}\\ 
        %\O_\eta (A)^3& (\gamma,\gamma)= \sum_{\gamma',\gamma''\in\Z^2 } e^{i\eta [ \varphi (\gamma,\gamma') +\varphi(\gamma',\gamma'')+\varphi(\gamma'',\gamma)]} A(\gamma,\gamma')A(\gamma',\gamma'') A(\gamma'',\gamma) \\
        %&= \sum_{\gamma',\gamma''\in\Z^2 } e^{-i\eta [ \varphi (\gamma''-\gamma,\gamma'-\gamma'')]} A(\gamma,\gamma')A(\gamma',\gamma'') A(\gamma'',\gamma) \\
        %&= \sum_{\gamma',\gamma''\in\Z^2 } \sum_{l=0}^\infty (-i\eta \varphi (\gamma''-\gamma,\gamma'-\gamma'') )^lA(\gamma,\gamma')A(\gamma',\gamma'') A(\gamma'',\gamma) \\
        %& = A^3 (\gamma,\gamma) -\sum_{\gamma',\gamma''\in\Z^2 } i\eta \varphi (\gamma''-\gamma,\gamma'-\gamma'') A(\gamma,\gamma')A(\gamma',\gamma'') A(\gamma'',\gamma)+\mathcal{O}(\eta^2).\end{split} \]
        Using \eqref{oct13} we may compute 
        \[\begin{split}Q_\eta (\gamma,\gamma) %&= 3 [\O_\eta(P_{k,\varepsilon})^2] (\gamma,\gamma) -2[\O_\eta (P_{k,\varepsilon})^3] (\gamma,\gamma ) + \mathcal{O}(\eta^2) \\
        &= P_{k,\varepsilon}(\gamma,\gamma) +\sum_{\gamma',\gamma''\in\Z^2 } 2i\eta \varphi (\gamma''-\gamma,\gamma'-\gamma'') P_{k,\varepsilon}(\gamma,\gamma')P_{k,\varepsilon}(\gamma',\gamma'') P_{k,\varepsilon}(\gamma'',\gamma) +\mathcal{O}(\eta^2) \\ &=P_{k,\varepsilon} (\gamma,\gamma) +i\eta \big ( P_{k,\varepsilon}\big[[X_1,P_{k,\varepsilon}],[X_2,P_{k,\varepsilon}]\big]\big) (\gamma,\gamma) + \mathcal{O}(\eta^2).\end{split}\]
Using \eqref{oct6} and \eqref{oct7}, the above identity yields 
$$\mathcal{I}(P_{k,\varepsilon+\eta})=\mathcal{I}(Q_\eta)=\mathcal{I}(P_{k,\vep}) +\frac{\eta}{2\pi} {\rm Ch}(P_{k,\vep}) +\mathcal{O}(\eta^2),$$
which proves \eqref{oct10}.\qedhere
%        Remark \ref{rmk:IDS_is_just_trace} provides a way to compute the integrated density of states and Lemma \ref{lem:Ch_hopping_matrices_and_reduction} relates it with the Chern number

 %       \[\begin{split}
  %          \left.\frac{d\mathcal{I}(P_{k,\varepsilon+\eta})}{d \eta}\right|_{\eta=0} & = \left.\frac{d\mathcal{I}(Q_{\eta})}{d \eta}\right|_{\eta=0} = \left.\frac{d}{d \eta}\frac{1}{q}\sum_{j=0}^{q-1} \Tr (Q_\eta ((j,0)(j,0) )\right|_{\eta=0} \\
            %&=\frac{i}{q}\sum_{j=0}^{q-1}\Tr \big ( P_{k,\varepsilon}\big[[X_1,P_{k,\varepsilon}],[X_2,P_{k,\varepsilon}]\big]((j,0),(j,0))\big) \\
            %&= \frac{1}{2\pi}{\rm Ch} (P_{k,\varepsilon}) \equiv \frac{1}{2\pi}{\rm Ch} (P_0).
        %\end{split}\]
        %This implies that $$\mathcal{I} (P_{k,\varepsilon}) = \mathcal{I}(P_0) +\varepsilon \frac{\rm{Ch} (P_0)}{2\pi},$$
        %since $P_\varepsilon=P_{k,\varepsilon}$ for some $k$, the last equation also prove the theorem for $P_\varepsilon$.
\end{proof}

%\begin{corollary}\label{cor:P_epsilon_discontinuous}
 %  Consider $P_{k,\varepsilon}$ as in Proposition \ref{pro:Sequence_of_P_k_epsilon}. If $\mathrm{Ch} (P_0)\ne 0$, then $P_{k,\varepsilon}$ is  nowhere continuous with respect to the operator norm. %\textcolor{red}{G.P. do you want also some comment on the weak norm?}
%\end{corollary}

%\begin{proof}
 %   Suppose that for a value $\varepsilon$ it holds that $$\lim_{\eta\to 0} \|P_{k,\varepsilon+\eta}-P_{k,\varepsilon}\| =0.$$ Then for $\eta$ near zero it is possible to construct a family of Kato-Nagy local unitaries $U_\eta$ between $P_{k,\varepsilon}$ and $P_{k,\varepsilon+\eta}$. Hence for $\eta$ near zero, Lemma \ref{lem:IDS_equal_for-un_eq_projections} implies that $\mathcal{I}(P_{k,\varepsilon}) = \mathcal{I}(P_{k,\varepsilon+\eta})$. This contradicts the fact that the slope of $\vep\mapsto \mathcal{I}(P_{k,\varepsilon})$ equals ${\rm Ch}(P_0)\neq 0.$ 
    %Therefore, $${\rm Ch} (P_0)=\left.\frac{d}{d\eta} \mathcal{I}(P_{k,\varepsilon+\eta}) \right|_{\eta=0}=0, $$
    %that contradicts the hypothesis. 
    %Since $P_\varepsilon $ is locally equal to some $P_{k,\varepsilon}$, the same argument holds also for $P_\varepsilon$.
%\end{proof}

\section{Parseval frames}\label{sec:Parseval_frames}

In this section, we want to prove that any exponentially localized  Parseval frame of $P_0$ can be transformed into an exponentially localized Parseval frame of $P_\varepsilon$ or $P_{k,\varepsilon}$ for $\varepsilon$ near zero. 

Assume that there is a set $\{\Psi_j\}_{j\in\{1,\dots ,m \}} \subset \mathcal{H}$ of vectors which are exponentially localized near the origin, i.e., there exist two constants $C,\alpha$ such that $\|\Psi_j(\gamma)\|<Ce^{-\alpha |\gamma|}$. 
Assume also that this set can be used to construct a Parseval frame of $P_0$, meaning that 
\begin{equation}\label{oct15}
P_0= \sum_{\gamma''\in q\Z\times \Z}\sum_{j=1}^m|\tau_{b_0,\gamma''}\Psi_{j}\rangle\langle \tau_{b_0,\gamma''} \Psi_{j}|.
\end{equation}
This implies that the hopping matrices are $$P_0(\gamma,\gamma') = \sum_{\gamma''\in q\Z\times \Z}\sum_{j=1}^m e^{ib_0\varphi (\gamma-\gamma',\gamma'')} \Psi_j(\gamma-\gamma'')\Psi_j(\gamma'-\gamma'')^*,$$
where $\Psi$ is a column vector, so $\Psi\Psi^*$ is actually a matrix.
Now, consider the operator 
$$\widetilde{P}_\vep\coloneq \sum_{\gamma''\in q\Z\times \Z}\sum_{j=1}^m|\tau_{b_0+\vep,\gamma''}\Psi_{j}\rangle\langle \tau_{b_0+\vep,\gamma''}\Psi_{j}|$$ 
with hopping matrices$$  \quad \widetilde{P}_\vep(\gamma,\gamma') =  \sum_{\gamma''\in q\Z\times \Z}\sum_{j=1}^m e^{i(b_0+\varepsilon)\varphi(\gamma-\gamma',\gamma'')} \Psi_j(\gamma-\gamma'')\Psi_j(\gamma'-\gamma'')^*.
$$
We have that
\[\begin{split}\left[\widetilde{P}_\varepsilon -\O_\varepsilon(P_0) \right]&(\gamma,\gamma') \\ &=\sum_{\gamma''\in q\Z\times \Z}\sum_{j=1}^m \left[e^{i\varepsilon\varphi(\gamma-\gamma',\gamma'')}- e^{i\varepsilon\varphi(\gamma,\gamma')}\right]e^{ib_0 \varphi (\gamma-\gamma',\gamma'')} \Psi_j(\gamma-\gamma'')\Psi_j(\gamma'-\gamma'')^*.\end{split}\]
Since for $x,y\in\R$ it holds that $|e^{ix}-e^{iy}|\le |x-y|$, and due to the identity $$\varphi(\gamma-\gamma',\gamma'')-\varphi(\gamma,\gamma')=\varphi(\gamma-\gamma',\gamma''-\gamma')$$ we can estimate
$$\left\|\left[\widetilde{P}_\varepsilon -\O_\varepsilon(P_0) \right](\gamma,\gamma')\right\| \le |\varepsilon| \, C^2\sum_{\gamma''\in q\Z\times \Z}\sum_{j=1}^m |\gamma-\gamma'|\, |\gamma'-\gamma''|\, e^{-\alpha|\gamma-\gamma''|}e^{-\alpha|\gamma''-\gamma'|},$$
which leads to 
$$\widetilde{P}_\varepsilon = \O_\varepsilon(P_0) + \mathcal{O}^{\exp}(\varepsilon) = P_{1,\varepsilon}+ \mathcal{O}^{\exp}(\varepsilon).$$ 
By construction, $\widetilde{P}_\varepsilon$ is $(b_0+\varepsilon,q)$-periodic and the previous equation also implies that $\widetilde{P}_\varepsilon ^2-\widetilde{P}_\varepsilon  = \mathcal{O}^{\exp}(\varepsilon) $, meaning that for $\varepsilon$ near zero, the trick implemented in Equation \eqref{eq:definition_P_k_epsilon} allows the definition of the following projection
$$Q_\vep=\frac{1}{2\pi i} \oint_{|z-1|=1/2} \left(z-\widetilde{P}_\vep\right)^{-1}dz.$$
As usual, it is $(b_0+\varepsilon,q)$-periodic and $Q_\varepsilon = \widetilde{P}_\varepsilon +  \mathcal{O}^{\exp}(\varepsilon) $. Define the local, self-adjoint operator
$$B_\vep\coloneq  \frac{1}{2\pi i} \oint_{|z-1|=1/2} { z^{-1/2}}\left(z-\widetilde{P}_\vep\right)^{-1}dz.$$
Then
$$Q_\vep =B_\vep \widetilde{P}_\vep B_\vep=\sum_{\gamma\in q\Z\times \Z}\sum_{j=1}^m|\tau_{b_0+\vep,\gamma}B_\vep \Psi_{j}\rangle\langle \tau_{b_0+\vep,\gamma}B_\vep\Psi_{j}|.$$
This shows that $\tau_{b_0+\vep,\gamma}B_{\vep} \Psi_j$ is an exponentially localized Parseval frame of $Q_\varepsilon$. Finally, since $P_{k,\vep}-Q_\vep=\mathcal{O}^{\exp}(\vep)$, for $\varepsilon$ near zero they are intertwined by a local, $(b_0+\varepsilon,q)$-periodic Kato-Nagy unitary $U_{k,\vep}$ such that $P_{k,\vep}=U_{k,\vep} Q_\vep U_{k,\vep}^*$, hence 
$$P_{k,\vep} =\sum_{\gamma\in q\Z\times \Z}\sum_{j=1}^m |\tau_{b_0+\vep,\gamma}U_{k,\vep} B_\vep \Psi_{j}\rangle\langle \tau_{b_0+\vep,\gamma}U_{k,\vep} B_\vep\Psi_{j}|.$$

The same can be done for $P_\varepsilon$ as $\|P_\varepsilon-\O_\varepsilon(P_0)\|\to0$, so there is a Kato-Nagy unitary $U_\varepsilon$ such that $P_{\vep}=U_{\vep} Q_\vep U_{\vep}^*$, therefore
$$P_{\vep} =\sum_{\gamma\in q\Z\times \Z}\sum_{j=1}^m |\tau_{b_0+\vep,\gamma}U_{\vep} B_\vep \Psi_{j}\rangle\langle \tau_{b_0+\vep,\gamma}U_{\vep} B_\vep\Psi_{j}|.$$
We note though that here the exponential decay insured by $U_\vep$ is inherited from the one of $P_\vep$, which in the locally gapped case might become weaker and weaker as $\vep\to 0$.
%In general, $\tau_{b_0+\varepsilon,\gamma}U_{\vep} B_\vep \Psi_{j}$ form a Parseval frame of $P_{\varepsilon}$, while $ \tau_{b_0+\varepsilon,\gamma}U_{k,\vep} B_\vep \Psi_{j}$ form a Parseval frame of $P_{k,\varepsilon}$.

\appendix

\section{Fiber decomposition in the presence of a rational magnetic flux}\label{sec:fiber_decomposition}

Assume that the magnetic flux is rational, namely $b_0=\frac{2\pi p}{q}$ where $p$ and $q$ are coprime integers.
Define 
\begin{equation*}
    T_{\alpha} = (-1)^{p\alpha_1\alpha_2} \tau_{b_0, (q\alpha_1,\alpha_2)}, \qquad \forall\alpha\in\Z^2.
\end{equation*}
The composition rule $$\tau_{b_0,\gamma}\circ\tau_{b_0,\gamma'}= e^{-ib_0\varphi(\gamma,\gamma')} \tau_{b_0,\gamma+\gamma'} = (-1)^{\left[\frac{p}{q}(\gamma_1\gamma_2'-\gamma_2\gamma_1')\right]}  \tau_{b_0, \gamma+\gamma'}$$
implies that $T_\alpha\circ T_{\alpha'} = T_{\alpha+\alpha'}$, so the family $T_\alpha$ constitutes a unitary representation of the family $ \tau_{b_0, (q\alpha_1,\alpha_2)}$.
In this process, we lose information on commutation with translations $ \tau_{b_0, (j,0)}$ for $j\in\{1,\dots ,q-1\}$, essentially, we are considering a larger unit cell with $n q$ elements.
Denoting the $a^{th}$ component of a vector $v$ as $v|_a$, we can use the Gelfand transform 
\begin{equation*}
    \mathcal{U}\colon\ell^2(\Z^2,\C^n) \rightarrow L^2(\mathbb{T}^2,dk,\C^{n q}) \quad \mbox{s.t.} \quad \forall \ 0\le j\le q-1, \  1\le a\le n,\  k \in\mathbb{T}^2 
\end{equation*}

\begin{equation*}
    (\mathcal{U}f)(k)|_{nj+a} = \frac{1}{2\pi} \sum_{\alpha\in\Z^2} T_\alpha(f)(j,0)|_a \cdot e^{-ik\cdot\alpha}
\end{equation*}
It is a unitary operator because $\mathcal{U}(\delta_{\gamma,a})$ is a vector with all components equal to zero except for the component $j^{th}$ which is equal to the function $\frac{1}{2\pi} e^{i\left[k_1\frac{\gamma_1-j}{q}+k_2\gamma_2\right ] }$ where $j$ is the unique number in $\{0,\dots, q-1\}$ such that $\gamma_1-j\in q\Z$.
By definition, $\mathcal{U}T_\alpha \mathcal{U}^{-1}$ is the multiplication operator $e^{ik\cdot \alpha}$, so an operator $A$ commutes with all $T_\alpha$ if and only if $\mathcal{U} A\mathcal{U}^{-1}$ admits a fiber decomposition, which is a family $a(k)$ of $nq\times nq$ matrices such that $(\mathcal{U} A\mathcal{U}^{-1}f)(k) = a(k)f(k)$.
The fibers can be expressed in terms of the hopping matrices of $A$ as follows
\begin{equation*}\label{eq:fiber_of_periodic_operator}a(k)|_{jn+a,j'n+a'} = \sum_{\alpha\in\Z^2} e^{ik\cdot \alpha} e^{-i\pi\frac{p}{q}\alpha_2(j'+q\alpha_1) }\left.A\big((j,0),(j'+q\alpha_1,\alpha_2)\big)\right|_{a,a'} \ .
\end{equation*}
The standard correspondence between real analytic functions and exponentially decreasing Fourier coefficients leads to the fact that the fibers are real analytic if and only if 
$$\left\|\left.A\big((j,0),(j'+q\alpha_1,\alpha_2)\big)\right|_{a,a'} \right\|\le C e^{-w|\alpha|}$$
for two constants $C,w>0$ that depend on $a,a',j,j'$. Since there is a finite number of parameters, the equation holds if and only if there are two fixed constants $C',w'>0$ such that 
$$\left\|\left.A\big((j,0),(j'+q\alpha_1,\alpha_2)\big)\right|_{a,a'} \right\|\le C' e^{-w'| (j'+q\alpha_1-j, \alpha_2)|}.$$
This justifies the validity of the following Lemma:
\begin{lemma}\label{lem:local+periodic=analytic_fibers}
    An operator $A$ is local and $(b_0,q)$-periodic if and only if $\mathcal{U} A \mathcal{U}^{-1}$ is decomposable in real analytic fibers.
\end{lemma}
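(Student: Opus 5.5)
The plan is to prove both implications by reducing locality plus periodicity to a decay statement about the finitely many ``reference rows'' of hopping matrices, and then to invoke the Paley--Wiener-type correspondence between exponentially decaying Fourier coefficients and real analytic functions on $\mathbb{T}^2$.

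\emph{Forward direction.} Let $A$ be local and $(b_0,q)$-periodic. Periodicity means $A$ commutes with every $\tau_{b_0,(q\alpha_1,\alpha_2)}$, hence with every $T_\alpha$, since $T_\alpha$ differs from it only by a sign. Because $\mathcal{U}T_\alpha\mathcal{U}^{-1}$ is multiplication by $e^{ik\cdot\alpha}$, the operator $\mathcal{U}A\mathcal{U}^{-1}$ commutes with all these characters. Such an operator is decomposable: compute it on $\mathcal{U}\delta_{\gamma,a}$ and use linearity and density of trigonometric polynomials. Its fibers are then given by the Fourier series \eqref{eq:fiber_of_periodic_operator}. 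The phase $e^{-i\pi\frac{p}{q}\alpha_2(j'+q\alpha_1)}$ has modulus one. Locality gives
\[
\|A((j,0),(j'+q\alpha_1,\alpha_2))\|\le Ce^{-w|(j'+q\alpha_1-j,\alpha_2)|}\le C'e^{-w''|\alpha|},
\]
with $w''$ depending only on $w$ and $q$, because $|j'-j|<q$. So each matrix entry of $a(k)$ is a Fourier series with exponentially decaying coefficients. It therefore extends holomorphically to the complex strip $|\mathrm{Im}\,k_i|<w''$, and in particular is real analytic.

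\emph{Converse.} Assume $\mathcal{U}A\mathcal{U}^{-1}$ is multiplication by a real analytic matrix function $a(k)$. Then $A$ commutes with all $T_\alpha$, and hence is $(b_0,q)$-periodic. Real analyticity on the compact torus yields a uniform holomorphic extension to a strip; here I would cover $\mathbb{T}^2$ by finitely many polydiscs of convergence. Shifting the contour in the Fourier integrals then gives exponentially decaying coefficients, so the reference hopping matrices $A((j,0),(j'+q\alpha_1,\alpha_2))$ decay like $e^{-w|\alpha|}$. There are only finitely many index pairs $(j,j',a,a')$, so the constants can be taken uniform, and this decay converts back into $e^{-w'|(j'+q\alpha_1-j,\alpha_2)|}$.

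It remains to pass from the reference rows to arbitrary pairs $(\gamma,\gamma')$. Write $\gamma=(j+q\beta_1,\beta_2)$. The periodicity relation \eqref{eq:condition_commuting_with_mag_translations}, applied with $\gamma''=(q\beta_1,\beta_2)$, gives
\[
\|A(\gamma,\gamma')\|=\|A((j,0),\gamma'-\gamma'')\|,
\]
since the magnetic phases have modulus one. Moreover $|\gamma-\gamma'|=|(j,0)-(\gamma'-\gamma'')|$, so the reference-row bound transfers to all hopping matrices, which proves locality.

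The main obstacle is purely bookkeeping. One has to make sure the inverse formula for $\mathcal{U}$ really identifies the Fourier coefficients of $a(k)$ with exactly these hopping matrices, including the sign $(-1)^{p\alpha_1\alpha_2}$ and the phase $e^{-i\pi\frac{p}{q}\alpha_2(j'+q\alpha_1)}$. Since all of these have modulus one, they do not affect the decay, but the identification itself must be checked carefully.
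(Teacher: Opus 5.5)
Your proposal is correct and follows the paper's argument. Periodicity is equivalent to commuting with the $T_\alpha$, hence to a fiber decomposition whose Fourier coefficients are, up to unimodular phases, the hopping matrices $A((j,0),(j'+q\alpha_1,\alpha_2))$; the equivalence between real analyticity and exponentially decaying Fourier coefficients, with constants made uniform over the finitely many indices, then closes the argument. Your final step, which transfers the bound from the reference rows $(j,0)$ to arbitrary pairs $(\gamma,\gamma')$ via \eqref{eq:condition_commuting_with_mag_translations}, makes explicit a point the paper leaves implicit.
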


We can apply the previous lemma to $H_{b_0}$ and have that its spectral properties are intertwined with the eigenvalues of its fibers $h(k)$ as $\sigma\left(H_{b_0}\right) = \bigcup_{k\in\mathbb{T}^2}\sigma (h(k))$.
If we label the eigenvalues of $h(k)$ in increasing order $\lambda_1(k)\le\cdots\le\lambda_{nq}(k)$, there are three scenarios that occur:
\begin{enumerate}
    \item Crossing eigenvalues: For some $\tilde{k}$ two eigenvalues are equal $\lambda_a(\tilde{k}) = \lambda_{a+1}(\tilde{k})$. 
    
    \item Separate eigenvalues: Two eigenvalues never touch $\lambda_a(k) < \lambda_{a+1}(k)$ for all $k\in \mathbb{T}^2$, but the overall spans are not separated.
    
    \item Strictly gapped eigenvalues: There is a value $c\in\R$ such that $\lambda_a(k) < c<\lambda_{a+1}(k)$ for all $k\in \mathbb{T}^2$.
\end{enumerate}

\begin{figure}[h]
\centering

\begin{minipage}{0.32\textwidth}
\centering
\begin{tikzpicture}
\begin{axis}[
    width=\linewidth,
    height=4cm,
    axis lines=middle,
    xlabel={$k$},
    ylabel={$\sigma(a(k))$},
    xtick=\empty,
    ytick=\empty,
    domain=0:6.3,
    samples=100,
]
\addplot[black, thick] {0.5-0.5*sin(deg(x))};
\addplot[blue, thick] {-0.5+0.5*sin(deg(x))};
\end{axis}
\end{tikzpicture}
\end{minipage}
\hfill
\begin{minipage}{0.32\textwidth}
\centering
\begin{tikzpicture}
\begin{axis}[
    width=\linewidth,
    height=4cm,
    axis lines=middle,
    xlabel={$k$},
    ylabel={$\sigma(a(k))$},
    xtick=\empty,
    ytick=\empty,
    domain=0:6.3,
    samples=100,
]
\addplot[blue, thick] {-0.2-0.5*sin(deg(x))};
\addplot[black, thick] {+0.2-0.5*sin(deg(x))};
\end{axis}
\end{tikzpicture}
\end{minipage}
\hfill
\begin{minipage}{0.32\textwidth}
\centering
\begin{tikzpicture}
\begin{axis}[
    width=\linewidth,
    height=4cm,
    axis lines=middle,
    xlabel={$k$},
    ylabel={$\sigma(a(k))$},
    xtick=\empty,
    ytick=\empty,
    domain=0:6.3,
    samples=100,
]
\addplot[black, thick] {0.5 - 0.4*sin(deg(x))};
\addplot[blue, thick] {-0.5 - 0.4*sin(deg(x))};
\end{axis}
\end{tikzpicture}
\end{minipage}

\caption{Representation of the three scenarios in one-dimension}
\end{figure}

If for $a,b$ either $a=1$ or $(\lambda_{a-1},\lambda_a)$ are separated and either $a+b=nq$ or $(\lambda_{a+b},\lambda_{a+b+1})$ are separated, then we call $\lambda_a\le \cdots\le \lambda_{a+b}$ a \emph{local spectral island} of $H_{b_0}$. Similarly, if we have strictly gapped pairs instead of separate pairs, then we call $\lambda_a\le \cdots\le \lambda_{a+b}$ a \emph{strict spectral island}.

\begin{lemma}\label{lem:P_0_local}
    Suppose that a real analytic matrix-valued function $h\colon\mathbb{T}^2\to M^{nq,nq}(\C)$ has a set of crossing eigenvalues $\lambda_a(k)\le \cdots\le \lambda_{a+b}(k)$ such that the pairs $(\lambda_{a-1},\lambda_a)$ and $(\lambda_{a+b},\lambda_{a+b+1})$ are separated.
    If $p\colon\mathbb{T}^2\to M^{nq,nq}(\C)$ is a matrix-valued function such that $p(k)$ projects into the combined eigenspaces of the selected crossing eigenvalues, then it is a real analytic function.
\end{lemma}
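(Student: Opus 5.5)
We prove analyticity locally via the Riesz formula, then use compactness of $\mathbb{T}^2$. The separation assumption says that, for every $k$, the group $\lambda_a(k),\dots,\lambda_{a+b}(k)$ is strictly separated from the rest of $\sigma(h(k))$:
\[\lambda_{a-1}(k)<\lambda_a(k) \quad\text{and}\quad \lambda_{a+b}(k)<\lambda_{a+b+1}(k),\]
with the obvious convention when $a=1$ or $a+b=nq$. Individual eigenvalues inside the group may cross, but this does not affect the total projection $p(k)$. The aim is therefore to write $p(k)$ near each point as a contour integral of the resolvent, with a contour independent of $k$ on a neighbourhood.

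\textbf{Step 1 (local contour).} Fix $k_0\in\mathbb{T}^2$ and set
\[\delta=\tfrac13\min\big\{\lambda_a(k_0)-\lambda_{a-1}(k_0),\ \lambda_{a+b+1}(k_0)-\lambda_{a+b}(k_0)\big\}>0.\]
The ordered eigenvalues of a Hermitian matrix are Lipschitz with respect to the operator norm (Weyl's inequality), and $h$ is continuous. Hence there is a neighbourhood $V$ of $k_0$ on which $|\lambda_j(k)-\lambda_j(k_0)|<\delta$ for all $j$. Let $\mathcal{C}$ be a fixed simple closed curve in $\C$, for instance a rectangle, crossing the real axis at $\lambda_{a-1}(k_0)+\tfrac32\delta$ and $\lambda_{a+b+1}(k_0)-\tfrac32\delta$. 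For $k\in V$, $\mathcal{C}$ encloses exactly $\lambda_a(k),\dots,\lambda_{a+b}(k)$ and stays at distance at least $\delta/2$ from all of $\sigma(h(k))$.

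\textbf{Step 2 (Riesz formula).} For $k\in V$, since $h(k)$ is self-adjoint, the Riesz formula gives
\[p(k)=\frac{1}{2\pi i}\oint_{\mathcal{C}}(z-h(k))^{-1}\,dz,\]
because the Riesz projection of a normal matrix onto an isolated part of its spectrum is the orthogonal projection onto the corresponding eigenspaces.

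\textbf{Step 3 (analyticity).} A real analytic function $h$ on $V$ extends to a holomorphic matrix function $\tilde h$ on a complex neighbourhood $\tilde V\subset\C^2$. Shrinking $\tilde V$, we can keep $\|\tilde h(\kappa)-h(k_0)\|<\delta/4$. A Neumann series around $h(k_0)$ then shows that $z-\tilde h(\kappa)$ is invertible for $z\in\mathcal{C}$, with inverse jointly continuous in $(z,\kappa)$ and holomorphic in $\kappa$. Cramer's rule gives the same conclusion: the entries are rational in the entries of $\tilde h$ and the determinant does not vanish. Integrating over the compact contour preserves holomorphy (differentiate under the integral, or apply Morera's theorem). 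Hence $p$ extends holomorphically near $k_0$, so it is real analytic at $k_0$. Since $k_0$ was arbitrary, $p$ is real analytic on all of $\mathbb{T}^2$.

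\textbf{Main obstacle.} The only delicate point is that the contour cannot be chosen globally, because the ranges of the bands overlap with those of other bands across $\mathbb{T}^2$. This is why the argument uses a locally constant contour, with its $k$-uniformity secured by eigenvalue continuity in Step 1. Analyticity is a local property, so no patching of the local formulas is required. Finally, the functions $g_j$ and contours $\mathcal{C}_j$ used later in Definition~\ref{def:T} are presumably exactly this data, made global with a finite cover from compactness of $\mathbb{T}^2$ and a partition of unity. This gives
\[p_0(k)=\sum_{j}\frac{g_j(k)}{2\pi i}\oint_{\mathcal{C}_j}(z-h(k))^{-1}\,dz.\]
This representation, together with Lemma~\ref{lem:local+periodic=analytic_fibers}, yields locality of $P_0$.
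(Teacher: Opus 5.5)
Your proof is correct and takes essentially the paper's route. Around each $k_0$ you fix a locally constant contour (the paper's curve $\mathcal{N}$ through $c,d$), which turns $p(k)$ into a single Riesz integral on a neighbourhood, so $p$ inherits real analyticity from $h$. The only differences are that the paper first writes the global partition-of-unity formula and then shows it collapses locally to one contour, while you work locally from the start and spell out the holomorphic-extension step that the paper leaves implicit.
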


\begin{proof}
    We need a finite partition of unity on the torus $k \in \mathbb{T}^2$ $$\sum_{j=1}^L g_j(k) \equiv 1$$
    with an associated set of closed simple curves $\mathcal{C}_j \subset \C$ with the following property: if $k$ belongs to the support of $g_j$, then the curve $\mathcal{C}_j$ encloses the eigenvalues $\lambda_a(k)\le \cdots\le \lambda_{a+b} (k)$, leaving the other eigenvalues outside.
    Since it is always possible to construct such a partition, we have the following equality:
    \begin{equation}\label{eq:fibers_of_P_0}
        p(k)=\sum_{j=1}^L g_j(k) \frac{1}{2\pi i}\oint_{\mathcal{C}_j} (z-h(k))^{-1} dz .
    \end{equation}
    For every $k_0\in\mathbb{T}^2$ it is possible to select a small neighborhood $k_0 \in N\subset \mathbb{T}^2$ and two constants $c,d\in\R$ such that $\lambda_{a-1}(k)<c<\lambda_a(k)$ and $\lambda_{a+b}(k)<d<\lambda_{a+b+1}(k)$ for all $k\in N$.
    Then we can define a closed, simple curve $\mathcal{N}\subset \C$ enclosing the interval $[c,d]$ that crosses the real axis exactly at points $c,d$, thus enclosing all the eigenvalues $\lambda_a(k)\le \cdots\le \lambda_{a+b} (k)$ for $k\in N$.
    In points $k\in N$, it holds that $$\sum_{j=1}^L g_j(k) \frac{1}{2\pi i}\oint_{\mathcal{C}_j} (z-h(k))^{-1} dz = \sum_{j=1}^L g_j(k) \frac{1}{2\pi i}\oint_{\mathcal{N}} (z-h(k))^{-1} dz = \frac{1}{2\pi i}\oint_{\mathcal{N}} (z-h(k))^{-1} dz.$$
    So the regularity of $p(k)$ is the same as $h(k)$, which is real analytic.
\end{proof}

We can use this Lemma to define $P_0 (f) = \mathcal{U}^{-1}[p(k) (\mathcal{U}f)(k)]$, a projection associated to a spectral island of $H_{b_0}$. Since each fiber commutes with $h(k)$, we have $\left[P_0,H_{b_0}\right]=0$, but $P$ may not be a spectral projection of $H_{b_0}$.

In the strictly gapped case, one can select $c,d$ such that $\lambda_{a-1}(k)<c<\lambda_a(k)$ and $\lambda_{a+b}(k)<d<\lambda_{a+b+1}(k)$ for all $k\in \mathbb{T}^2$, so we can avoid the construction with the partition of unity and have that 
\begin{equation}\label{eq:strictly_gaped_zero_projection}
    p_{0}(k)=\frac{1}{2\pi i}\oint_{\mathcal{C}} (z-h(k))^{-1} dz.
\end{equation}
In this case $P_0=\Id_{[c,d]}\left(H_{b_0}\right)$ where $\Id$ denotes the characteristic function, hence $P_0$ is the spectral projection of $H_{b_0}$.

\section{Exponential localization}\label{sec:exp_localization}
\begin{lemma}\label{lem:locality_VS_sums_compositions}
    Sums and compositions of operators in $\mathcal{B}^{\exp}(\mathcal{H})$ are still in $\mathcal{B}^{\exp}(\mathcal{H})$. 
\end{lemma}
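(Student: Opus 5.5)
The plan is to verify the definition directly on hopping matrices. Take $A,B\in\mathcal{B}^{\exp}(\mathcal{H})$ with constants $(C_A,w_A)$ and $(C_B,w_B)$, and set $w=\min\{w_A,w_B\}$.

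\emph{Sums.} This is immediate from the triangle inequality:
$$\|(A+B)(\gamma,\gamma')\|\le (C_A+C_B)e^{-w|\gamma-\gamma'|}.$$
So $A+B$ is local with decay rate $w$.

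\emph{Compositions.} First I need the product to make sense. A local operator is bounded, by a Schur test: $\sup_\gamma\sum_{\gamma'}C e^{-w|\gamma-\gamma'|}<\infty$, and likewise with the roles of $\gamma,\gamma'$ exchanged. (This is the estimate later recorded in Lemma \ref{lem:locality_VS_continuity}.) Hence $AB$ is a bounded operator, and its hopping matrices are
$$(AB)(\gamma,\gamma')=\sum_{\gamma''}A(\gamma,\gamma'')B(\gamma'',\gamma'),$$
where the series converges absolutely.

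Next I bound each term. Split the weight as $e^{-w|x|}=e^{-(w/2)|x|}e^{-(w/2)|x|}$ and apply the triangle inequality $|\gamma-\gamma''|+|\gamma''-\gamma'|\ge|\gamma-\gamma'|$ to one half of each factor. This gives
$$\|(AB)(\gamma,\gamma')\|\le C_AC_B\,e^{-\frac w2|\gamma-\gamma'|}\sum_{\gamma''}e^{-\frac w2|\gamma-\gamma''|}e^{-\frac w2|\gamma''-\gamma'|}.$$

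Finally I bound the remaining sum by $\sum_{\gamma''}e^{-\frac w2|\gamma-\gamma''|}=\sum_{\delta\in\Z^2}e^{-\frac w2|\delta|}<\infty$, which is independent of $\gamma$ and $\gamma'$. Therefore $AB$ is local with decay rate $w/2$ and constant $C_AC_B\sum_\delta e^{-w|\delta|/2}$.

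\emph{Main obstacle.} The only delicate point is bookkeeping: the decay rate must be halved, since one cannot keep the full rate $w$ in general. This halving explains the degradation of decay rates noted elsewhere in the paper. I would also record that the constants depend only on $C_A,C_B,w$. That uniformity is what makes the same argument give $\mathcal{O}^{\exp}(\vep^k)\,\mathcal{O}^{\exp}(\vep^l)=\mathcal{O}^{\exp}(\vep^{k+l})$ with $\vep$-independent constants, as used in Section \ref{sec:P_epsilon_construction}.
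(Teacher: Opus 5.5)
Your proof is correct and is essentially the paper's argument. Both bound the hopping matrix of $AB$ by a lattice convolution of exponentials, then use the triangle inequality to extract a factor $e^{-(\cdot)|\gamma-\gamma'|}$ times a summable remainder.

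The only difference is bookkeeping, and it shows your closing remark is inaccurate: the rate does not have to be halved. The paper takes $a<b$ (shrinking one rate slightly if they coincide) and uses $|\gamma-\gamma'-\gamma''|\ge|\gamma-\gamma'|-|\gamma''|$ to get $\|(AB)(\gamma,\gamma')\|\le CD\,e^{-a|\gamma-\gamma'|}\sum_{\gamma''}e^{-(b-a)|\gamma''|}$. So any rate strictly below $\min\{w_A,w_B\}$ survives, even under repeated compositions, where your symmetric split would compound the loss. Also, the rate degradation the paper remarks on is attributed to the resolvent step (Lemma \ref{lem:locality_of_resolvent}), not to products.
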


\begin{proof}
    Consider two operators $A,B \colon\mathcal{H} \to \mathcal{H}$ whose hopping matrices obey the constraints $$\|A(\gamma,\gamma')\|\le Ce^{-a |\gamma-\gamma'|},\qquad \|B(\gamma-\gamma')\| \le De^{-b |\gamma-\gamma'|}. $$
    At the cost of having a less strict constraint, we can always have $a <b $. Then their sum has hopping matrices such that $$ \|A(\gamma,\gamma')+B(\gamma,\gamma')\|\le (C+D)e^{-a |\gamma-\gamma'|}.$$
    For their product, we have
    $$\left\|(A\circ B)(\gamma,\gamma')\right\|=\left\|\sum_{\gamma''\in\Z^2}A(\gamma,\gamma'')B(\gamma'',\gamma')\right\|\le CD \sum_{\gamma''\in\Z^2} e^{-a  |\gamma-\gamma''|-b |\gamma''-\gamma'|}.$$
    Up to a translation $\gamma'' = \gamma''+\gamma'$, the last term is equal to $$CD \sum_{\gamma''\in\Z^2} e^{-a  |\gamma-\gamma'-\gamma''|-b |\gamma''|}\le CD e^{-a  |\gamma-\gamma'|} \sum_{\gamma''\in\Z^2} e^{-(b -a ) |\gamma''|}.$$
Since the last series converges to a real number, we have completed the proof.
\end{proof}

\begin{lemma}\label{lem:locality_VS_continuity}
    Local operators are also continuous ($\mathcal{B}^{\exp}(\mathcal{H })\subset \mathcal{B}(\mathcal{H})$), in particular, if $$\|A(\gamma,\gamma')\|\le Ce^{-a |\gamma-\gamma'|} \quad \mbox{and} \quad \sum_{\gamma\in\Z^2} e^{-a |\gamma|} = S_a ,\quad \mbox{then} \quad \|A\|\le CS_a .$$
\end{lemma}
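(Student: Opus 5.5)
The bound follows from a Schur test on the hopping matrices. Write $A$ as a block matrix with entries $A(\gamma,\gamma')\in M(n\times n,\C)$. For $f,g\in\mathcal{H}$ we have
$$|\langle g, Af\rangle| \le \sum_{\gamma,\gamma'\in\Z^2} \|g(\gamma)\|\,\|A(\gamma,\gamma')\|\,\|f(\gamma')\| \le C\sum_{\gamma,\gamma'} e^{-a|\gamma-\gamma'|}\,\|g(\gamma)\|\,\|f(\gamma')\|,$$
so it suffices to bound the scalar kernel $K(\gamma,\gamma')=e^{-a|\gamma-\gamma'|}$ as an operator on $\ell^2(\Z^2)$.

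First split $e^{-a|\gamma-\gamma'|}=e^{-a|\gamma-\gamma'|/2}\,e^{-a|\gamma-\gamma'|/2}$. Then apply Cauchy--Schwarz in the form
$$\|g(\gamma)\|\,\|f(\gamma')\|\le \tfrac12\big(\|g(\gamma)\|^2+\|f(\gamma')\|^2\big).$$
Equivalently, use the weighted version with both factors. Summing over the free variable, translation invariance of the kernel gives
$$\sup_\gamma\sum_{\gamma'}e^{-a|\gamma-\gamma'|}=\sup_{\gamma'}\sum_{\gamma}e^{-a|\gamma-\gamma'|}=\sum_{\gamma\in\Z^2}e^{-a|\gamma|}=S_a.$$
This sum is finite because the number of lattice points with $|\gamma|\in[r,r+1)$ grows only linearly in $r$.

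Hence
$$|\langle g,Af\rangle|\le C\Big(\sum_{\gamma,\gamma'}K\,\|g(\gamma)\|^2\Big)^{1/2}\Big(\sum_{\gamma,\gamma'}K\,\|f(\gamma')\|^2\Big)^{1/2}\le C S_a\|g\|\,\|f\|,$$
which yields $\|A\|\le CS_a$. In particular, $A$ defined on finitely supported vectors extends to a bounded operator, giving $\mathcal{B}^{\exp}(\mathcal{H})\subset\mathcal{B}(\mathcal{H})$.

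There is no real obstacle. The only points to check are that the matrix norm of each block controls the contribution of that block, and that the row and column sums coincide by the symmetry of $|\gamma-\gamma'|$.
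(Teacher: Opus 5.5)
Your proposal is correct and follows the same route as the paper, which proves the lemma simply by invoking the Schur test. Your Cauchy--Schwarz step with the kernel split as $K^{1/2}K^{1/2}$, combined with the equal row and column sums $S_a$, is exactly that test; the aside about the inequality $\|g(\gamma)\|\,\|f(\gamma')\|\le\tfrac12(\|g(\gamma)\|^2+\|f(\gamma')\|^2)$ is unnecessary but harmless.
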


\begin{proof}
    It is a consequence of the Schur test to local operators. 
    %Start with
    %$$\|A(\gamma,\gamma') f(\gamma')\| \le \|A(\gamma,\gamma')\|\|f(\gamma')\| \le \sqrt{\|A(\gamma,\gamma')\|} \sqrt{\|A(\gamma,\gamma')\| \|f(\gamma)\|^2}.$$
    %The Cauchy-Schwarz inequality implies that:
    %$$\|(Af)(\gamma)\|^2 = \left\|\sum_{\gamma'\in\Z^2} A(\gamma,\gamma')f(\gamma') \right\|^2\le \left(\sum_{\gamma'\in\Z^2} \|A(\gamma,\gamma')\|\right)\left(\sum_{\gamma'\in\Z^2} \|A(\gamma,\gamma')\| \|f(\gamma')\|^2\right)$$
    %The first factor can be estimated with $CS_a $, so:
    %\begin{align*}
     %   \|Af\|^2  &= \sum_{\gamma\in\Z^2} \|(Af)(\gamma)\|^2\le CS_a  \sum_{\gamma,\gamma'\in\Z^2} \|A(\gamma,\gamma')\| \|f(\gamma')\|^2 \\ 
        %&\le CS_a  \sum_{\gamma'\in\Z^2} \|f(\gamma')\|^2 \sum_{\gamma\in\Z^2}\|A(\gamma,\gamma')\|\le (CS_a )^2 \|f\|^2\tag*{\qedhere}
    %\end{align*}
\end{proof}

%\begin{example}
 %   Fix $n=1$ and consider the operators $A,A_l\in\mathcal{B}(\mathcal{H})$ with hopping matrices
  %  $$A(\gamma,\gamma')=\begin{cases}
   %     \frac{1}{|\gamma|+1} &\mbox{if }\gamma'=-\gamma, \\0&\mbox{otherwise,}
    %\end{cases} \quad A_l(\gamma,\gamma')=\begin{cases}
     %   \frac{1}{|\gamma|+1} &\mbox{if }\gamma'=-\gamma \mbox{ and }\|\gamma\|\le l,\\0&\mbox{otherwise.}
    %\end{cases}$$
   % For every $l\in\N$, $A_l$ has a finite number of non-trivial hopping matrices, so $A_l\in \mathcal{B}^{\exp}(\mathcal{H })$.
    %Moreover, $\lim_{l\to\infty} \|A-A_l\|=0$, however, $A\notin \mathcal{B}^{\exp}(\mathcal{H })$. 
%\end{example}

\begin{lemma}\label{lem:locality_of_resolvent}
    Let $A\in \mathcal{B}^{\exp}(\mathcal{H})$ be a local operator such that $\|A(\gamma,\gamma')\|\le C e^{-a  |\gamma-\gamma'|}$ and fix $z\not\in \sigma(A)$. Then $(A-z)^{-1}\in \mathcal{B}^{\exp}(\mathcal{H})$.
    \end{lemma}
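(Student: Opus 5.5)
We prove that the resolvent of a local operator is local by the standard Combes--Thomas argument: conjugate by exponential weights and use a Neumann series. For a unit vector $e\in\R^2$ and a real parameter $0\le \mu< a$, let $M_{\mu}$ denote multiplication by $e^{\mu\, e\cdot\gamma}$. This operator is unbounded, but it maps finitely supported vectors to themselves. Define $A_{\mu}=M_{\mu}AM_{\mu}^{-1}$ through its hopping matrices
$$A_\mu(\gamma,\gamma')=e^{\mu\, e\cdot(\gamma-\gamma')}A(\gamma,\gamma').$$
Since $\|A(\gamma,\gamma')\|\le Ce^{-a|\gamma-\gamma'|}$, the Schur test in Lemma \ref{lem:locality_VS_continuity} shows that $A_\mu$ is bounded. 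Moreover,
$$\|A_\mu-A\|\le C\sum_{\gamma}\big(e^{\mu|\gamma|}-1\big)e^{-a|\gamma|}\le C\mu\sum_\gamma|\gamma|e^{-(a-\mu)|\gamma|}.$$
For $\mu\le a/2$ this is at most $C'\mu$, with $C'$ independent of the direction $e$.

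Fix $z\notin\sigma(A)$ and set $d=\|(A-z)^{-1}\|<\infty$. Choose $\mu_0\le a/2$ such that $C'\mu_0 d\le 1/2$. For $\mu\le\mu_0$ we write
$$A_\mu-z=(A-z)\big(\Id+(A-z)^{-1}(A_\mu-A)\big).$$
The Neumann series shows that $A_\mu-z$ is boundedly invertible, with $\|(A_\mu-z)^{-1}\|\le 2d$ uniformly in $e$. Next we identify this inverse with the conjugated resolvent: we claim that
$$(A_\mu-z)^{-1}(\gamma,\gamma')=e^{\mu e\cdot(\gamma-\gamma')}(A-z)^{-1}(\gamma,\gamma').$$
This is the main technical point, because $M_\mu$ is unbounded. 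To justify it, we truncate the weight to $\min(e^{\mu e\cdot\gamma},e^{\mu R})$-type bounded weights, or equivalently work with matrix elements between $\delta_{\gamma,a}$ and $\delta_{\gamma',b}$. We check the identity $M_\mu(A-z)M_\mu^{-1}\,M_\mu(A-z)^{-1}M_\mu^{-1}=\Id$ on finitely supported vectors, where the kernel sums converge absolutely by the exponential decay of $A$. A cleaner alternative is to use bounded weights $e^{\mu\langle e\cdot\gamma\rangle_R}$ with a cutoff at level $R$; these satisfy the same estimate $\|A_\mu-A\|\le C'\mu$ uniformly in $R$ and are bounded, so the identity holds trivially. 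We then let $R\to\infty$ entrywise.

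Given the identity, we bound the matrix entries:
$$\|(A-z)^{-1}(\gamma,\gamma')\|\,e^{\mu e\cdot(\gamma-\gamma')}\le\|(A_\mu-z)^{-1}\|\le 2d.$$
Choosing $e=(\gamma-\gamma')/|\gamma-\gamma'|$ gives
$$\|(A-z)^{-1}(\gamma,\gamma')\|\le 2d\,e^{-\mu_0|\gamma-\gamma'|},$$
so $(A-z)^{-1}\in\mathcal{B}^{\exp}(\mathcal{H})$ with decay rate $\mu_0\sim\min\{a/2,\,(2C'd)^{-1}\}$. The decay rate shrinks as the distance from $z$ to $\sigma(A)$ shrinks, which is consistent with the remark about the loss of decay rate in the iteration. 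The estimate is uniform for $z$ in compact subsets of the resolvent set, and this uniformity is what the Riesz-integral arguments used earlier in the paper require.
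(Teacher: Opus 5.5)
Your proof is correct and is the same Combes--Thomas argument the paper uses: you conjugate by exponential weights, bound $A_\mu-A$ by $C'\mu$ via the Schur test, invert $A_\mu-z$ by a Neumann series with norm at most $2\|(A-z)^{-1}\|$, and read off the entrywise decay. The only difference is the weight: the paper uses the radial weight $e^{d|\gamma-\gamma_0|}$, whose inverse is bounded, so identifying the conjugated resolvent on compactly supported vectors is immediate; your directional weights $e^{\mu e\cdot\gamma}$ are unbounded in both directions and really need the cutoff $e^{\mu\langle e\cdot\gamma\rangle_R}$ (the bare check on finitely supported vectors would not suffice, since $M_\mu(A-z)^{-1}M_\mu^{-1}f$ is not a priori in $\ell^2$), and that cutoff version is rigorous because the clipping is $1$-Lipschitz and all bounds are uniform in $R$.
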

    \begin{proof}
        Fix $0<d<a $ and $\gamma_0\in \Z^2$. Let $A_{d,\gamma_0}$ be defined via its hopping matrices 
        $$A_{d,\gamma_0}(\gamma,\gamma')=e^{d[|\gamma-\gamma_0|-|\gamma'-\gamma_0|]} A(\gamma,\gamma'). $$
        We have 
        $$\Vert A_{d,\gamma_0}(\gamma,\gamma')\Vert\leq C e^{-(a -d)|\gamma-\gamma'|}$$
        and Lemma \ref{lem:locality_VS_continuity} implies that $\|A_{d,\gamma_0}\|$ is bounded uniformly in $\gamma_0$. Using $|e^x-1|\leq e^{|x|}-1$ for all $x\in \R$ we have 
        \[\begin{split}\|A_{d,\gamma_0}(\gamma,\gamma')-A(\gamma,\gamma')\| &\leq \left|e^{d[|\gamma-\gamma_0|-|\gamma'-\gamma_0|]}-1\right| C\, e^{-a |\gamma-\gamma'|} \\
        & \le  \left|e^{d|\gamma-\gamma'|}-1\right| C\, e^{-a |\gamma-\gamma'| }= \left| 1-e^{-d|\gamma-\gamma'|}\right|     C e^{-(a-d) |\gamma-\gamma'| } \\&\le d|\gamma-\gamma'| C\, e^{-(a-d) |\gamma-\gamma'|}. \end{split}\]
        Up to rescaling $a '$ and $C'$, we can have $$|\gamma-\gamma'| C\, e^{-a |\gamma-\gamma'|}\le C' e^{-a '|\gamma-\gamma'|}, $$ so $A_{d,\gamma_0}-A=\mathcal{O}^{\exp}(d)$ uniformly in $\gamma_0$. 

        The norm of $(A-z)^{-1}\mathcal{O}^{\exp}(d)$ is of order 
        $d \Vert (A-z)^{-1}\Vert$
        hence it goes to zero with $d$,  uniformly in $\gamma_0$, allowing us to define the inverse of $\Id +(A-z)^{-1}\mathcal{O}^{\exp}(d)$ using the Neumann series for $d$ small enough. 
        
        Then, if $d=\delta(z)$ is small enough such that $\Vert  \mathcal{O}^{\exp}(d)\Vert \, \Vert (A-z)^{-1}\Vert<1/2$, the inverse of $A_{\delta(z),\gamma_0}-z$ can be calculated as follows 
        $$(A_{\delta(z),\gamma_0}-z)^{-1}=\big (\Id +(A-z)^{-1}\mathcal{O}^{\exp}(d)\big )^{-1} (A-z)^{-1},\quad \|(A_{\delta(z),\gamma_0}-z)^{-1}\| \le 2\Vert (A-z)^{-1}\Vert.$$

        Consider the operator $e^{-\delta(z) |\cdot -\gamma_0|} $ with hopping matrices $$ e^{-\delta(z) |\cdot -\gamma_0|}(\gamma,\gamma')=\begin{cases}
            \Id \cdot e^{-\delta(z) |\gamma -\gamma_0|}& \mbox{for }\gamma=\gamma', \\ 0&\mbox{otherwise}.
        \end{cases} $$ 
        Then, on compactly supported functions it holds that 
        $$(A-z)^{-1}=e^{-\delta(z)|\cdot -\gamma_0|}(A_{\delta(z),\gamma_0}-z)^{-1}e^{\delta(z)|\cdot -\gamma_0|}$$
        which leads to 
        $$(A-z)^{-1}e^{-\delta(z)|\cdot -\gamma_0|}=e^{-\delta(z)|\cdot -\gamma_0|}(A_{\delta(z),\gamma_0}-z)^{-1}$$
        and finally to
        $$e^{\delta(z)|\cdot -\gamma_0|}(A-z)^{-1}e^{-\delta(z)|\cdot -\gamma_0|}=(A_{\delta(z),\gamma_0}-z)^{-1}.$$
        It shows that 
        $$\left\|e^{\delta(z)|\cdot -\gamma_0|}(A-z)^{-1}e^{-\delta(z)|\cdot -\gamma_0|}\right\| \le 2\Vert (A-z)^{-1}\Vert$$
        is uniformly bounded in $\gamma_0$. This implies that its hopping matrices obey the following estimate 
        $$e^{\delta(z)|\gamma -\gamma_0|} \left\|(A-z)^{-1}(\gamma,\gamma_0)\right\| \le C \Vert (A-z)^{-1}\Vert $$
        for some constant $C>0$ and this concludes the proof. 
    \end{proof}

\begin{corollary}\label{cor:locality_VS_functional_calculus}
Let $A\in \mathcal{B}^{\exp}(\mathcal{H})$ be self-adjoint. Let $f$ be a complex-valued function which is analytic on a domain including the spectrum of $A$ and define $f(A)$ through analytic calculus. Then $f(A)\in \mathcal{B}^{\exp} (\mathcal{H})$. 
\end{corollary}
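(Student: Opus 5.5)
The plan is to reduce $f(A)$ to an integral of resolvents and to apply Lemma \ref{lem:locality_of_resolvent} with constants that are uniform along the contour. Since $A$ is self-adjoint, $\sigma(A)$ is a compact subset of $\R$. The domain of analyticity of $f$ is open and contains $\sigma(A)$, so I can choose a finite union $\mathcal{C}$ of smooth closed curves inside this domain that encloses $\sigma(A)$ and stays at positive distance $\rho>0$ from it. I then write
$$f(A)=\frac{1}{2\pi i}\oint_{\mathcal{C}} f(z)\,(z-A)^{-1}\,dz .$$
The hopping matrices of $f(A)$ are the corresponding integrals of the hopping matrices of the resolvent, because the map $B\mapsto B(\gamma,\gamma')=\langle \delta_{\gamma,a}|B\,\delta_{\gamma',b}\rangle$ is norm-continuous and the integral converges in operator norm. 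Hence $\|f(A)(\gamma,\gamma')\|\le \frac{|\mathcal{C}|}{2\pi}\sup_{\mathcal{C}}|f|\,\sup_{z\in\mathcal{C}}\|(z-A)^{-1}(\gamma,\gamma')\|$. The task therefore reduces to a bound $\|(z-A)^{-1}(\gamma,\gamma')\|\le C e^{-\delta|\gamma-\gamma'|}$ with $C,\delta$ independent of $z\in\mathcal{C}$.

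This uniformity is the only real issue, and I expect it to be the main obstacle. Lemma \ref{lem:locality_of_resolvent} is stated for a single $z$, so I will go back into its proof and check how the constants depend on $z$. There the decay rate $\delta(z)$ is any $d$ such that $\|\mathcal{O}^{\exp}(d)\|\,\|(A-z)^{-1}\|<1/2$, where $\mathcal{O}^{\exp}(d)=A_{d,\gamma_0}-A$ has norm at most $c\,d$ with $c$ depending only on $C$ and $a$. Self-adjointness gives $\|(A-z)^{-1}\|=\mathrm{dist}(z,\sigma(A))^{-1}\le \rho^{-1}$ on $\mathcal{C}$. So the single choice $\delta=\min\{a/2,\rho/(2c)\}$ works for every $z\in\mathcal{C}$. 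With it the proof of the lemma gives $\|e^{\delta|\cdot-\gamma_0|}(A-z)^{-1}e^{-\delta|\cdot-\gamma_0|}\|\le 2\rho^{-1}$, uniformly in $z$ and $\gamma_0$. Taking the $(\gamma,\gamma_0)$ matrix element then yields $\|(z-A)^{-1}(\gamma,\gamma_0)\|\le 2\rho^{-1}e^{-\delta|\gamma-\gamma_0|}$.

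Inserting this bound into the contour integral gives $\|f(A)(\gamma,\gamma')\|\le \frac{|\mathcal{C}|}{\pi\rho}\sup_{\mathcal{C}}|f|\; e^{-\delta|\gamma-\gamma'|}$, so $f(A)\in\mathcal{B}^{\exp}(\mathcal{H})$. The same argument also shows that if $f=f_z$ depends on a parameter and is analytic on a common neighbourhood of $\sigma(A)$, with $\sup_{\mathcal{C}}|f_z|$ bounded uniformly in $z$, then the locality constants of $f_z(A)$ are uniform. This is the form in which the corollary is used for $f_z(\Delta_{k,\vep})$ in \eqref{oct11}.

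In some applications, such as $(z-\Pi_{k,\vep})^{-1}$, the operator is not self-adjoint and only a resolvent bound on the contour is available. The same argument still works there: the only input needed from self-adjointness is $\sup_{z\in\mathcal{C}}\|(z-A)^{-1}\|<\infty$, which holds by continuity of the resolvent on the compact set $\mathcal{C}$.
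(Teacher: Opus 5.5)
Your proof is correct and follows the paper's argument. You write $f(A)$ as a contour integral of resolvents, and self-adjointness gives $\|(z-A)^{-1}\|\le\rho^{-1}$ on $\mathcal{C}$, so the decay rate produced in the proof of Lemma~\ref{lem:locality_of_resolvent} can be chosen independently of $z\in\mathcal{C}$; you make explicit the constants that the paper leaves implicit, namely $\delta=\min\{a/2,\rho/(2c)\}$ and the bound $2\rho^{-1}$.

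Your closing remark that only $\sup_{z\in\mathcal{C}}\|(z-A)^{-1}\|<\infty$ is really used is also correct. It is exactly what justifies the non-self-adjoint uses of the corollary, with $f(x)=x/(1+x)$ applied to $A_\vep$ and $A_\eta$ in Lemmas~\ref{lem:P_epsilon_strictly_gapped} and~\ref{lem:Order_P_PI_in_epsilon}. Your example $\Pi_{k,\vep}$ is not one of them, though: it is in fact self-adjoint, because $W_k^*=-W_k$ gives $T(W_k)^*=T(W_k)$.
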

\begin{proof}
    Let $\mathcal{C}\subset \C$ be a closed simple contour that lies in the analyticity domain of $f$ and containing the entire spectrum of $A$.
    %For such a function, the Cauchy formula $f(w)=\frac{1}{2 \pi i} \oint_\mathcal{C} dz \frac{f(z)}{z-w} dz$ holds for all $w$ within $\mathcal{C}$. So we have
    
    The exponential decay rate $\delta(z)>0$ of the matrix elements of $(A-z)^{-1} $ from Lemma \ref{lem:locality_of_resolvent} can be chosen uniformly in the distance between $z\in \mathcal{C}$ and the spectrum of $A$ because $A$ is self-adjoint. This implies that
    $$f(A)=\frac{1}{2 \pi i}\oint_\mathcal{C} f(z)\, (z-A)^{-1}\, dz$$
    has also exponentially localized matrix elements. 
    %If we switch to the hopping matrices, we have that:
    %$$\|f(A)(\gamma,\gamma')\| \le\frac{1}{2 \pi}\oint_\mathcal{C} \|f(z)\|\, \left\|(z-A)^{-1}\right\|\, dz\le  \frac{1}{2\pi} \oint_\mathcal{C} dz \|f(z)\|\left\|e^{-\delta(z)|\gamma -\gamma_0|}\right\|  D_{\delta(z)} N_{\delta(z),z}$$
    %We can choose $0<\delta'<\delta(z)$, $D>D_{\delta(z)}$ and $ N> N_{\delta(z)}$ for all $z \in$ to obtain the final estimate
    %\begin{align*}
     %   \|f(A)(\gamma,\gamma')\| \le \frac{\|\mathcal{C}\| \max_\mathcal{C} |f|}{2\pi} DN  e^{-\delta'|\gamma -\gamma'|}.\tag*{\qedhere}
    %\end{align*}  
\end{proof}

\begin{small}
    \printbibliography
    %\bibliographystyle{plainnat}
    %\bibliography{biblio}
\end{small}

\end{document}